\newcommand{\col}[1]{{\mathrm{color}(#1)}}
\renewcommand{\star}[2]{{\mathrm{St}_#1(#2)}}
\newcommand{\face}[2]{{\Delta_#1(#2)}}
\newcommand{\link}[2]{{\mathrm{Lk}_#1(#2)}}
\newcommand{\bpm}{\begin{pmatrix}}
\newcommand{\epm}{\end{pmatrix}}
\theoremstyle{lemma}
\newtheorem{lemma}{Lemma}
\theoremstyle{theorem}
\newtheorem{theorem}{Theorem}
\theoremstyle{definition}
\theoremstyle{remark}
\theoremstyle{conjecture}
\theoremstyle{observation}
\theoremstyle{definition}
\theoremstyle{corollary}
\theoremstyle{result}
\newtheorem{result}{Result}
\theoremstyle{cond}
\newtheorem{example}{Example}
\begin{document}

\title{Unfolding the color code}

\author{Aleksander Kubica}
\author{Beni Yoshida}
\author{Fernando Pastawski}
\affiliation{Institute for Quantum Information \& Matter and Walter Burke Institute for Theoretical Physics, California Institute of Technology,  Pasadena CA 91125, USA}

\date{\today}

\begin{abstract}
The topological color code and the toric code are two leading candidates for realizing fault-tolerant quantum computation. Here we show that the color code on a $d$-dimensional closed manifold is equivalent to multiple decoupled copies of the $d$-dimensional toric code up to local unitary transformations and adding or removing ancilla qubits. Our result not only generalizes the proven equivalence for $d=2$, but also provides an explicit recipe of how to decouple independent components of the color code, highlighting the importance of colorability in the construction of the code. Moreover, for the $d$-dimensional color code with $d+1$ boundaries of $d+1$ distinct colors, we find that the code is equivalent to multiple copies of the $d$-dimensional toric code which are attached along a $(d-1)$-dimensional boundary. In particular, for $d=2$, we show that the (triangular) color code with boundaries is equivalent to the (folded) toric code with boundaries. We also find that the $d$-dimensional toric code admits logical non-Pauli gates from the $d$-th level of the Clifford hierarchy, and thus saturates the bound by Bravyi and K\"{o}nig. In particular, we show that the $d$-qubit control-$Z$ logical gate can be fault-tolerantly implemented on the stack of $d$ copies of the toric code by a local unitary transformation.
\end{abstract}

\pacs{}
\maketitle

\tableofcontents
\vfill\eject

\section{Introduction}

Quantum error-correcting codes~\cite{Shor96,Preskill98} are vital for fault-tolerant realization of quantum information processing tasks. Of particular importance are topological quantum codes~\cite{Kitaev03, Dennis02} where quantum information is stored in non-local degrees of freedom while the codes are characterized by geometrically local generators. An essential feature of such codes is to admit a fault-tolerant implementation of a universal gate set as this would guarantee that the physical errors propagate in a benign and controlled manner. Thus, the search for novel quantum error-correcting codes and the classification of fault-tolerantly implementable logical gates in these codes have been central problems in quantum information science~\cite{Eastin09, Bravyi13, Paetznick13, Pastawski15, Beverland14}.

The quest of analyzing topological quantum codes is also closely related to the central problem in quantum many-body physics, namely the classification of quantum phases~\cite{Sachdev_Text,Chen10}. A fruitful approach is to view topological quantum codes as exactly solvable toy models which correspond to representatives of gapped quantum phases. This approach has led to a complete classification of translation symmetric two-dimensional stabilizer Hamiltonians~\cite{Beni10b,Bombin14b}, as well as to the discovery of a novel three-dimensional topological phase which does not fit into previously known theoretical framework~\cite{Haah11,Beni13}.

Topological color codes~\cite{Bombin06} are important examples of topological stabilizer codes that admit transversal implementation of a variety of logical gates, which may not be fault-tolerantly implementable in other topological stabilizer codes. 
In two spatial dimensions, the color code admits transversal implementation of all the Clifford logical gates. 
In three and higher dimensions, the color code admits transversal implementation of non-Clifford logical gates~\cite{Bombin09}. 
A naturally arising question is to identify the physical properties allowing to extend the set of transversally implementable logical gates with respect to other topological codes.

Given two codes with different sets of fault-tolerantly implementable logical gates, one may naturally expect that they correspond to different topological phases of matter. However, physical properties of color codes and toric codes are known to be very similar. 
For instance, both of the codes have logical Pauli operators with similar geometric shapes, which leads to essentially identical braiding properties of anyonic excitations from the viewpoint of long-range physics. Furthermore, it has been proven that translation symmetric stabilizer codes, supported on a two-dimensional torus, are equivalent to multiple decoupled copies of the two-dimensional toric code up to local unitary transformations and adding or removing ancilla qubits~\cite{Beni10b}. This result implies that the two-dimensional color code supported on a torus is equivalent to two decoupled copies of the toric code, and thus they belong to the same quantum phase~\cite{Bombin11}. 

However, the aforementioned results do not consider the effect of boundaries on the classification of quantum phases~\cite{Bravyi98,Kitaev12,Beigi11}. In fact, the color code admits transversal implementation of computationally useful logical gates only if it is supported on a system with appropriately designed boundaries. Perhaps, the presence of boundaries may render additional computational power to topological quantum codes and may result in richer structure of topological phases of matter. Complete understanding of the relation between the color code and the toric code will be the necessary first step to clarify the connection between boundaries and achievable fault-tolerant logical gates, and its implication to the classification of quantum phases.

\subsection{Summary of main results}

In this paper, we establish a connection between the color code and the toric code in the presence or absence of boundaries, and study fault-tolerantly implementable logical gates in these two codes. Our first result, presented in Section~\ref{sec:closed}, focuses on the equivalence between the color code and the toric code on $d$-dimensional lattices without boundaries, $d\geq 2$.

\begin{result}[Closed manifold]
\emph{
The topological color code on a $d$-dimensional closed manifold (without boundaries) is equivalent to multiple decoupled copies of the $d$-dimensional toric code up to local unitary transformations and adding or removing ancilla qubits.}
\end{result}

This extends the known results from~\cite{Beni10b,Bombin14b,Bombin11} to the family of color codes in arbitrary dimensions. While previous results are limited to either translation symmetric systems or do not provide an explicit method of transformations, we provide an specific construction of how to decouple the color code defined on an arbitrary $d$-dimensional manifold into multiple decoupled toric code components. The recipe emphasizes the importance of colorability in the construction of the color code. Our result implies that the topological color code and the toric code belong to the same quantum phase according to the definition widely accepted in condensed matter physics community~\cite{Chen10}.

In Section~\ref{sec:open}, we analyze the $d$-dimensional topological color code with boundaries. The second result concerns the equivalence for systems with boundaries.

\begin{result}[Boundaries]
\emph{The $d$-dimensional topological color code on with boundaries is equivalent to $d$ copies of the $d$-dimensional toric code which are attached along a $(d-1)$-dimensional boundary.}
\end{result}

\vspace*{-5pt}
\begin{figure}[h!]
\includegraphics[width=0.70\textwidth]{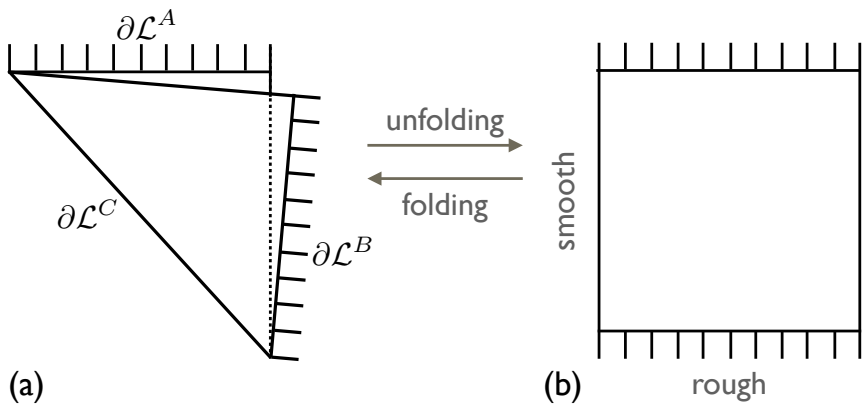}
\vspace*{-5pt}
\caption{The topological color code (a) with three boundaries $\partial\mathcal{L}^A$, $\partial\mathcal{L}^B$ and $\partial\mathcal{L}^C$ viewed as the folded toric code (b) with two smooth and two rough boundaries. The boundary $\partial\mathcal{L}^A$ of color $A$ is equivalent to a pair of boundaries --- smooth in the front and rough in the rear layer; similarly $\partial\mathcal{L}^B$. The boundary $\partial\mathcal{L}^C$ is the fold.}
\label{fig_folding} 
\end{figure} 

In two dimensions, we find that the (triangular) color code with three boundaries is equivalent to the toric code with boundaries (i.e. the surface code) which is folded (see Fig.~\ref{fig_folding}). For $d>2$, we find that the color code with $d+1$ boundaries of $d+1$ distinct colors is equivalent to $d$ copies of the toric code which are attached along a $(d-1)$-dimensional boundary. On this $(d-1)$-dimensional boundary, a composite electric charge composed of all $d$ electric charges from the different copies of the toric codes may condense. Other boundaries are decoupled and allow condensation of a single electric charge associated with a specific copy.

In Section~\ref{sec:gate}, we study non-Clifford logical gates fault-tolerantly implementable in the $d$-dimensional toric code. Our third result concerns the implementability of the $d$-qubit control-$Z$ gate, i.e. a gate which applies $-1$ phase only if all $d$ qubits are in $|1\rangle$ state. 

\begin{result}[Logical gate]
\emph{A stack of $d$ copies of the $d$-dimensional toric code with point-like logical excitations admits fault-tolerant implementation of the logical $d$-qubit control-$Z$ gate by local unitary transformations.}
\end{result}

In particular, we find that transversal application of physical $R_{d}=\mathrm{diag}(1,e^{2\pi i/2^d})$ phase gates on the $d$-dimensional topological color code is equivalent to the logical \mbox{$d$-qubit} \mbox{control-$Z$} gate acting on $d$ copies of the toric code. Note that the $d$-qubit \mbox{control-$Z$} gate belongs to the $d$-th level of the Clifford hierarchy, but is outside of the $(d-1)$-th level. Thus, a stack of $d$ copies of the $d$-dimensional toric code saturates the bound by Bravyi and K\"{o}nig on fault-tolerant logical gates which are implementable by local unitary transformations~\cite{Bravyi13b}. For a definition of the Clifford hierarchy, see~\cite{Gottesman99, Bravyi13b, Pastawski15}\\ 

We believe that our findings will shed light on the techniques of code deformation~\cite{Bombin09} and lattice surgery~\cite{Horsman12,Landahl14}, allowing for computation with less physical qubits, higher fault-tolerant error suppression and shorter time. The ability to transform and relate different codes may turn out to be crucial in analyzing the available methods of computation with topological codes. In particular, we might be able to improve the decoding scheme for the color code proposed in Ref.~\cite{Delfosse2014}, and generalize it to any dimensions. Also, our findings may lead to a systematic method of composing known quantum codes to construct new codes with larger set of fault-tolerant logical gates. Finally, an interesting future problem is to apply the disentangling unitary to the gauge color codes~\cite{Bombin2013,Kubica15}.

While the paper is written in a relatively self-contained manner, we assume some prior exposure to the construction of the topological color code. A pedagogical description of the color code has been given by one of the authors in Ref.~\cite{Kubica15}. Our discussion mostly concerns the $d$-dimensional topological color code and the toric code with point-like excitations as it is the most interesting case from the viewpoint of transversal non-Clifford gates. For the sake of simplicity, we present proof sketches relying on many figures. Rigorous proofs require the language of algebraic topology~\cite{Glaser_Text,Hatcher2002}, which might be technically challenging and could obscure the main ideas presented in the paper. Thus, we postpone them until the Appendix.

\section{Topological color code without boundaries}
\label{sec:closed}

In this section, we show that the $d$-dimensional topological color code supported on a closed manifold is equivalent to multiple decoupled copies of the toric code. 

\subsection{Brief introduction to the color code and the toric code}

We begin by briefly reviewing the construction of the topological color code and the toric code. The starting point to define either the toric code or the color code is a two-dimensional lattice $\mathcal{L}$. We can think of $\mathcal{L}$ as a homogeneous cell $2$-complex, i.e. a collection of vertices $V$, edges $E$ and faces $F$, glued together in a certain way. In general, $\mathcal{L}$ can be defined on a manifold with boundaries, but in this section we restrict our attention to closed manifolds.

The toric code in two dimensions is defined on a lattice $\mathcal{L}$ by placing one qubit on every edge, and associating $X$- and $Z$-type stabilizer generators with vertices and faces of $\mathcal{L}$, namely
\begin{eqnarray}
\forall v\in V: X(v)=\bigotimes_{e\supset v} X(e),\qquad
\forall f\in F: Z(f)=\bigotimes_{e\subset f} Z(e).
\end{eqnarray}
Here, $X(e)$ and $Z(e)$ denote Pauli $X$ and $Z$ operators on the qubit placed on the edge $e$; see Fig.~\ref{fig_toric_color}(a) for an example. We denote such a code, as well as its stabilizer group by $TC(\mathcal{L})$. One can verify that $X$- and $Z$-type stabilizer generators commute.

The color code is defined on a lattice $\mathcal{L}$, which satisfies two additional conditions:
\begin{itemize}
\item valence --- each vertex belongs to exactly three edges,
\item colorability --- there is a coloring\footnote{Note that due to the valence condition, this coloring is unique up to permutation of colors for any connected component of the lattice $\mathcal{L}$.} of faces of $\mathcal{L}$ with three colors, $A$, $B$ and $C$, such that any two adjacent faces have different colors.
\end{itemize}
For instance, the honeycomb lattice satisfies the valence and colorability conditions; also see Fig.~\ref{fig_toric_color}(b). In the case of the color code, we place one qubit at every vertex, and associate $X$- and $Z$-type stabilizer generators with every face of $\mathcal{L}$, namely
\begin{eqnarray}
\forall f\in F: X(f)=\bigotimes_{v\subset f} X(v),\qquad
\forall f\in F: Z(f)=\bigotimes_{v\subset f} Z(v).
\end{eqnarray}
To verify that $X$- and $Z$-type stabilizers commute, one uses the valence and colorability conditions. We denote such a code, as well as its stabilizer group by $CC(\mathcal{L})$. 

We can generalize the definition of the toric code and the color code to $d$ dimensions by considering a $d$-dimensional lattice (i.e. a homogeneous cell $d$-complex) $\mathcal{L}$. There are $d-1$ different ways of defining the toric code on $\mathcal{L}$ --- place qubits on $m$-cells, $m=1,2,\ldots,d-1$, and associate $X$- and $Z$-type stabilizer generators with $(m-1)$- and $(m+1)$-cells, respectively. In the case of the color code, the additional conditions are that $\mathcal{L}$ is $(d+1)$-valent and its $d$-cells are $(d+1)$-colorable. There are $d-1$ ways of defining the color code on $\mathcal{L}$ --- place qubits on vertices, and associate $X$- and $Z$-type stabilizer generators with $m$- and $(d+2-m)$-cells, where $m=2,3,\ldots,d$. For a rigorous definition of the toric code and the color code in $d$ dimensions see the Appendix.

In the main body of the paper, we restrict our attention to the color code and toric code with point-like excitations, which significantly simplifies the discussion. In particular, the color code has $X$- and $Z$-type stabilizers associated with $d$-cells and $2$-cells (faces), whereas the toric code has qubits placed on  edges. We postpone the discussion of the general case until the Appendix.
 
\begin{figure}[h!]
\includegraphics[width=0.65\textwidth]{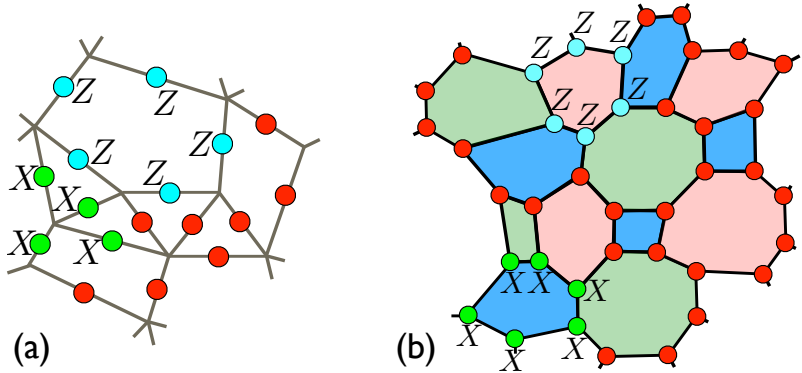}
\caption{(Color online) The toric code and the color code in two dimensions. (a) The toric code has qubits (red dots) placed on edges, and $X$-vertex (green) and $Z$-face (blue) stabilizer generators. (b) The color code has qubits placed on vertices, and $X$-face and $Z$-face stabilizer generators. Note that the color code can only be defined on a $3$-valent and $3$-colorable lattice.}
\label{fig_toric_color} 
\end{figure}

\subsection{Equivalence in two dimensions}

In this subsection, we prove that the two-dimensional color code supported on a closed manifold (without boundaries) is equivalent to two copies of the toric code. 

\begin{theorem}
\label{th:equivin2d}
Let  $CC(\mathcal{L})$  be the two-dimensional topological color code defined on a lattice $\mathcal{L}$ without boundaries, colored in $A$, $B$ and $C$. There exists a local Clifford unitary $U$, and two lattices $\mathcal{L}_{A}$ and $\mathcal{L}_{B}$ obtained from $\mathcal{L}$ by shrinking faces of color $A$ and $B$, respectively, such that
\begin{equation}
U[CC(\mathcal{L})] U^{\dagger}= TC(\mathcal{L}_{A})\otimes TC(\mathcal{L}_{B}).
\end{equation}
Moreover, one can choose $U$ to be
\begin{equation}
U= \bigotimes_{f \in \mathcal{C}} U_{f},
\end{equation}
where $\mathcal{C}$ represents the set of all faces in $\mathcal{L}$ colored with $C$, and $U_{f}$ is a Clifford unitary acting only on qubits of the face $f$. 
\end{theorem}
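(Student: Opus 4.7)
My plan is to exhibit the disentangling Clifford explicitly at each $C$-face, set up a combinatorial qubit bijection, and reduce the stabilizer-group equality to a face-by-face check on generators.

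First I would make the shrinking precise. Contracting each $A$-face of $\mathcal{L}$ to a single point produces a cell complex $\mathcal{L}_A$ whose vertices are the $A$-faces of $\mathcal{L}$, whose $2$-cells are the $B$- and $C$-faces of $\mathcal{L}$, and whose edges are precisely the $BC$-edges of $\mathcal{L}$; a symmetric construction gives $\mathcal{L}_B$. Using the 3-valence and 3-colorability to observe that at each vertex of $\mathcal{L}$ exactly one edge of each color-pair $AB, BC, CA$ meets, Euler's formula on the closed surface gives $|V(\mathcal{L})| = |E(\mathcal{L}_A)| + |E(\mathcal{L}_B)|$, matching the qubit counts on the two sides. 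I set up a qubit bijection inside each $C$-face $f$: label its boundary vertices $v_1,\ldots,v_{2k}$ cyclically so that $(v_{2i-1}v_{2i})$ is a $BC$-edge and $(v_{2i}v_{2i+1})$ is an $AC$-edge, and identify $v_{2i-1}$ with the corresponding $BC$-edge qubit of $TC(\mathcal{L}_A)$ and $v_{2i}$ with the corresponding $AC$-edge qubit of $TC(\mathcal{L}_B)$. Since each $BC$- or $AC$-edge bounds a unique $C$-face, this assignment is globally consistent.

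Second, I would build $U_f$ as a short Clifford circuit on the $2k$ qubits of $f$, driven by the requirement that $U_f$ disentangle the two toric-code halves: $U_f$ must conjugate the color-code face stabilizers $X(f), Z(f)$ into the product of the two toric-code $Z$-face stabilizers of $f$, which live on disjoint qubit sets after Step~1. A natural candidate is the CNOT network $U_f = \prod_{i=1}^{k} \mathrm{CNOT}(v_{2i-1}, v_{2i})$, possibly composed with Hadamard layers that align the $X/Z$ conventions of the two target toric codes; a direct calculation gives $U_f X(f) U_f^\dagger = \prod_i X(v_{2i-1})$ and $U_f Z(f) U_f^\dagger = \prod_i Z(v_{2i})$, exactly the pair of $Z$-face stabilizers of $f$ viewed as a face of $\mathcal{L}_A$ and $\mathcal{L}_B$ (after Hadamards). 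Setting $U = \bigotimes_{f \in \mathcal{C}} U_f$ gives a Clifford of the form required by the theorem.

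Third, I would verify that the remaining color-code generators $X(f_A), Z(f_A), X(f_B), Z(f_B)$ are conjugated into stabilizers of $TC(\mathcal{L}_A) \otimes TC(\mathcal{L}_B)$. Each vertex $v$ of an $A$-face $f_A$ lies in a unique adjacent $C$-face, so the conjugation of $X(f_A)$ splits vertex by vertex into the local action inside the relevant $C$-face. By the pairing fixed in Step~1, the $X$ at $v$ is pushed onto the $BC$-edge qubit of $\mathcal{L}_A$ at $v$, which is incident to the $A$-vertex $v_A \in \mathcal{L}_A$; summing over the boundary of $f_A$ yields exactly the $X$-star operator at $v_A$. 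A symmetric calculation maps $Z(f_A)$ onto the $Z$-face stabilizer of $f_A$ as a face of $\mathcal{L}_B$, and $X(f_B), Z(f_B)$ onto the corresponding $TC(\mathcal{L}_B)$-star and $TC(\mathcal{L}_A)$-plaquette. Since every generator of $CC(\mathcal{L})$ maps to a stabilizer of the target and the two stabilizer groups have the same cardinality (by the dimension count of Step~1), $U [CC(\mathcal{L})] U^\dagger = TC(\mathcal{L}_A) \otimes TC(\mathcal{L}_B)$.

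\emph{Main obstacle.} The principal technical point is the global consistency of the cyclic labellings: the local CNOTs and Hadamards chosen independently at each $C$-face must combine, around the boundary of every $A$-face (and every $B$-face), into exactly one copy of the star operator at the corresponding target vertex, with no stray residue on distant qubits of either toric code. This reduces to a combinatorial compatibility check on the alternation of $BC/AC$ edges around each $C$-face matched against the incidence of neighboring $A$- and $B$-faces, a check that is satisfied once the cyclic labellings are chosen coherently with a global orientation of the closed surface. Carrying out this verification carefully—most naturally presented by analyzing one $A$-face together with its surrounding $C$-faces and then invoking coherence everywhere else—is the bulk of the proof's work.
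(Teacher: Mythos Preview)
Your proposed circuit $U_f=\prod_i \mathrm{CNOT}(v_{2i-1},v_{2i})$ handles the two stabilizers $X(f),Z(f)$ supported entirely on the $C$-face, but it fails on the restrictions of neighboring $A$- and $B$-face stabilizers, which is where the real work lies. Concretely, with your edge coloring the restriction of $X(f_A)$ to $f$ is $X_{2i}X_{2i+1}$ (the two vertices on a shared $AC$-edge); under your CNOTs this becomes $X_{2i}X_{2i+1}X_{2i+2}$, a three-qubit operator straddling both the odd ($\mathcal L_A$) and even ($\mathcal L_B$) sublattices, and no Hadamard layer repairs this since Hadamards preserve single-qubit supports. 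The sentence ``the $X$ at $v$ is pushed onto the $BC$-edge qubit of $\mathcal L_A$ at $v$'' is therefore false for this circuit. The analogous check for $Z(f_A)$ fails in the same way, and the ``main obstacle'' you flag --- a coherent global orientation --- is a red herring: the theorem holds on non-orientable closed surfaces too, so no orientation can be part of the argument.

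What the paper does instead is work not with $X(f),Z(f)$ alone but with the full \emph{overlap group} $\mathcal O_f=\langle Z_jZ_{j+1},\,X_jX_{j+1}\mid j=1,\dots,2n\rangle$, which already contains every restriction to $f$ of a neighboring stabilizer. The required image is prescribed on generators --- in particular $X_jX_{j+1}\mapsto X_{(j-1,j)}X_{(j+1,j+2)}$, sending the two-vertex operator on an $AC$-edge to a pair of operators on the two adjacent $BC$-edges --- and the existence of a Clifford $U_f$ realizing this is proved abstractly: one checks that the source and target groups have the same number of independent generators and isomorphic centers, and that the chosen generating sets have identical commutation relations, so a Clifford intertwiner exists. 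Once $U_f$ is known to act correctly on all of $\mathcal O_f$, your Step~3 becomes an immediate face-by-face readout with no global coherence condition to verify.
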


Here, the tensor product $TC(\mathcal{L}_{A})\otimes TC(\mathcal{L}_{B})$ indicates that the stabilizer group can be factored into two independent stabilizer groups associated with two decoupled copies of the toric code on the lattices $\mathcal{L}_{A}$ and $\mathcal{L}_{B}$. We shall refer to $\mathcal{L}_{A}$ and $\mathcal{L}_{B}$ supporting two decoupled copies of the toric code as \emph{shrunk lattices} (see~Refs.~\cite{Bombin2007,Bombinbook}).

As described in the theorem, the disentangling unitary transformation $U$ has a tensor product structure, $U= \bigotimes_{f \in \mathcal{C}} U_{f}$. Thus, $U$ is a local unitary transformation, and two systems belong to the same quantum phase. 

The procedure of decoupling two copies of the toric code starting from the color code consists of three steps:
\begin{enumerate}
\item performing certain local unitary $U_f$ at each and every face $f$ of color $C$ in $\mathcal{L}$,
\item checking that the stabilizer generators $CC(\mathcal{L})$ are mapped by $U=\bigotimes_{f\in \mathcal{C}} U_f$ into two set of generators $TC(\mathcal{L}_{A})$ and  $TC(\mathcal{L}_{B})$ supported on disjoint sets of qubits,
\item visualizing two codes $TC(\mathcal{L}_{A})$ and  $TC(\mathcal{L}_{B})$ as codes defined on lattices $\mathcal{L}_{A}$ and $\mathcal{L}_{B}$ obtained from $\mathcal{L}$ by local deformations.
\end{enumerate}

\textbf{Step 1:}
Let us pick a face $f$ of $\mathcal{L}$ colored in $C$. Since $\mathcal{L}$ is $3$-colorable and $3$-valent, the face $f$ has even number  of vertices, $2n$. Let us enumerate vertices of $f$ in counter-clockwise order in such a way that the edge $(1,2)$ between vertices $1$ and $2$ has color $AC$. We would like to find a unitary transformation $U_f$ of the Hilbert space $\mathcal{H}_V$ of qubits placed on vertices into the Hilbert space $\mathcal{H}_E$ of qubits placed on edges\footnote{Note that since the number of vertices of $f$ is equal to the number of edges of $f$, then $\mathcal{H}_V\simeq\mathcal{H}_E\simeq (\mathbb{C}^2)^{\otimes 2n}$. Moreover, to perform such a transformation, one does not need any ancilla qubits.} such that some operators on $\mathcal{H}_V$ are mapped into certain operators on $\mathcal{H}_E$.  In particular, we would require the following mappings to hold
\begin{eqnarray}
\label{eq:list1}
Z_{j}Z_{j+1}\qquad &\rightarrow&\qquad Z_{(j,j+1)} \qquad\qquad\qquad\qquad\qquad (j=1,\ldots,2n-1),\\
\left( \prod_{j=1}^{2n} X_j \right)\cdot Z_{2n}Z_{1}\qquad &\rightarrow&\qquad Z_{(2n,1)},\\
X_{j}X_{j+1}\qquad &\rightarrow&\qquad X_{(j-1,j)}X_{(j+1,j+2)}  \qquad\qquad\qquad (j=1,\ldots,2n-2),\\
\left( \prod_{j=1}^{2n} X_j \right)\cdot X_{j}X_{j+1}\qquad &\rightarrow&\qquad X_{(j-1,j)}X_{(j+1,j+2)} \qquad\qquad\qquad (j=2n-1,2n),
\label{eq:list2}
\end{eqnarray}
where $X_{j}$ represents Pauli $X$ operator on a qubit on the vertex $j$, while $X_{(j,j+1)}$ represents Pauli $X$ operator on a qubit on the edge $(j,j+1)$ and $2n+1\equiv 1$; similarly for $Z_{j}$ and $Z_{(j,j+1)}$. The conditions imposed on $U_f$ by Eqs.~(\ref{eq:list1})--(\ref{eq:list2}) for the face $f$ with six vertices are illustrated in Fig.~\ref{fig_2d_transformation}.

\begin{figure}[h!]
\includegraphics[width=0.85\textwidth]{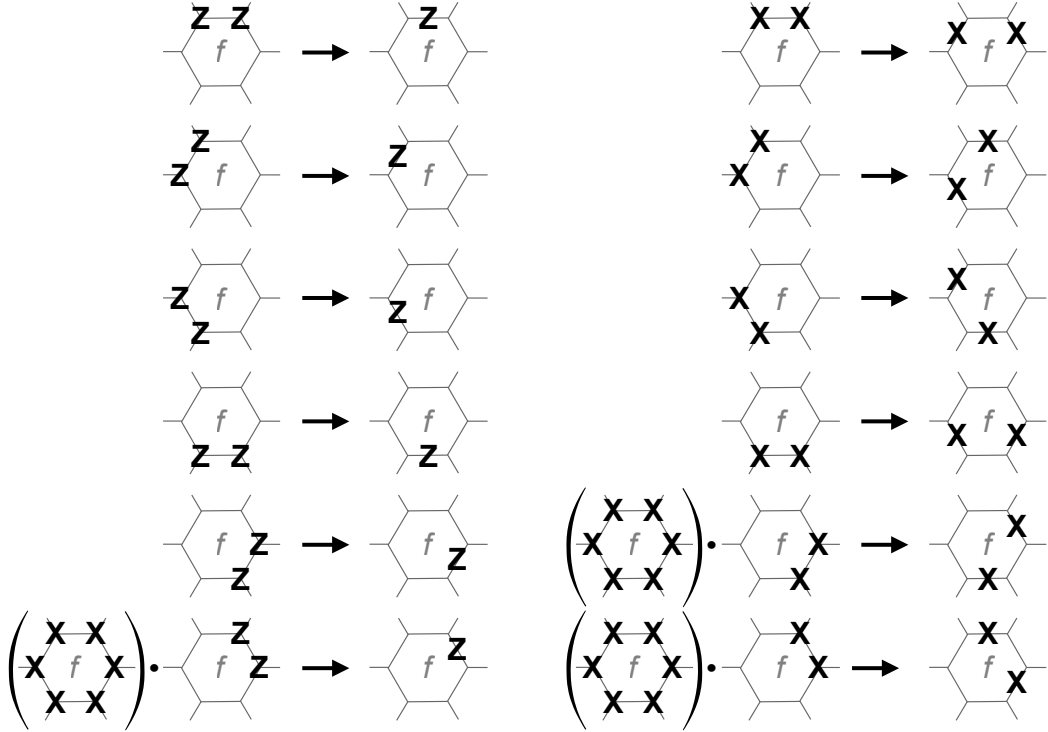}
\caption{Transformation of the operators of the color code $CC(\mathcal{L})$ supported on qubits of the face $f$ colored in $C$ under the disentangling unitary transformation $U_{f}$.}
\label{fig_2d_transformation} 
\end{figure} 

We claim that there exists a Clifford unitary $U_{f}$ which satisfies~Eqs.~(\ref{eq:list1})--(\ref{eq:list2}). The proof of existence of such a unitary transformation is presented later. Note that under the unitary $U_f$ the operators on the qubits on vertices of $f$ (up to the stabilizer $\prod_{j=1}^{2n} X_j$)  transform into the operators on the qubits placed on edges of $f$ in the following way
\begin{align}
\includegraphics[height=0.65in]{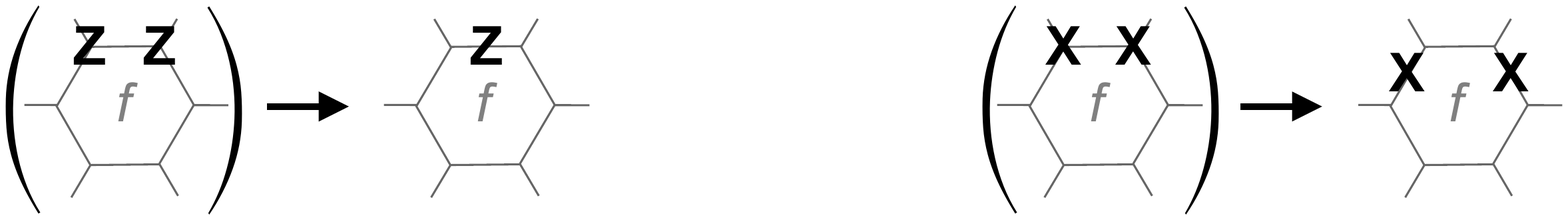} \qquad
\includegraphics[height=0.65in]{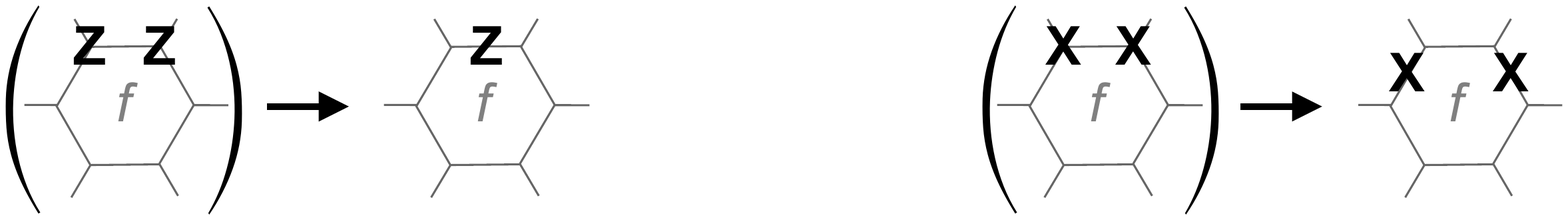}
\end{align}
where parenthesis indicate that operators might be multiplied by the stabilizer $\prod_{j=1}^{2n} X_j$.

\textbf{Step 2:}
Let us analyze what happens to the stabilizer generators $CC(\mathcal{L})$ of the color code after performing Step 1 for each and every face of color $C$, i.e. after action of $U=\bigotimes_{f\in\mathcal{C}} U_f$ by conjugation. Note that the stabilizer group $CC(\mathcal{L})$ does not have a unique representation in terms of its generators --- for instance, $CC(\mathcal{L})$ can be generated by $Y$- and $Z$-type stabilizers associated with every face of $\mathcal{L}$.

For a face $f$ colored in $C$, the unitary $U$ transforms the $Z$-face stabilizer on qubits on vertices into the $Z$-type operator on qubits on edges colored in $AC$. Similarly, the $Y$-face stabilizer on vertices is transformed into the $Z$-type operator on qubits on edges of $f$ colored in $BC$. For a face $f'$ colored in $A$ (respectively $B$), the unitary $U$ transforms the $Z$-face stabilizer on vertices (up to multiplication by $X$-face stabilizers on faces of color $C$ neighboring $f'$ --- this depends on the choice of $U$ in Step 1) into the $Z$-type operator on qubits on edges of $f'$ colored in $AC$ (respectively $BC$). On the other hand, the $X$-face stabilizer is transformed into the $X$-type operator on qubits on edges radiating out of $f'$, which are colored in $BC$ (respectively $AC$).

Fig.~\ref{fig_2d_terms} summarizes how the stabilizers of the color code transform under $U$ described in~Fig. \ref{fig_2d_transformation}. The parenthesis to the left indicate that the stabilizer of the color code might be multiplied by the $X$-face stabilizers on neighboring faces of color $C$, depending on the disentangling procedure, i.e. the choice of $U$.

\begin{figure}[h!]
\includegraphics[width=.75\textwidth]{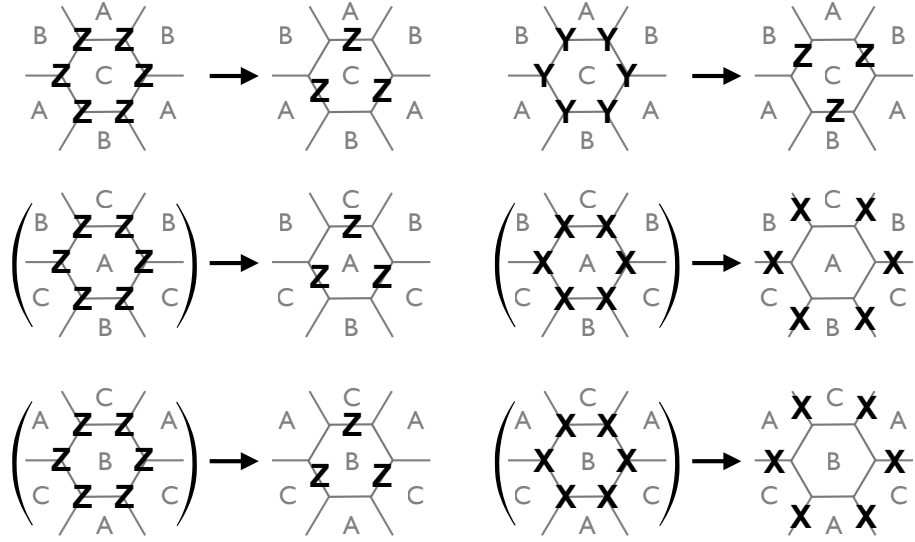}
\caption{The effect of applying the disentangling unitary transformation $U$ to the stabilizer group of the color code $CC(\mathcal{L})$. The parenthesis indicate that the stabilizer of the color code might be multiplied by the $X$-face stabilizers on neighboring faces of color $C$, depending on the disentangling procedure.}
\label{fig_2d_terms} 
\end{figure}

One can observe that after performing $U=\bigotimes_{f\in \mathcal{C}} U_f$, the $Z$-type stabilizers on faces of color $A$ and $C$, as well as the $X$-type stabilizers on faces of color $B$ transform into $Z$- and $X$-type stabilizers, respectively, on qubits on $AC$ edges. Similarly, the $Y$-type stabilizers on faces of color $C$, the $Z$-type stabilizers on faces of color $B$, and the $X$-type stabilizers on faces of color $A$ transform into stabilizers on qubits on $BC$ edges.

We conclude that after performing $U$, the stabilizer generators $CC(\mathcal{L})$ transform into two sets of stabilizer generators $TC(\mathcal{L}_A)$ and $TC(\mathcal{L}_B)$ supported on two disjoint sets of qubits, either placed on $BC$ or $AC$ edges.

\textbf{Step 3:}

We would like to show that the stabilizer generators $TC(\mathcal{L}_A)$ and $TC(\mathcal{L}_B)$ define the toric code on two lattices, $\mathcal{L}_A$ and $\mathcal{L}_B$, obtained from $\mathcal{L}$ by local deformations. A recipe for the shrunk lattice $\mathcal{L}_{A}$ is as follows:
\begin{itemize}
\item Vertices of $\mathcal{L}_{A}$ are centers of $A$ faces in $\mathcal{L}$. 
\item Edges of $\mathcal{L}_{A}$ are $BC$ edges in $\mathcal{L}$.
\item Faces of $\mathcal{L}_{A}$ are $B$ and $C$ faces in $\mathcal{L}$.
\end{itemize}
In short, one obtains $\mathcal{L}_{A}$ by shrinking $A$ faces to points while expanding $B$ and $C$ faces.~\cite{Bombin2007,Bombinbook}. Similarly, $\mathcal{L}_{B}$ is obtained by shrinking $B$ faces. Examples of shrunk lattices are depicted in Fig.~\ref{fig_2d_induced} for the case of the hexagonal lattice $\mathcal{L}$. In this case, one obtains two copies of the toric code supported on triangular lattices.

\begin{figure}[h!]
\includegraphics[width=0.85\textwidth]{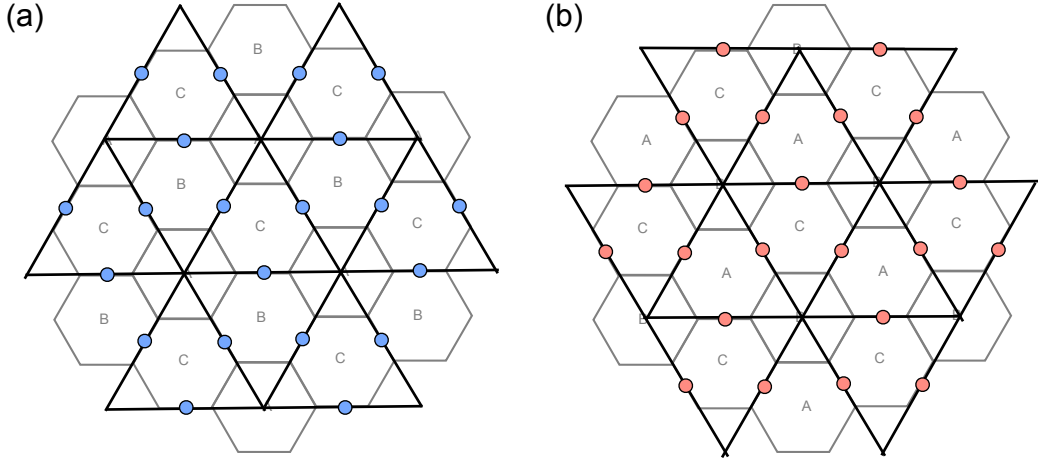}
\caption{(Color online) Fragments of the shrunk lattices: (a) $\mathcal{L}_{A}$ and (b) $\mathcal{L}_{B}$, obtained from $\mathcal{L}$ by shrinking $A$ and $B$ faces, respectively. Qubits are placed on edges, and the stabilizer generators are $X$-vertex and $Z$-face operators.}
\label{fig_2d_induced} 
\end{figure}

The stabilizer generators $TC(\mathcal{L}_A)$ and $TC(\mathcal{L}_B)$ are supported on either $\mathcal{L}_A$ or $\mathcal{L}_B$ lattices. In particular,
\begin{itemize}
\item the $X$-vertex stabilizers in $TC(\mathcal{L}_A)$ (respectively $TC(\mathcal{L}_B)$) are obtained from $X$-face stabilizers\footnote{\label{footnote2} Up to multiplication by $X$-face stabilizers on neighboring faces of color $C$.} of $CC(\mathcal{L})$ on $A$ (respectively $B$) faces,
\item the $Z$-face stabilizers in $TC(\mathcal{L}_A)$ (respectively $TC(\mathcal{L}_B)$) are obtained from $Z$-face stabilizers$^\mathrm{\ref{footnote2}}$ on $B$ faces (respectively $A$) and $Y$-face (respectively $Z$-face) stabilizers on $C$ faces.
\end{itemize}

To summarize, the unitary $U=\bigotimes_{f\in\mathcal{C}} U_f$ transforms the generators of the stabilizer group $CC(\mathcal{L})$ of the color code into two stabilizer groups $TC(\mathcal{L}_{A})$ and $TC(\mathcal{L}_{B})$, which define the toric code on two disjoint lattices $\mathcal{L}_{A}$ and $\mathcal{L}_{B}$ obtained from $\mathcal{L}$ by shrinking either $A$ or $B$ faces. This concludes the proof of the equivalence in two-dimensions.

Note that the equivalence between the two-dimensional color code and copies of the toric code has been proven for systems with translation symmetries~\cite{Beni10b,Bombin11}. Our work not only generalizes the previous results to the color code on an arbitrary lattice $\mathcal{L}$ on a closed manifold, but also presents an explicit construction of the local unitary and shrunk lattices. This leads to new observations for topological color codes with boundaries, which are presented in Section~\ref{sec:boundaries}.

\subsection{Isomorphism between Pauli subgroups}

In this subsection, we prove the existence of the disentangling unitary transformation $U=\bigotimes_{f\in\mathcal{C}} U_f$. We begin by developing some useful technical tool concerning properties of subgroups of the Pauli operator group. Consider a system of $n$ qubits and two subgroups of Pauli operators $\mathcal{O}_{1},\mathcal{O}_{2}\subseteq \mathbf{Pauli}(n)$, where $\mathbf{Pauli}(n)$ is the Pauli operator group on $n$ qubits. We shall neglect complex phases in $\mathcal{O}_{1},\mathcal{O}_{2}$. We say that $\mathcal{O}_{1}$ and $\mathcal{O}_{2}$ are \emph{isomorphic} to each other iff there exists a Clifford unitary transformation $U$ such that
\begin{align}
U\mathcal{O}_{1}U^{\dagger}=\mathcal{O}_{2}. 
\end{align}

Let $Z(\mathcal{O}_{1})$ and $Z(\mathcal{O}_{2})$ be centers of $\mathcal{O}_{1}$ and $\mathcal{O}_{2}$, respectively. Then, the following lemma holds~\cite{Beni10}:
\begin{lemma}[Isomorphic Groups]
\label{lemma:isomorphicgroups}
Two subgroups of Pauli operators $\mathcal{O}_{1},\mathcal{O}_{2}\subset \mathbf{Pauli}(n)$ are isomorphic iff 
\begin{align}
G(\mathcal{O}_{1})=G(\mathcal{O}_{2}),\qquad G(Z(\mathcal{O}_{1}))=G(Z(\mathcal{O}_{2})),
\end{align}
where $G(\mathcal{O})$ represents the number of independent generators of $\mathcal{O}\subset \mathbf{Pauli}(n)$. 
\end{lemma}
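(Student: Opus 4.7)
The forward direction is the easy half: Clifford conjugation is a group automorphism of $\mathbf{Pauli}(n)$ that preserves commutation relations, so it maps independent generating sets to independent generating sets and sends centers to centers. Consequently $G(\mathcal{O}_1)=G(\mathcal{O}_2)$ and $G(Z(\mathcal{O}_1))=G(Z(\mathcal{O}_2))$ are necessary. The plan for the nontrivial converse is to translate everything into symplectic linear algebra over $\mathbb{F}_2$. Concretely, I will use the fact that, modulo phases, $\mathbf{Pauli}(n)\cong \mathbb{F}_2^{2n}$ with the non-degenerate symplectic form $\omega$ given by commutation (two Pauli operators anticommute iff their symplectic product is $1$), and that the Clifford group acts on this quotient as the full symplectic group $\mathrm{Sp}(2n,\mathbb{F}_2)$, with every symplectic transformation admitting a Clifford lift.

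Under this identification, each $\mathcal{O}_i$ corresponds to a subspace $V_i\subseteq \mathbb{F}_2^{2n}$ with $\dim V_i = G(\mathcal{O}_i)$, while $Z(\mathcal{O}_i)$ corresponds exactly to the radical $R_i := V_i\cap V_i^{\perp}$, whose dimension equals $G(Z(\mathcal{O}_i))$. The quotient $V_i/R_i$ inherits a non-degenerate symplectic form, so $\dim V_i - \dim R_i$ is even, say $2k_i$, and admits a Darboux (hyperbolic) basis $\{\bar e_1,\bar f_1,\ldots,\bar e_{k_i},\bar f_{k_i}\}$ with $\omega(\bar e_a,\bar f_b)=\delta_{ab}$. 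Lifting these to vectors $e_a,f_a\in V_i$ and choosing a basis $r_1,\ldots,r_{s_i}$ of $R_i$ gives a canonical internal description of $V_i$ that depends only on the two integer invariants $(\dim V_i,\dim R_i)$.

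The next step is to extend this internal basis to a full symplectic basis of $\mathbb{F}_2^{2n}$. I will pick partners $r_1^*,\ldots,r_{s_i}^*$ dual to the radical generators (so $\omega(r_a,r_b^*)=\delta_{ab}$) and orthogonal to all the previously chosen vectors, and then complete the orthogonal complement of the span of everything chosen so far by an arbitrary Darboux basis. The hypothesis $\dim V_1=\dim V_2$ and $\dim R_1=\dim R_2$ then guarantees that the two full symplectic bases produced from $V_1$ and $V_2$ have matching structure, so the $\mathbb{F}_2$-linear map sending one to the other is symplectic. Lifting it to a Clifford unitary $U$ yields $U\mathcal{O}_1U^{\dagger}=\mathcal{O}_2$ up to the phases we agreed to ignore.

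The main obstacle is the extension step: given an isotropic flag sitting inside $V_i$, one has to find symplectic duals for the radical vectors \emph{outside} $V_i$ without disturbing the hyperbolic pairs already fixed inside $V_i/R_i$. This is a standard Witt-type extension, but it requires a careful inductive choice — at each stage one works in the symplectic orthogonal of everything already specified and uses non-degeneracy of $\omega$ on $\mathbb{F}_2^{2n}$ to produce the required dual vector, then updates by subtracting off components in already-fixed directions to maintain orthogonality. Once this extension lemma is in hand, the rest of the argument is purely a matching of invariants.
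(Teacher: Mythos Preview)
Your argument is correct. The paper does not actually prove this lemma; it is stated with a citation to Ref.~\cite{Beni10} and then used as a black box. Your symplectic linear algebra approach --- pass to $\mathbb{F}_2^{2n}$ with the commutation form, identify the center with the radical $V\cap V^{\perp}$, build a Darboux basis for $V/R$, extend by Witt's lemma to a full symplectic basis, and lift the resulting symplectic map to a Clifford unitary --- is exactly the standard proof, and it is consistent with the canonical-form machinery the paper invokes immediately afterwards in the proof of Lemma~\ref{lemma:Cliffordunitary} (the ``canonical set of independent generators'' there is precisely your hyperbolic-plus-isotropic decomposition). The extension step you flag as the main obstacle is handled cleanly by the inductive argument you sketch, using non-degeneracy of $\omega$ on the ambient space at each stage; there is no gap.
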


Let $\{g_{j}\}$ and $\{h_{j}\}$ be two sets of independent generators for two isomorphic groups $\mathcal{O}_{1}$ and $\mathcal{O}_{2}$. We say that $\{g_{j}\}$ and $\{h_{j}\}$ have the same commutation relations if
\begin{align}
g_{i}g_{j}=(-1)^{c_{i,j}}g_{j}g_{i}, \qquad h_{i}h_{j}=(-1)^{c_{i,j}}h_{j}h_{i} \qquad (c_{i,j}=0,1). 
\end{align}
Note  that $c_{i,j}=c_{j,i}$ and $c_{i,i}=0$. We have the following lemma.

\begin{lemma}[Clifford Transformation]
\label{lemma:Cliffordunitary}
Let $\mathcal{O}_{1}$ and $\mathcal{O}_{2}$ be two isomorphic groups generated by two sets of independent generators, $\{g_{j}\}$ and $\{h_{j}\}$. If $\{g_{j}\}$ and $\{h_{j}\}$ have the same commutation relations, then there exists a Clifford unitary transformation $U$ such that
\begin{equation}
Ug_{j}U^{\dagger}=h_{j} \qquad \forall j.
\end{equation}
\end{lemma}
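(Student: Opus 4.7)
The plan is to work in the symplectic (``check matrix'') representation of the Pauli group. Up to global phases, each element $P \in \mathbf{Pauli}(n)$ corresponds bijectively to a vector $v_P \in \mathbb{F}_2^{2n}$, and the commutation relation $P Q = (-1)^{\omega(v_P,v_Q)} Q P$ is given by the standard symplectic form $\omega$. Under this identification, a subgroup $\mathcal{O} \subset \mathbf{Pauli}(n)$ (quotiented by phases) becomes a linear subspace $V_{\mathcal{O}} \subseteq \mathbb{F}_2^{2n}$, and a set of independent generators $\{g_j\}$ becomes a linear basis $\{v_j\}$ of $V_{\mathcal{O}}$. Moreover, it is a standard fact that the projective Clifford group on $n$ qubits surjects onto the symplectic group $\mathrm{Sp}(2n,\mathbb{F}_2)$, so a Clifford unitary exists realizing any prescribed symplectic transformation.

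With this dictionary in hand, the hypothesis translates as follows: the two bases $\{v_j\}$ and $\{w_j\}$ of the respective subspaces $V_{\mathcal{O}_1}$ and $V_{\mathcal{O}_2}$ satisfy $\omega(v_i,v_j) = c_{i,j} = \omega(w_i,w_j)$ for all $i,j$. Thus the linear map $T : V_{\mathcal{O}_1} \to V_{\mathcal{O}_2}$ defined by $v_j \mapsto w_j$ is an isometry of symplectic subspaces of the ambient non-degenerate symplectic space $(\mathbb{F}_2^{2n}, \omega)$. By the symplectic version of Witt's extension theorem (over $\mathbb{F}_2$), $T$ extends to a global symplectic transformation $S \in \mathrm{Sp}(2n,\mathbb{F}_2)$. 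Lifting $S$ to a Clifford unitary $U_0$ gives $U_0 g_j U_0^\dagger = \pm h_j$ for each $j$.

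It remains to correct the signs. Because $\omega$ is non-degenerate, for any assignment of signs $\epsilon_j \in \{0,1\}$ to the independent generators $\{h_j\}$, one can choose a Pauli operator $P \in \mathbf{Pauli}(n)$ whose symplectic vector satisfies $\omega(v_P, w_j) = \epsilon_j$ for all $j$ (this is a solvable linear system, since $\{w_j\}$ is linearly independent). Setting $U := P U_0$, we obtain $U g_j U^\dagger = h_j$ exactly, and $U$ remains Clifford. This completes the construction.

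The bookkeeping step that requires the most care is the symplectic extension: over $\mathbb{F}_2$ one must verify that the radical of $\omega|_{V_{\mathcal{O}_1}}$ maps correctly onto the radical of $\omega|_{V_{\mathcal{O}_2}}$ and extend on complementary hyperbolic pieces, since the $c_{i,j}$ matrix is allowed to be degenerate (for instance, the generators could include central elements $z \in Z(\mathcal{O}_1)$). The clean way to handle this is to first perform a change of generators putting $c_{i,j}$ into a block normal form --- a direct sum of $2{\times}2$ symplectic blocks $\bigl(\begin{smallmatrix} 0 & 1 \\ 1 & 0 \end{smallmatrix}\bigr)$ and zero blocks corresponding to the center --- which reduces the problem to mapping canonical hyperbolic pairs to canonical hyperbolic pairs and central generators to central generators, where the extension is transparent. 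The hypothesis $G(Z(\mathcal{O}_1)) = G(Z(\mathcal{O}_2))$ from Lemma~\ref{lemma:isomorphicgroups} (implicit in the isomorphism assumption here) is exactly what guarantees that the number of zero blocks matches on both sides, so the normal forms agree and the Witt extension goes through.
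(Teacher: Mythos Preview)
Your proof is correct and, at bottom, follows the same strategy as the paper: pass to the symplectic picture, reduce the commutation data $c_{i,j}$ to a canonical form (hyperbolic pairs plus a radical/center), and then invoke the fact that the Clifford group realises every symplectic transformation. The paper does this more concretely: it writes down an explicit canonical generating set $\{A_i\}$ for $\mathcal{O}_1$ (anticommuting pairs in columns, commuting elements alongside), transports the change-of-basis vectors $\vec{a}^{(i)}$ over to $\mathcal{O}_2$ to obtain canonical generators $\{B_i\}$ with the same structure, and then cites a result from Ref.~\cite{Beni10} guaranteeing a Clifford $U$ with $UA_iU^\dagger=B_i$; from this $Ug_jU^\dagger=h_j$ follows by inverting the change of basis.

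What you do differently is package the extension step as Witt's theorem rather than build the canonical form by hand, and you treat the sign correction explicitly with a Pauli conjugation. The paper, by contrast, declares at the outset that phases are neglected in $\mathcal{O}_1,\mathcal{O}_2$, so its final equality $Ug_jU^\dagger=h_j$ is really modulo global phase; your argument actually delivers exact equality. Your closing paragraph about putting $c_{i,j}$ into block normal form is precisely the paper's construction of $\{A_i\}$, so the two arguments converge there.
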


\begin{proof}
Let us find a canonical set of independent generators for $\mathcal{O}_{1}$:
\begin{align}
\mathcal{O}_{1}=\left\langle
\begin{array}{ccccccccc}
A_{1},  &  \ldots,& A_{n_{1}},  & A_{n_{1}+1} ,& \ldots,& A_{n_{2}}  \\
A_{n_{2}+1},  & \ldots,& A_{n_{1}+n_{2}}
\end{array}
\right\rangle ,
\end{align}
where $n_{2}\geq n_{1}$, and two Pauli operators $A_i$ and $A_j$ commute unless they are in the same column, in which case they anti-commute by definition. Note that any canonical generator can be written as a product of generators $\{ g_{j} \}$.

For a binary vector $\vec{a}=(a_1,\ldots,a_{n_{1}+n_{2}})$, we define
\begin{align}
\mathcal{O}_{1}(\vec{a})=\prod_{j=1}^{n_{1}+n_{2}} g_{j}^{a_{j}},\qquad
\mathcal{O}_{2}(\vec{a})=\prod_{j=1}^{n_{1}+n_{2}} h_{j}^{a_{j}}.
\end{align}
Then, there exists a set of independent $n_{1}+n_{2}$ binary vectors $\vec{a}^{(i)}$ such that
\begin{align}
A_{i}=\mathcal{O}_{1}(\vec{a}^{(i)}). 
\end{align}
Let $B_{j}=\mathcal{O}_{2}(\vec{a}^{(j)})$. Since commutation relations of $\{g_{j}\}$ and $\{h_{j}\}$ are identical, then $B_{j}$ are canonical generators for $\mathcal{O}_{2}$:
\begin{align}
\mathcal{O}_{2}=\left\langle
\begin{array}{ccccccccc}
B_{1},  &  \ldots,& B_{n_{1}},  & B_{n_{1}+1}, &\ldots,& B_{n_{2}}  \\
B_{n_{2}+1},  & \ldots,& B_{n_{1}+n_{2}}
\end{array}
\right\rangle
\end{align}
Then, as shown in Ref.~\cite{Beni10}, there exists a Clifford unitary $U$ such that 
\begin{align}
UA_{j}U^{\dagger}=B_{j}\qquad \forall j\in\{1,\ldots,n_1+n_2 \}.
\end{align}
Such a unitary transformation also satisfies
\begin{align}
Ug_{j}U^{\dagger}=h_{j}\qquad \forall j\in\{ 1,\ldots,n_1+n_2\},
\end{align}
which completes the proof of the (Clifford Transformation) Lemma~\ref{lemma:Cliffordunitary}
\end{proof}

\begin{figure}[h!]
\includegraphics[width=0.35\textwidth]{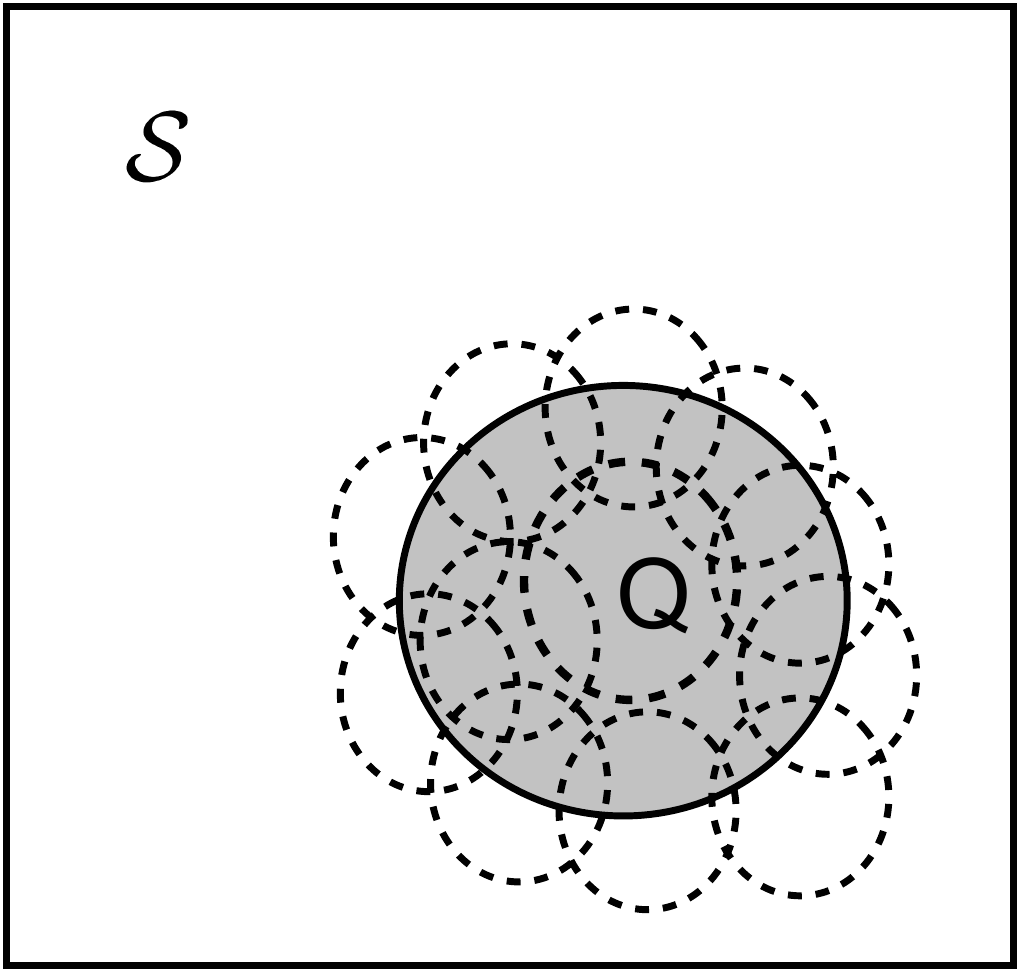}
\caption{The overlap group of the stabilizer group $\mathcal{S}$ on the region $Q$ is defined as the group generated by the restriction of the generators of $\mathcal{S}$ onto $Q$. Dotted circles represent the stabilizer generators of $\mathcal{S}$ with support intersecting $Q$.}
\label{fig_overlap} 
\end{figure} 

We are ready to show the existence of a Clifford unitary $U_f$, which satisfies the rules in Eqs.~(\ref{eq:list1})--(\ref{eq:list2}). First, let us introduce the notion of the overlap group of the stabilizer group \cite{Beni10}. For a given subset of qubits, denoted by $Q$, the overlap group on $Q$ is defined as the group generated by the restriction of generators of the stabilizer group $\mathcal{S}$ onto $Q$. Namely, 
\begin{align}
\mathcal{O}_{Q}= \big\langle u|_{Q}\ \big| u \in \mathcal{S}  \big\rangle,
\end{align}
where $u|_{Q}$ represents a restriction of $u$ onto $Q$ (see Fig.~\ref{fig_overlap}). Note that the overlap group is not necessarily Abelian and is defined up to a global phase.

The key idea in the proof of existence of $U$ is that the overlap groups for the color code and the toric code for the set of $C$ faces are isomorphic. In particular, let us consider a $C$ face $f\in \mathcal{L}$ with $2n$ vertices, and two corresponding faces $f^A \in\mathcal{L}_A$ and $f^B \in\mathcal{L}_B$ derived from $f$. Then, the overlap group of $CC(\mathcal{L})$ on $f$ is generated by
\begin{align}
\mathcal{O}_f=\langle Z_{j}Z_{j+1}, X_{j}X_{j+1}\ | j\in\{1,\ldots,2n\}\rangle,
\end{align}
whereas the overlap group of $TC(\mathcal{L}_A)$ and $TC(\mathcal{L}_B)$ on $f^A \sqcup f^B$ is generated by 
\begin{align}
\mathcal{O}_{f^A \sqcup f^B}=\langle Z_{(j,j+1)}, X_{(j-1,j)}X_{(j+1,j+2)}\ | j\in\{1,\ldots,2n\} \rangle. 
\end{align}
Observe that both $\mathcal{O}_f$ and $\mathcal{O}_{f^A \sqcup f^B}$ have $4n-2$ independent generators and their centers are generated by $2$ independent operators. Namely,
\begin{eqnarray}
G(\mathcal{O}_f) = G(\mathcal{O}_{f^A \sqcup f^B}),\\
G(Z(\mathcal{O}_f))=G(Z(\mathcal{O}_{f^A \sqcup f^B})).
\end{eqnarray}
Using the (Isomorphic Groups) Lemma~\ref{lemma:isomorphicgroups}, we obtain that $\mathcal{O}_f$ and $\mathcal{O}_{f^A \sqcup f^B}$ are isomorphic.

Let us choose a set of independent generators for $\mathcal{O}_{f}$ as follows
\begin{eqnarray}
\label{eq:list3}
g_{j} =& Z_{j}Z_{j+1} \qquad\qquad\qquad &(j=1,\ldots,2n-1),\\
g_{2n} =& \left(\bigotimes_{i=1}^{2n} X_i\right) Z_{2n}Z_{1}\qquad&\\
g_{j+2n} =& X_{j}X_{j+1} \qquad\qquad\qquad &(j=1,\ldots,2n-2).
\end{eqnarray}
We then label a set of independent generators for $\mathcal{O}_{f^A \sqcup f^B}$ in the following way
\begin{eqnarray}
h_{j} =&\ Z_{(j,j+1)}\qquad\qquad\qquad &(j=1,\ldots,2n),\\
h_{j+2n} =&\ X_{(j-1,j)}X_{(j+1,j+2)} \qquad &(j=1,\ldots,2n-2).
\label{eq:list4}
\end{eqnarray}
By direct calculation one can verify that $\{g_{j}\}$ and $\{h_{j}\}$ have the same commutation relations. Thus, from the (Clifford Transformation) Lemma~\ref{lemma:Cliffordunitary}, there exists a Clifford unitary $U_f$ such that
\begin{align}
U_f g_{j}U_f^{\dagger}=h_{j}\qquad \forall j\in\{1,\ldots,4n-2 \}.
\end{align}
Therefore, the local Clifford unitary $U=\bigotimes_{f\in\mathcal{C}} U_f$ transforms $CC(\mathcal{L})$ into $TC(\mathcal{L}_A)\otimes TC(\mathcal{L}_B)$, and this concludes the proof of the Theorem~\ref{th:equivin2d}.

One might find the labelings in Eqs.~(\ref{eq:list3})--(\ref{eq:list4}) arbitrary. Yet, once we have chosen $g_{j}$ for $j=1,\ldots,2n$, it is not difficult to find the right labeling for $j=2n+1,\ldots,4n-2$ by checking the commutation relations. Note that the choice of $g_{2n} = \left(\bigotimes_{j=1}^{2n} X_j \right) Z_{2n}Z_1$ is crucial to ensure that the generators $\{ g_{j} \}_{j=1}^{2n}$ are independent.

\subsection{Three (or more) dimensions}

A similar equivalence between the topological color code and the toric code holds in any dimensions. It can be summarized in the following theorem.
\begin{theorem}[Equivalence]
\label{th:equivalence}
Let $CC(\mathcal{L})$ be the stabilizer group of the topological color code defined on a $d$-dimensional lattice $\mathcal{L}$ without boundaries, which is $(d+1)$-valent and colored with $C_0,\ldots,C_d$. Let $X$- and $Z$-type stabilizer generators be supported on $d$-cells and $2$-cells, where $d\geq 2$. Then, there exists a local Clifford unitary $U$ such that
\begin{equation}
U[CC(\mathcal{L}) \bigotimes \mathcal{S}] U^{\dagger}= \bigotimes_{j=1}^{d} TC(\mathcal{L}_j),
\end{equation}
where $\mathcal{S}$ represents the stabilizer group of decoupled ancilla qubits, and $TC(\mathcal{L}_j)$ -- the stabilizer group of the toric code defined on the shrunk lattice $\mathcal{L}_j$ derived from $\mathcal{L}$ by local deformations, i.e. shrinking $d$-cells of color $C_j$. Moreover, one can choose the disentangling unitary $U$ to be of the form
\begin{equation}
U= \bigotimes_{c \in \mathcal{C}_0} U_{c},
\end{equation}
where $\mathcal{C}_0$ is the set of $d$-cells of color $C_0$ in $\mathcal{L}$, and $U_{c}$ is a Clifford unitary acting only on qubits on vertices of the $d$-cell $c$. 
\end{theorem}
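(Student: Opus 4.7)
My plan is to mimic the three-step argument used to prove Theorem~\ref{th:equivin2d}, playing the role of ``$C$-colored faces'' with ``$C_0$-colored $d$-cells.'' For each $d$-cell $c$ of color $C_0$ I will construct a Clifford unitary $U_c$ acting on the qubits at vertices of $c$ (together with a few ancillary qubits drawn from $\mathcal{S}$ so that the input and output Hilbert spaces have matching dimension), assemble $U=\bigotimes_{c\in\mathcal{C}_0}U_c$, and then check that conjugation by $U$ carries the generators of $CC(\mathcal{L})\otimes\mathcal{S}$ to the generators of $d$ decoupled toric codes, the $j$-th of which is naturally supported on the edges of color $C_0C_j$ in $\mathcal{L}$ and coincides with $TC(\mathcal{L}_j)$ after identifying the shrunk-lattice combinatorics.

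\textbf{Step 1: constructing $U_c$.} The key input is the overlap group machinery together with the (Isomorphic Groups) Lemma~\ref{lemma:isomorphicgroups} and the (Clifford Transformation) Lemma~\ref{lemma:Cliffordunitary}. On the color-code side, the overlap group $\mathcal{O}_c$ on the vertices of $c$ is generated by the $X$-type stabilizer of $c$ itself and by restrictions to $c$ of the $Z$-type $2$-cell stabilizers contained in the boundary of $c$. On the toric-code side, the analogous overlap group on the $C_0C_j$ edges emerging from $c$ (summed over $j=1,\ldots,d$) is generated by single-qubit $Z$'s coming from restrictions of the face $Z$-stabilizers of the various $TC(\mathcal{L}_j)$ and by $X$-pairs coming from restrictions of the vertex $X$-stabilizers of $TC(\mathcal{L}_j)$. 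I would verify that these two overlap groups have the same number of independent generators and the same number of independent generators of their centers; Lemma~\ref{lemma:isomorphicgroups} then yields isomorphism, and choosing matched sets of generators --- generalizing the trick in~(\ref{eq:list3})--(\ref{eq:list4}) of decorating one $Z$-generator by the global $X$-stabilizer of $c$ to preserve independence --- lets Lemma~\ref{lemma:Cliffordunitary} produce an explicit $U_c$.

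\textbf{Steps 2 and 3: global reassembly.} Because $U_c$ is localized on the vertices of $c$ and two distinct $C_0$-cells share only lower-dimensional boundary cells on which the overlap groups coincide, the full unitary $U$ acts consistently on shared $Z$-face generators. Tracking where each color-code generator is sent, exactly as in the 2D argument, produces an image supported on a single color class of edges: the $X$-stabilizers on $C_j$-cells ($j\geq 1$) become the vertex $X$-stabilizers of $TC(\mathcal{L}_j)$, while the $Z$-stabilizers on $2$-cells, together with appropriately decorated (``$Y$-type'') variants on the $2$-cells adjacent to $C_0$-cells, become the face $Z$-stabilizers of the appropriate $TC(\mathcal{L}_j)$ --- with the color label of the $2$-cell determining which copy $j$ the stabilizer lands in. Reading off the combinatorics --- $C_j$-cells become vertices of $\mathcal{L}_j$, $C_0C_j$-edges become edges of $\mathcal{L}_j$, and intersections with the remaining color classes become the higher cells --- identifies each image with $TC(\mathcal{L}_j)$ and finishes the proof.

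\textbf{Main obstacle.} The principal difficulty, relative to the two-dimensional case, is the combinatorial bookkeeping when $d\geq 3$. Unlike in 2D, a single $C_0$-cell $c$ generically has more vertices than it has $C_0C_j$-edges summed over $j$, which is precisely why the ancilla factor $\mathcal{S}$ is needed in the statement. Verifying the numerical equalities $G(\mathcal{O}_c)=G(\mathcal{O}_{c,\mathrm{TC}})$ and $G(Z(\mathcal{O}_c))=G(Z(\mathcal{O}_{c,\mathrm{TC}}))$ in arbitrary dimension, and checking that the locally defined $U_c$'s glue consistently to produce the claimed global rearrangement of stabilizers, is essentially a computation in the cellular (co)homology of the colored boundary decomposition of a $d$-cell. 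This is why a fully rigorous proof is most naturally phrased in algebraic-topology language and is deferred to the Appendix.
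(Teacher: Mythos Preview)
Your overall strategy matches the paper's: build $U_c$ on each $C_0$-cell via the overlap-group machinery (Lemmas~\ref{lemma:isomorphicgroups} and~\ref{lemma:Cliffordunitary}), then check the stabilizers land on the shrunk lattices $\mathcal{L}_j$. However, your description of the overlap groups is incomplete in a way that would block the argument.

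On the color-code side you say $\mathcal{O}_c$ is generated by the $X$-stabilizer of $c$ itself together with restrictions of the $Z$-type $2$-cell stabilizers. This misses the crucial $X$-type generators coming from \emph{neighboring} $d$-cells: each $d$-cell $c'$ of color $C_j$ ($j\neq 0$) meets $c$ in a $(d-1)$-face, and the restriction of $X(c')$ to the vertices of $c$ is an $X$-operator supported on that $(d-1)$-face. These $X$-face operators (one per $(d-1)$-face of $c$) are the genuine $X$-type generators of $\mathcal{O}_{CC}$; the single $X$-stabilizer of $c$ is merely their product. With only one $X$-generator your group would be abelian, so the center count required for Lemma~\ref{lemma:isomorphicgroups} would fail. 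Correspondingly, on the toric-code side the $X$-type generators of $\mathcal{O}_{TC}$ are not ``$X$-pairs'' in $d\geq 3$: they are $X$-operators on the edges of $c$ radiating out of each $(d-1)$-face, again one per $(d-1)$-face. Matching these two families (and mapping the global $X(c)$ to $\bigotimes_{e}Z(e)$) is what makes the generator/center counts come out equal; the verification uses the Euler relation $V-E+F=2$ on $\partial c$ in $d=3$ and its higher-dimensional analogues in general.

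Two smaller corrections. First, the ancilla count goes the other way: since $\mathcal{L}$ is $(d+1)$-valent, a $d$-cell with $V$ vertices has $E=dV/2$ edges, so for $d\geq 3$ there are \emph{more} edges than vertices and the ancillas are added on the color-code (vertex) side. Second, there is no gluing issue to worry about: two distinct $C_0$-cells share no vertices (each vertex lies in exactly one $d$-cell of each color), so the $U_c$ act on disjoint qubit sets and $U=\bigotimes_c U_c$ is an honest tensor product.
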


Note that the color code qubits are placed on vertices, whereas the toric code qubits --- on edges. Thus, for every $d$-cell $c$ colored in $C_0$, we shall add $E-V$ ancilla qubits, where $V$ and $E$ denote the number of vertices and edges in $c$. We can assume that ancilla qubits are stabilized by single-qubit Pauli $Z$ operators. Since the lattice $\mathcal{L}$ is $(d+1)$-valent, then $E=dV/2$ and $E-V\geq 0$ for $d\geq 2$. In particular, ancilla qubits are required for the three- or higher-dimensional case.

Since the color code and the toric code in $d$ dimensions support anyonic excitations whose braiding properties are similar (there exists an isomorphism between anyon labels for two codes), the equivalence should not be very surprising. Yet, our result may be of interest from the viewpoint of finding topological invariants to classify topological phases. It has been argued that two topologically ordered systems with isomorphic anyon labels and modular matrices belong to the same topological phase~\cite{Chen10,Hastings05,Levin05}. 
This hypothesis has been proven for two-dimensional stabilizer Hamiltonians with translation symmetries~\cite{Beni10b}. Also, this hypothesis has been tested for the two-dimensional Levin-Wen model in Ref.~\cite{Kitaev12}, where a construction of a transparent domain wall between two Levin-Wen models (with tensor unitary categories satisfying certain equivalence conditions) was presented.

The idea of the mapping is a straightforward generalization of the proof of the Theorem~\ref{th:equivin2d} presented in Section~\ref{sec:closed}B. First, we perform a local Clifford unitary, whose existence is guaranteed by the (Clifford Transformation) Lemma~\ref{lemma:Cliffordunitary}. Then, we analyze how the stabilizer generators of the color code transform under such a unitary. Finally, we check that the stabilizers can be split into $d$ sets, each of them defining a copy of the toric code on a lattice obtained by deforming the initial lattice $\mathcal{L}$. For the sake of clarity, we focus on $d=3$. We also first present the construction of shrunk lattices, before explaining how to construct a local Clifford unitary transforming the color code into $d$ decoupled copies of the toric code.

In three dimensions, the lattice $\mathcal{L}$ has volumes colored with four colors, $A$, $B$, $C$ and $D$. Recall that we can assign colors to faces and edges, too. Namely, a face has two colors of two volumes it belongs to, whereas an edge has three colors (of three volumes it belongs to). We obtain three shrunk lattices, $\mathcal{L}_A$, $\mathcal{L}_B$ and $\mathcal{L}_C$  by shrinking volumes of color $A$, $B$ and $C$, respectively. In particular, $\mathcal{L}_A$ consists of
\begin{itemize}
\item vertices --- centers of $A$ volumes in $\mathcal{L}$, 
\item edges --- $BCD$ edges in $\mathcal{L}$,
\item faces --- $BC$, $BD$ and $CD$ faces in $\mathcal{L}$,
\item volumes --- $B$, $C$ and $D$ volumes in $\mathcal{L}$.
\end{itemize}
For an example, see Fig.~\ref{fig_3d_surface}. Similarly for other shrunk lattices $\mathcal{L}_B$ and $\mathcal{L}_C$. In general, a $d$-dimensional lattice $\mathcal{L}$ is colored with $d+1$ colors, $C_0,\ldots,C_d$, and one obtains the shrunk lattice $\mathcal{L}_i$, where $i=1,\ldots,d$, by shrinking $d$-cells of color $C_i$. Namely, $\mathcal{L}_i$ consists of
\begin{itemize}
\item vertices --- centers of $d$-cells in $\mathcal{L}$ of color $C_i$,
\item edges --- edges in $\mathcal{L}$ of color $\{C_0,\ldots,C_d\}\setminus \{ C_i\}$,
\item faces --- faces in $\mathcal{L}$ of color $\{C_0,\ldots,C_d\}\setminus \{ C_i,C_j\}$ for all $j\neq i$.
\end{itemize}

\begin{figure}[h!]
\includegraphics[width=0.75\textwidth]{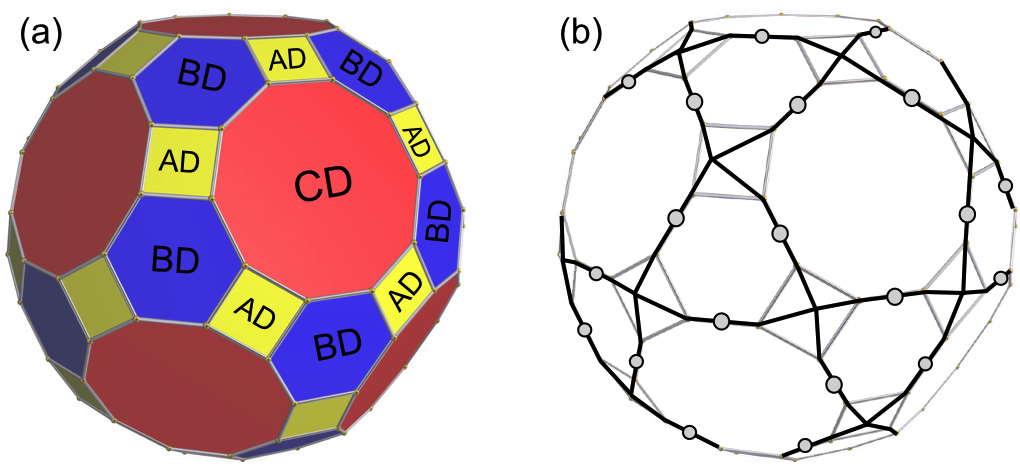}
\caption{(Color online) (a) The boundary $\partial c$ of a volume $c$ of color $D$ in the lattice $\mathcal{L}$. Note that $\partial c$ can be viewed as a $3$-colorable and $3$-valent lattice on a closed manifold (a sphere), with faces colored in $AD$, $BD$ and $CD$. (b) A volume in the shrunk lattice $\mathcal{L}_A$ derived from $c$ after shrinking volumes of color $A$. Note that qubits are placed on (a) vertices and (b) edges. The figures were created using Robert Webb's Stella software (http://www.software3d.com/Stella.php).}
\label{fig_3d_surface} 
\end{figure}

We construct the disentangling unitary $U$ as a tensor product of local Clifford unitaries, $U=\bigotimes_{c\in\mathcal{D}} U_c$, where $\mathcal{D}$ is the set of all volumes of color $D$. Let us consider a volume $c$ of color $D$. The overlap group $\mathcal{O}_{CC}$ of the stabilizer group of the color code on $c$ is generated by $Z$-edge operators and $X$-face operators, for each and every edge and face belonging to $c$. Namely, 
\begin{equation}
\mathcal{O}_{CC} = \left\langle \raisebox{-15pt}{\includegraphics[height=0.5in]{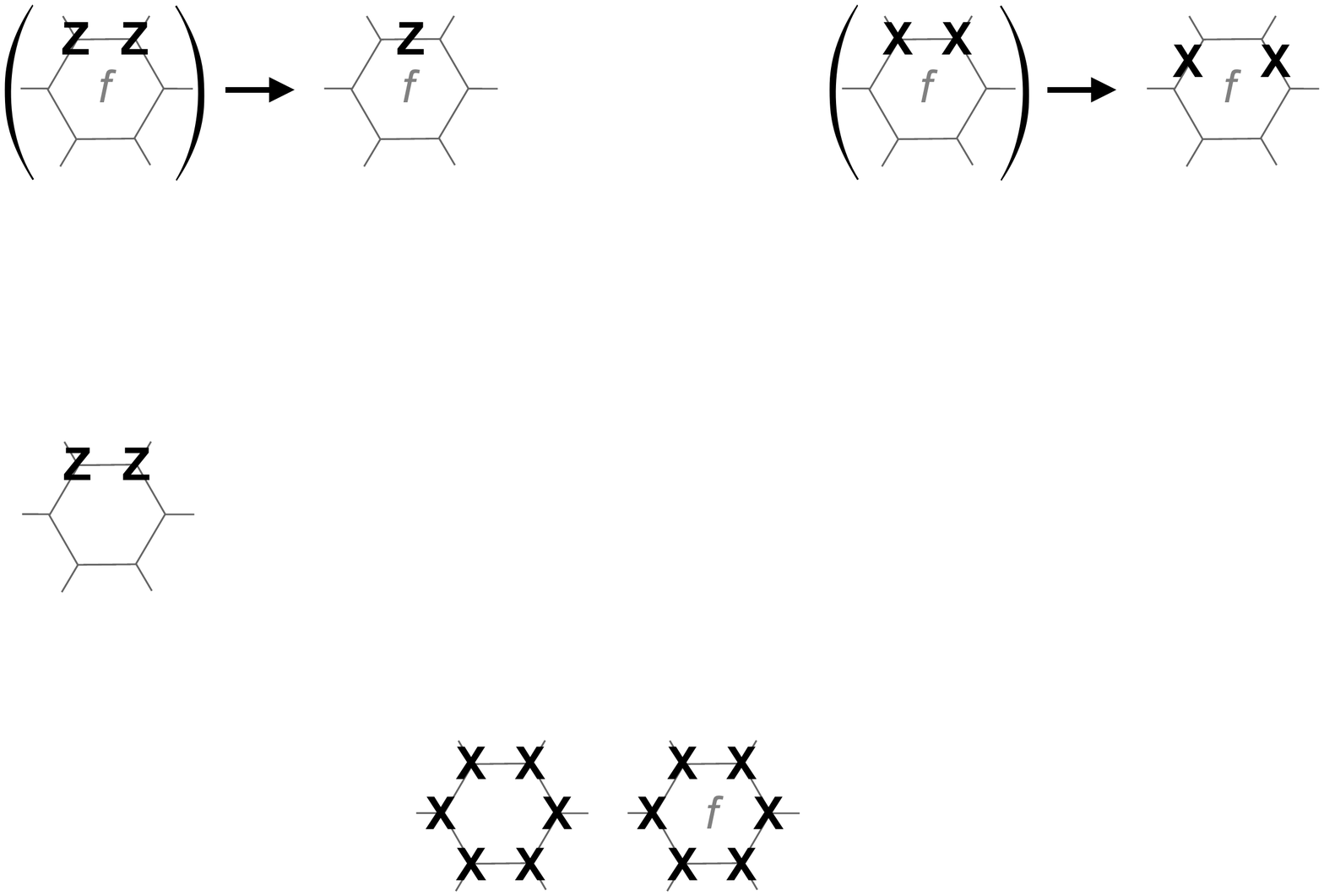}} , \raisebox{-15pt}{\includegraphics[height=0.5in]{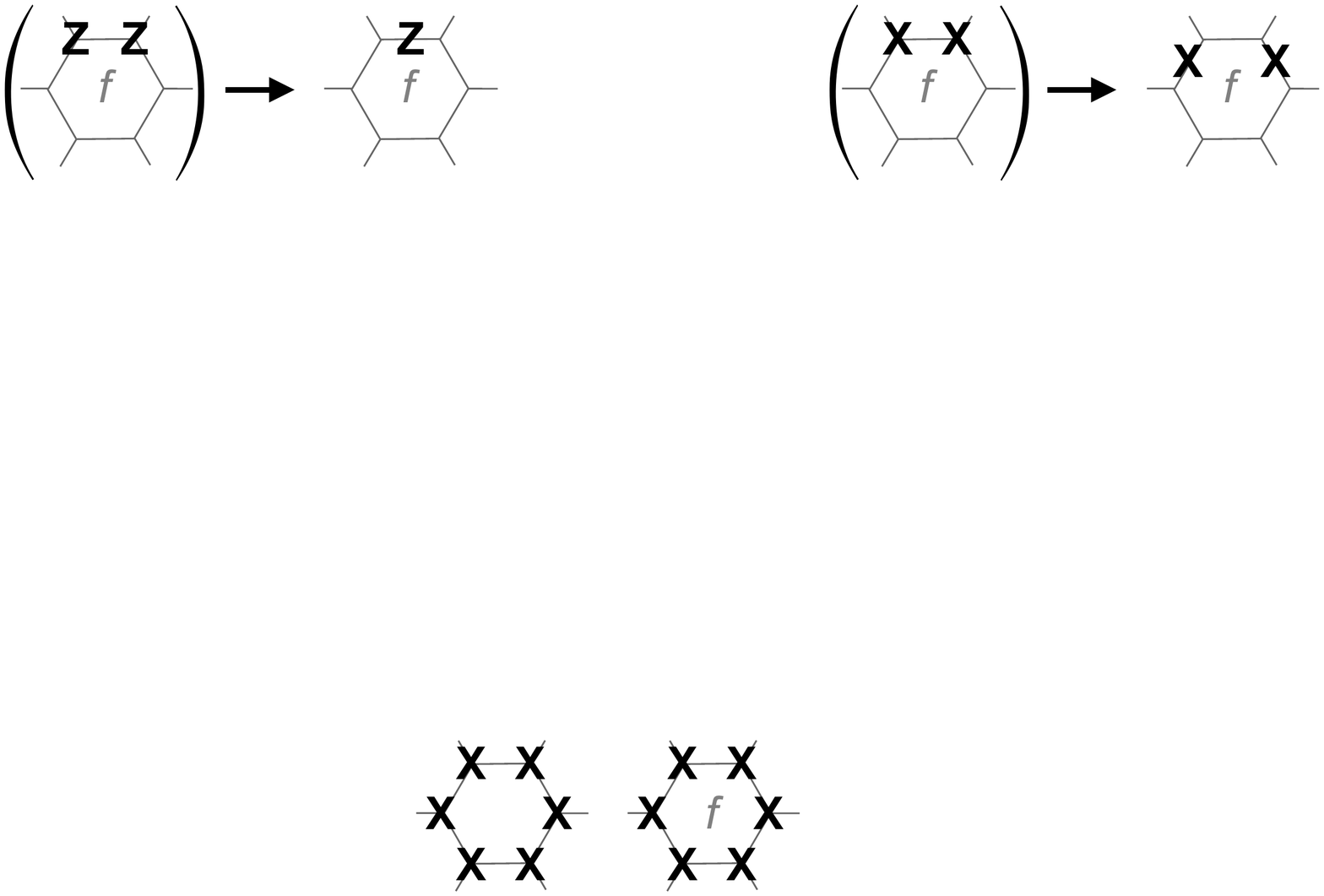}} \right\rangle.
\end{equation}
Let $\mathcal{H}_V\simeq (\mathbb{C}^2)^{\otimes V}$ and $\mathcal{H}_E\simeq (\mathbb{C}^2)^{\otimes E}$ be the Hilbert spaces of qubits placed on vertices and edges, respectively. Since $E-V>0$, we need to add $E-V$ ancilla qubits to qubits on vertices to match the dimensionality of Hilbert spaces, $\mathcal{H}_V \otimes\mathcal{H}_{ancilla}\simeq \mathcal{H}_E$, where $\mathcal{H}_{ancilla}$ is the Hilbert space of ancilla qubits. Let $\mathcal{S}_c = \langle Z_i|\ \forall i\in\{1,\ldots,E-V\}\rangle$ be the stabilizer group of the ancilla qubits, where $Z_i$ is the Pauli $Z$ operator acting on the ancilla qubit $i$. We would like to construct a Clifford unitary $U_c$ which maps the group $\mathcal{O}_{CC}\otimes\mathcal{S}_c$ of operators on the Hilbert space $\mathcal{H}_V\otimes \mathcal{H}_{ancilla}$ into the group 
\begin{equation}
\mathcal{O}_{TC} = \left\langle \raisebox{-15pt}{\includegraphics[height=0.5in]{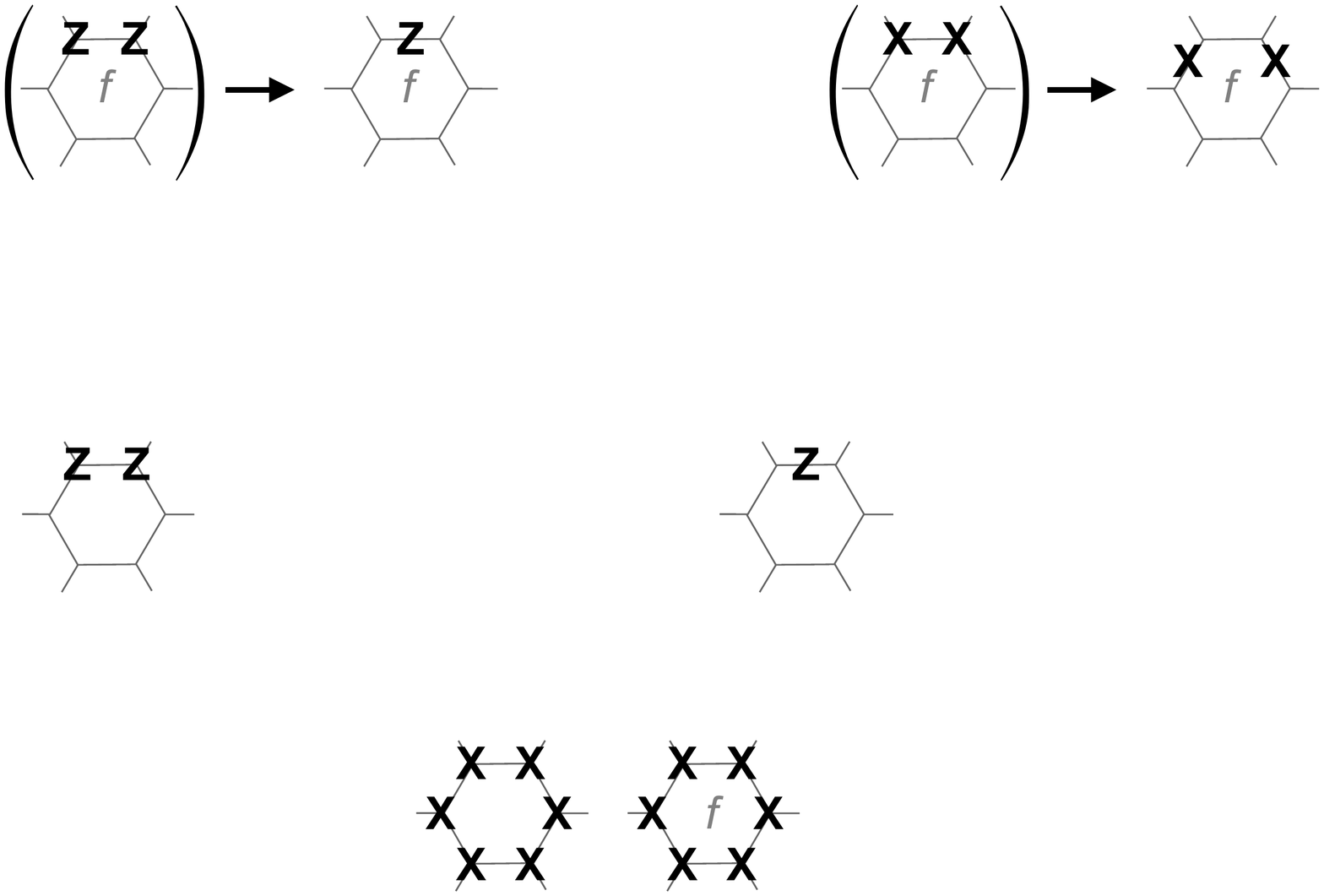}} , \raisebox{-15pt}{\includegraphics[height=0.5in]{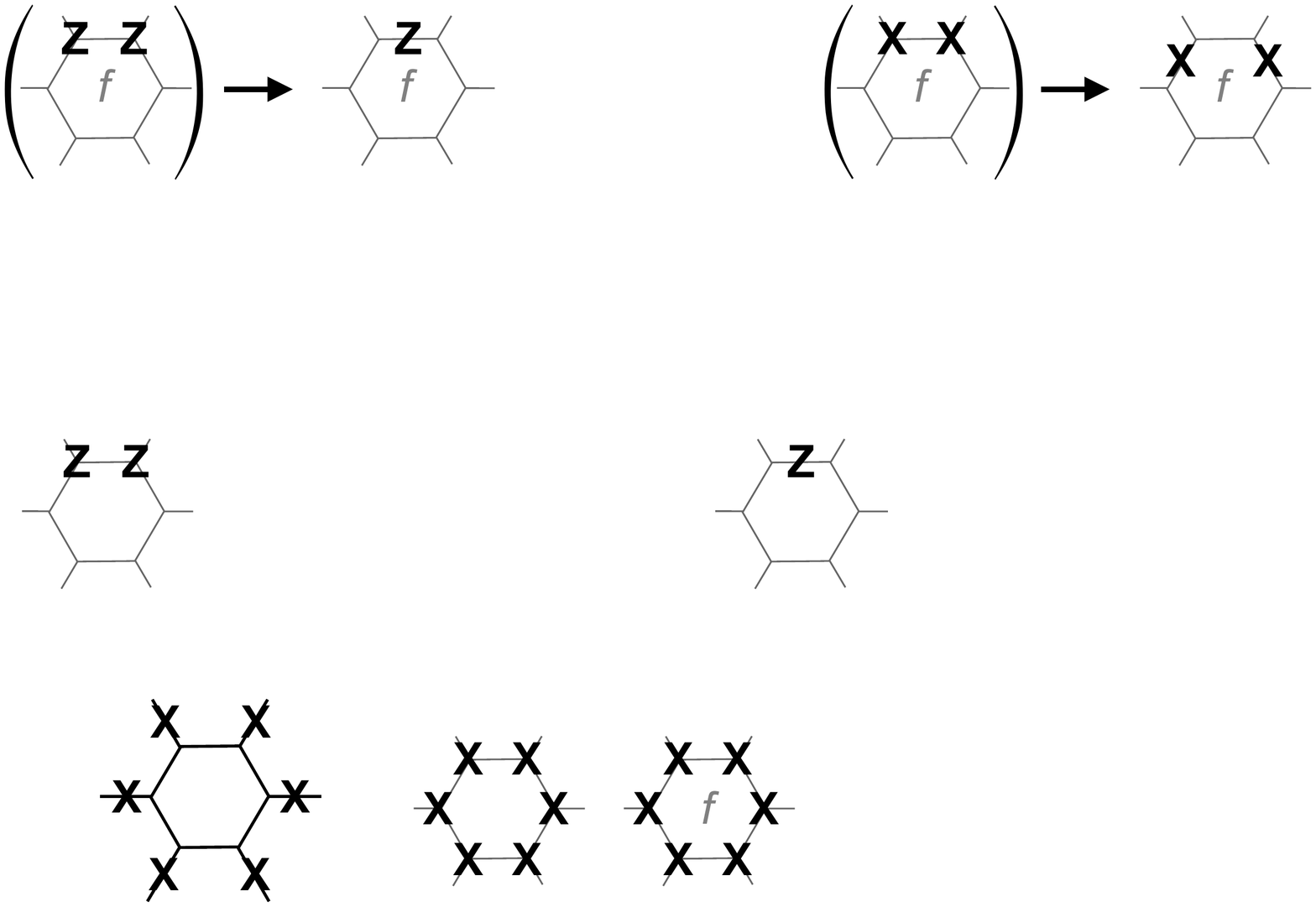}}\right\rangle
\end{equation}
of operators on $\mathcal{H}_E$ according to the rules
\begin{equation}
\label{eq:ruleforUf}
\left( \raisebox{-15pt}{\includegraphics[height=0.5in]{eq_Zedge.pdf}}  \right) \rightarrow 
\raisebox{-15pt}{ \includegraphics[height=0.5in]{eq_Zedge_t.pdf}}, \qquad 
\left( \raisebox{-15pt}{\includegraphics[height=0.5in]{eq_plaquette.pdf}} \right) \rightarrow 
\raisebox{-15pt}{ \includegraphics[height=0.5in]{eq_plaquette_t.pdf}} .
\end{equation}
The parenthesis indicate that the mapping holds up to multiplication by the elements of the center $Z(\mathcal{O}_{CC}\otimes \mathcal{S}_c)$.

Let us analyze what happens to the stabilizer group of the color code and and the stabilizer group of ancilla qubits, $CC(\mathcal{L})\bigotimes_{c\in\mathcal{D}} \mathcal{S}_c$, after applying the unitary $U=\bigotimes_{c\in\mathcal{D}} U_c$. One can verify that
\begin{itemize}
\item $X$-vertex stabilizers of $TC(\mathcal{L}_A)$, $TC(\mathcal{L}_B)$ and $TC(\mathcal{L}_C)$ are obtained from $X$-volume stabilizers\footnote{\label{footnote} Up to multiplication by elements of the center $Z(\mathcal{O}_{CC}\otimes\mathcal{S}_c)$ for any neighboring volume $c$ of color $D$.} in $CC(\mathcal{L})$ of color $A$, $B$ and $C$, respectively,
\item $Z$-face stabilizers in $TC(\mathcal{L}_A)$ are obtained from $Z$-face stabilizers$^\mathrm{\ref{footnote}}$ of color $BD$, $CD$ and $BC$; similarly for $TC(\mathcal{L}_B)$ and $TC(\mathcal{L}_C)$,
\item the elements in the center $Z(\mathcal{O}_{CC}\otimes\mathcal{S}_c)$ are mapped into the center $Z(\mathcal{O}_{TC})$.
\end{itemize}
Moreover, the generators of the group $U\left(CC(\mathcal{L})\bigotimes_{c\in\mathcal{D}} \mathcal{S}_c\right ) U^\dag$ are supported on either $\mathcal{L}_A$, or $\mathcal{L}_B$, or $\mathcal{L}_C$, and thus one obtains three decoupled copies of the toric code.

The last thing we need to justify is the existence of $U_c$ consistent with the rules in Eq.~(\ref{eq:ruleforUf}). We start with showing that $\mathcal{O}_{CC}$ and $\mathcal{O}_{TC}$ are isomorphic. Clearly, $\mathcal{O}_{CC}, \mathcal{O}_{TC}\subset \mathbf{Pauli}(n=E)$. First, let us look at the independent generators of $\mathcal{O}_{CC}$. Note that there are $V-1$ independent operators of type $\raisebox{-15pt}{\includegraphics[height=0.5in]{eq_Zedge.pdf}}$, denoted by  $\{g_i\}_{i=1}^{V-1}$, supported on edges of a spanning tree $T\subset E$ of the graph $G=(V,E)$. In the case of operators of type $\raisebox{-15pt}{\includegraphics[height=0.5in]{eq_plaquette.pdf}}$, there are exactly two independent relations between them, namely
\begin{equation}
\prod_{f\in \mathcal{AD}} \raisebox{-15pt}{\includegraphics[height=0.5in]{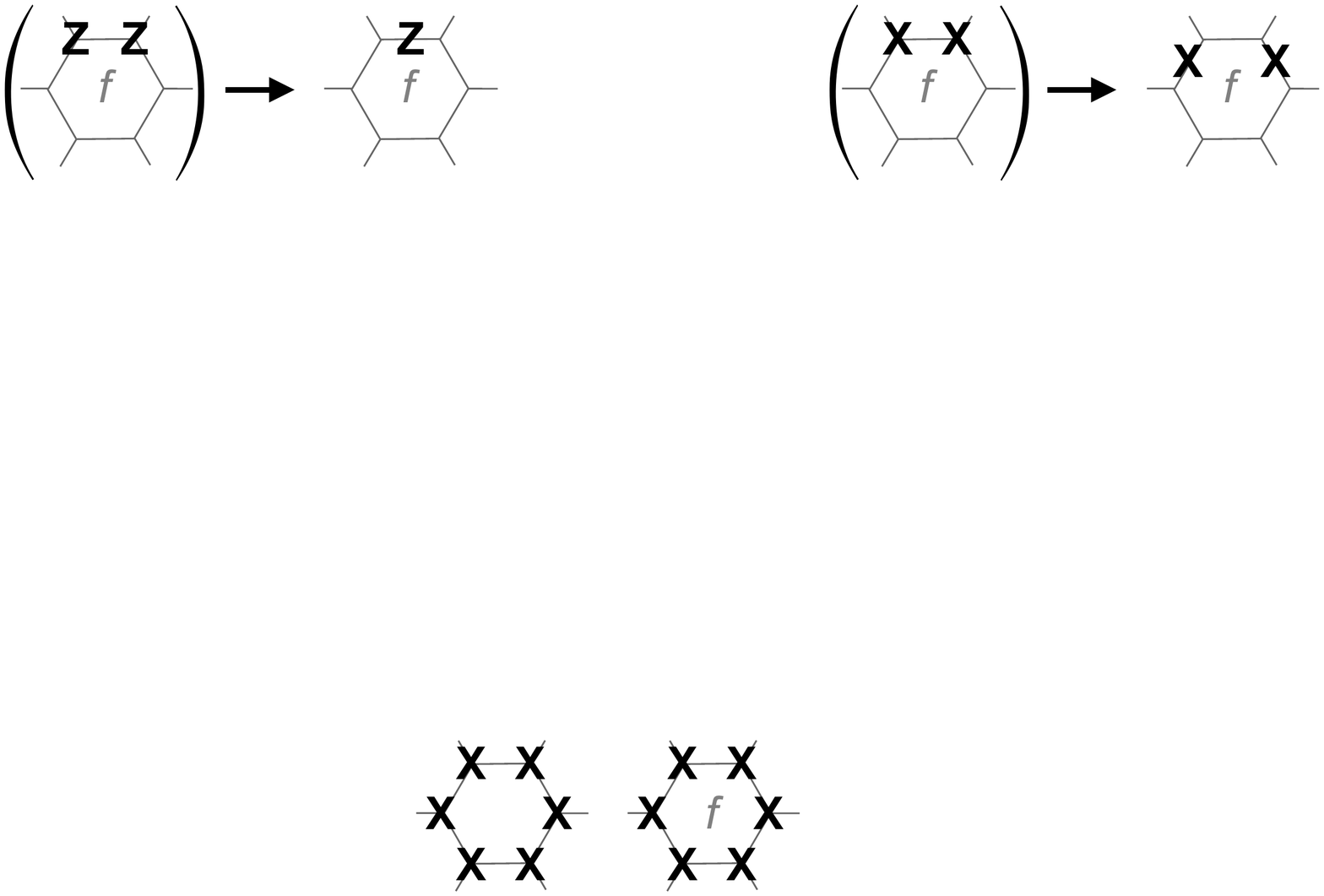} }= \prod_{f\in \mathcal{BD}} \raisebox{-15pt}{\includegraphics[height=0.5in]{eq_plaquettef.pdf}} = \prod_{f\in \mathcal{CD}} \raisebox{-15pt}{\includegraphics[height=0.5in]{eq_plaquettef.pdf}},
\end{equation}
where the products are taken over all $X$-face operators associated with faces of $c$ of color $AD$, $BD$, and $CD$, respectively. Thus, there are $F-2$ independent $X$-face operators. We set $F-3$ generators $\{g_i\}_{i=V}^{V+F-4}$ to be $X$-face operators, associated with all faces of $c$ but three --- one of each color $AD$, $BD$ and $CD$. We also set $g_{V+F-3}= \bigotimes_{v\in V} X(v)$, where $\bigotimes_{v\in V} X_v$ is the $X$-volume operator on $c$. Including $E-V$ single qubit Pauli $Z$ stabilizer generators $\{g_i\}_{i=V+F-2}^{E+F-3}$ for ancilla qubits, there are
\begin{equation}
(V-1)+(F-2)+(E-V)= E+F-3
\end{equation}
independent generators of $\mathcal{O}_{CC}$, and thus $G(\mathcal{O}_{CC})=E+F-3$. Note that since
\begin{equation}
Z(\mathcal{O}_{CC}) = \left\langle \raisebox{-15pt}{\includegraphics[height=0.5in]{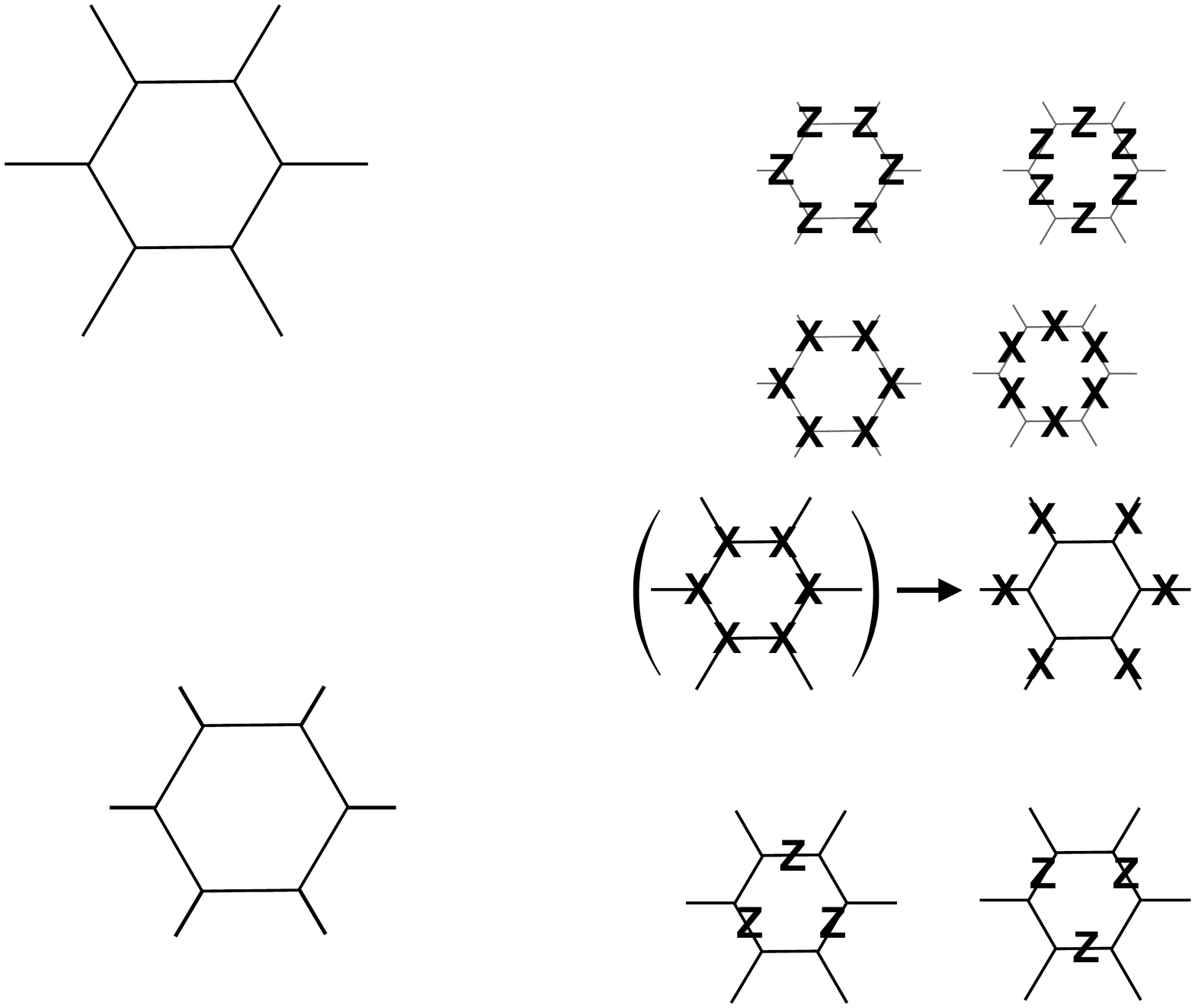}}, \bigotimes_{v\in V} X(v), Z_i \right\rangle,
\end{equation}
then $G(Z(\mathcal{O}_{CC}))= (F-2) +1 + (E-V)$.

In the case of $\mathcal{O}_{TC}$, there are $E$ independent generators of type $\raisebox{-15pt}{\includegraphics[height=0.5in]{eq_Zedge_t.pdf}}$. Observe that there are only three independent relations between generators of type $\raisebox{-15pt}{\includegraphics[height=0.5in]{eq_plaquette_t.pdf}}$, namely
\begin{equation}
\prod_{f\in \mathcal{AD}} \raisebox{-15pt}{\includegraphics[height=0.5in]{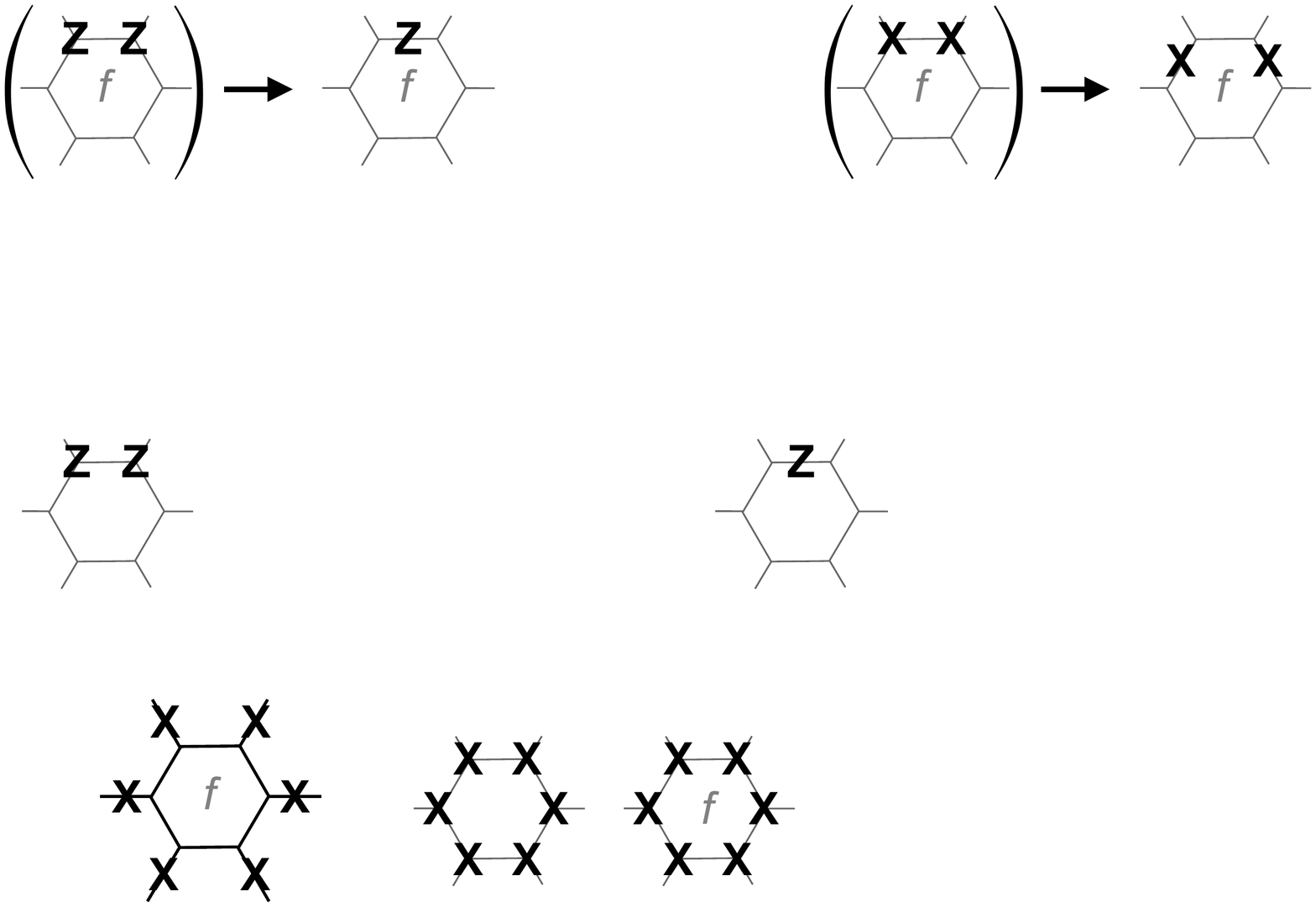} }= \prod_{f\in \mathcal{BD}} \raisebox{-15pt}{\includegraphics[height=0.5in]{eq_plaquette_tf.pdf}} = \prod_{f\in \mathcal{CD}} \raisebox{-15pt}{\includegraphics[height=0.5in]{eq_plaquette_tf.pdf}}=I,
\end{equation}
and thus $G(\mathcal{O}_{TC})=E+F-3$. Since the group $\mathcal{O}_{TC}$ has single qubit Pauli Z operators as generators, the center $Z(\mathcal{O}_{TC})$ can only be generated by $Z$-type operators,
\begin{equation}
Z(\mathcal{O}_{TC}) = \left\langle \raisebox{-15pt}{\includegraphics[height=0.5in]{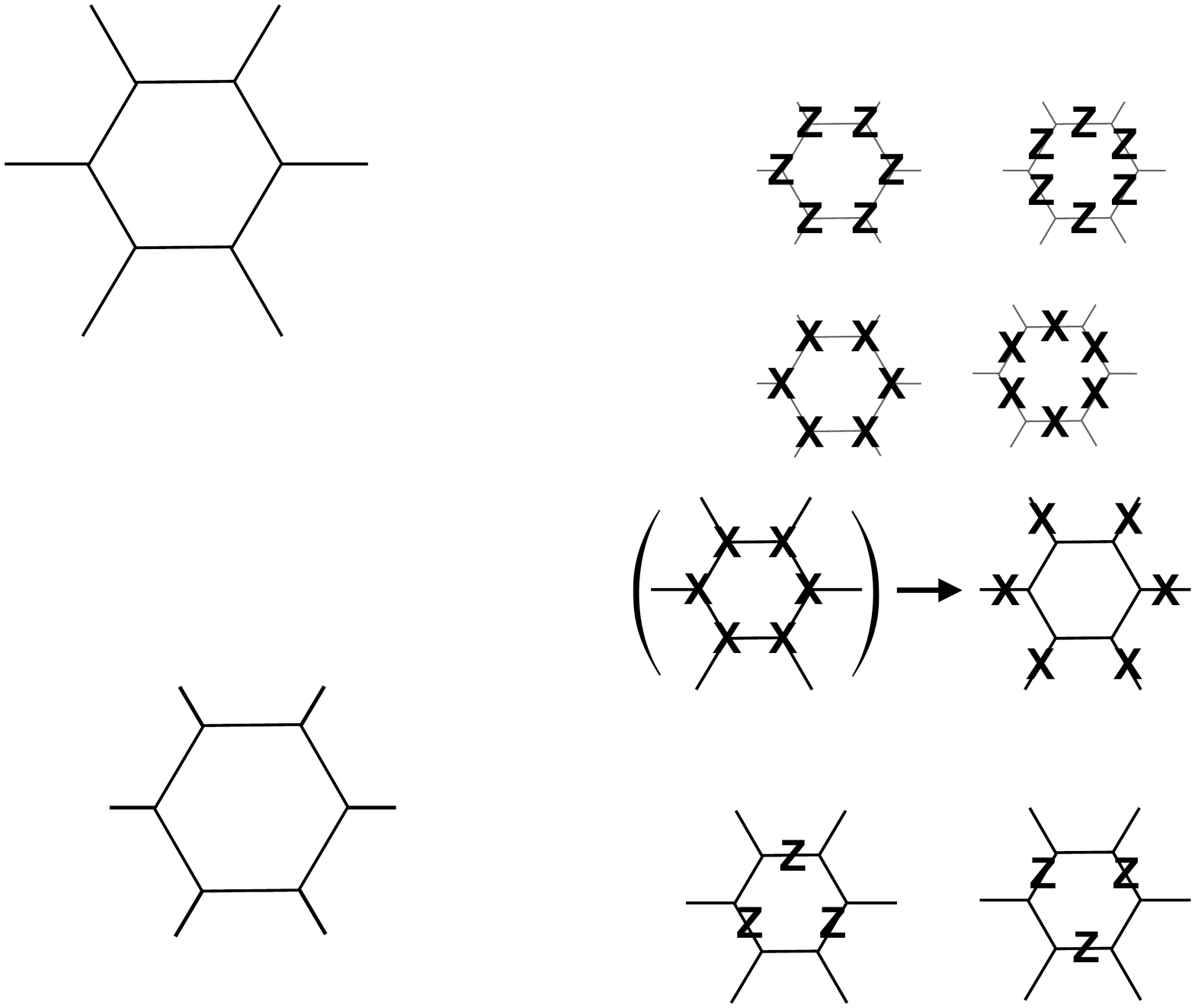}} \right\rangle.
\end{equation}
There are $2F$ operators of type $\raisebox{-15pt}{\includegraphics[height=0.5in]{fig_Zfacegen1.pdf}}$, and they satisfy three independent relations, namely a product of all $Z$-face operators with qubits placed on edges not colored in $i$, for $i\in\{A,B,C\}$. Thus $G(Z(\mathcal{O}_{TC}))=2F-3$ and using Euler characteristic for $c$, $V-E+F=2$, we obtain $G(Z(\mathcal{O}_{TC}))=G(Z(\mathcal{O}_{CC}))$. From the (Isomorphic Groups) Lemma~\ref{lemma:isomorphicgroups} we obtain that $\mathcal{O}_{CC}$ and $\mathcal{O}_{TC}$ are isomorphic.

We have already chosen independent generators $\{ g_i\}$ of $\mathcal{O}_{CC}$. We choose independent generators $\{ h_i\}$ of $\mathcal{O}_{TC}$ as follows:
\begin{itemize}
\item for $i=1,\ldots, V-1$: $g_i = \raisebox{-15pt}{\includegraphics[height=0.5in]{eq_Zedge.pdf}} \rightarrow  h_i = 
\raisebox{-15pt}{ \includegraphics[height=0.5in]{eq_Zedge_t.pdf}}$,
\item for $i=V,\ldots, V+F-4$: $g_i = \raisebox{-15pt}{\includegraphics[height=0.5in]{eq_plaquette.pdf}} \rightarrow  h_i =
\raisebox{-15pt}{ \includegraphics[height=0.5in]{eq_plaquette_t.pdf}}$,
\item for $i=V+F-3$: $g_i = \bigotimes_{v\in V} X(v) \rightarrow  h_i =
\bigotimes_{e\in E} Z(e)$,
\item for $i=V+F-2,\ldots,E+F-3$: $g_i = Z_i \rightarrow h_i\in Z(\mathcal{O}_{TC})$,
\end{itemize}
where $Z_i$ is a Pauli $Z$ operator on the ancilla qubit. We would like to emphasize that the choice of $\{h_i\}^{E+F-3}_{i=V+F-2}$ does not matter, as long as they belong to the center $Z(\mathcal{O}_{TC})$ and $\{h_i\}_{i=1}^{E+F-3}$ is the set of independent operators. One can verify that $\{g_i\}$ and $\{h_i\}$ have the same commutation relations, and thus from the (Clifford Transformation) Lemma~\ref{lemma:Cliffordunitary}, there exists a Clifford unitary $U_c$ such that
\begin{equation}
U_c g_i U_c^\dag = h_i\qquad \forall i\in\{ 1,\ldots,E+F-3\}. 
\end{equation}
Moreover, the choice of generators $\{g_i\}$ and $\{h_i\}$ guarantees that the rules in Eq.~(\ref{eq:ruleforUf}) are satisfied. This concludes the proof of the (Equivalence) Theorem~\ref{th:equivalence} in $d=3$ dimensions.

Finally, we present a sketch of a proof for higher-dimensional case (for a rigorous proof, see the Appendix). The color code is defined on a $d$-dimensional lattice $\mathcal{L}$ with $d$-cells colored in $C_0,C_1,\ldots,C_d$. Let $c$ be a $d$-cell in $\mathcal{L}$ of color $C_0$ with $V$ vertices, $E$ edges and $F$ $(d-1)$-cells. Let $\mathcal{O}_{CC}$ be the overlap group of the stabilizer group $CC(\mathcal{L})$ of the color code on $c$ and $\mathcal{S}_c$ be the stabilizer group of $E-V$ ancilla qubits. Note that $\mathcal{O}_{CC}\otimes\mathcal{S}_c$ is generated by $Z$-edge operators, $X$-type $(d-1)$-cell-like operators and single Pauli $Z$ operators on ancilla qubits. Thus, $G(\mathcal{O}_{CC}\otimes\mathcal{S}_c) = (V-1) + (F-d) + (E-V) = E+F-d$. Let $\mathcal{O}_{TC}$ be defined as a group of operators on qubits placed on edges of $c$. Namely, $\mathcal{O}_{TC}$ is generated by single qubit Pauli $Z$ operators on edges and $X$-vertex-like operators with support on all edges radiating out of $(d-1)$-cells of $c$. Note that there are $d$ independent relations between $X$-vertex-like operators, namely a product of all $X$-vertex-like operators associated with $(d-1)$-cells of certain color is identity. Thus, $G(\mathcal{O}_{TC}) = E+F-d$. By relating the number of independent generators of $Z(\mathcal{O}_{CC}\otimes\mathcal{S}_c)$ and $Z(\mathcal{O}_{TC})$ to the number of $i$-cells of $c$, for $i=0,1,\ldots,d$, and the Betti numbers of $c$, we can prove $G(Z(\mathcal{O}_{CC}\otimes\mathcal{S}_c))=G(Z(\mathcal{O}_{TC}))$ (see the Appendix for more details). From the (Isomorphic Groups) Lemma~\ref{lemma:isomorphicgroups} we obtain that $\mathcal{O}_{CC}\otimes \mathcal{S}_c$ and $\mathcal{O}_{TC}$ are isomorphic. We then choose independent generators $\{g_i\}$ and $\{h_i\}$ of $\mathcal{O}_{CC}\otimes \mathcal{S}_c$ and $\mathcal{O}_{TC}$ as follows
\begin{itemize}
\item $\{g_i\}_{i=1}^{V-1}$ --- independent $Z$-edge operators related to a spanning tree $T\subset E$ of the graph $G=(V,E)$ 
$\rightarrow \{h_i\}_{i=1}^{V-1}$ --- single qubit Pauli $Z$ operators on qubits placed on edges associated with the spanning tree $T$,
\item $\{g_i\}_{i=V}^{F+V-d-1}$ --- independent $X$-type $(d-1)$-cell operators associated with all $(d-1)$-cells of $c$ except for $d$ of them, namely one $(d-1)$-cell for each colors $C_0 C_1$, $C_0 C_2,\ldots,C_0 C_d\rightarrow \{h_i\}_{i=V}^{F+V-d-1}$ --- $X$-vertex-like operators with support on edges radiating out of $F-d$ corresponding $(d-1)$-cells of $c$,
\item $g_{i=F+V-d}=\bigotimes_{v\in V} X(v)\rightarrow h_{i=F+V-d}\bigotimes_{e\in E} Z(e)$,
\item $\{g_i\}_{i=F+V-d+1}^{E+F-d}$ --- single Pauli $Z$ operators on ancilla qubits $\rightarrow$ $\{h_i\}_{i=F+V-d+1}^{E+F-d}\in Z(\mathcal{O}_{TC})$ --- elements of the center of $\mathcal{O}_{TC}$ chosen in such a way that all the operators $\{h_i\}$ are independent.
\end{itemize}
One can verify that $\{g_i\}$ and $\{h_i\}$ have the same commutation relations. From the (Clifford Transformation) Lemma~\ref{lemma:Cliffordunitary}, there exists of a local Clifford unitary $U_c$ such that
\begin{equation}
U_c g_i U_c^\dag = h_i \qquad\forall i\in\{1,\ldots, E+F-d\}.
\end{equation}
By applying the disentangling unitary transformation $U=\bigotimes_{c\in \mathcal{C}_0} U_c$ to the stabilizer group $CC(\mathcal{L})$ of the color code and the stabilizer group $\mathcal{S}=\bigotimes_{c\in\mathcal{C}_0} \mathcal{S}_c$ of ancilla qubits, one obtains the stabilizer groups of the toric code supported on $d$ decoupled lattices $\mathcal{L}_1,\ldots,\mathcal{L}_d$, namely
\begin{equation}
U[CC(\mathcal{L}) \bigotimes \mathcal{S}] U^{\dagger}= \bigotimes_{j=1}^{d} TC(\mathcal{L}_j),
\end{equation}
which concludes the proof of the (Equivalence) Theorem~\ref{th:equivalence}.

\section{Topological color code with boundaries}\label{sec:open}
\label{sec:boundaries}

Realistic physical systems have boundaries. Moreover, the transversal implementability of logical gates in the topological color code crucially depends on the choice of boundaries. In this section we show that the color code defined on a $d$-dimensional lattice with $d+1$ boundaries of $d+1$ distinct colors is equivalent to $d$ copies of the toric code attached together at a $(d-1)$-dimensional boundary. We also briefly describe how the choice of boundaries of the color code determines if the copes of the toric code are attached or decoupled. We then discuss such boundaries from the viewpoint of condensation of excitations.

\subsection{Physical intuition behind folding}
 
We begin with presenting some physical intuition why the toric code with two smooth and two rough boundaries needs to be folded if one hopes for transversal non-Pauli logical gates such as the Hadamard gate $\overline{H}$. Let us recall known results about gapped boundaries of the toric code. In two spatial dimensions, the toric code may have two types of boundaries, \emph{smooth} and \emph{rough}~\cite{Bravyi98}. The rough boundaries are defined as the boundaries with open edges (see Fig.~\ref{fig_fold_logical}). Similarly to the toric code without boundaries, there are $X$-vertex and $Z$-face stabilizers, although $Z$-face stabilizers have to be modified along the rough boundaries. An $X$-type  ($Z$-type) string-like logical operator can only start from and end on smooth (rough) boundaries. One says that the electric charge $e$, i.e. the violated $X$-vertex stabilizer, condenses on the rough boundary and the magnetic flux $m$, i.e. the violated $Z$-face stabilizer, is confined since single $e$, unlike $m$, can be created or absorbed on the rough boundary. Similarly, $m$ condenses and $e$ is confined on the smooth boundary.

\begin{figure}[h!]
\includegraphics[width=0.75\textwidth]{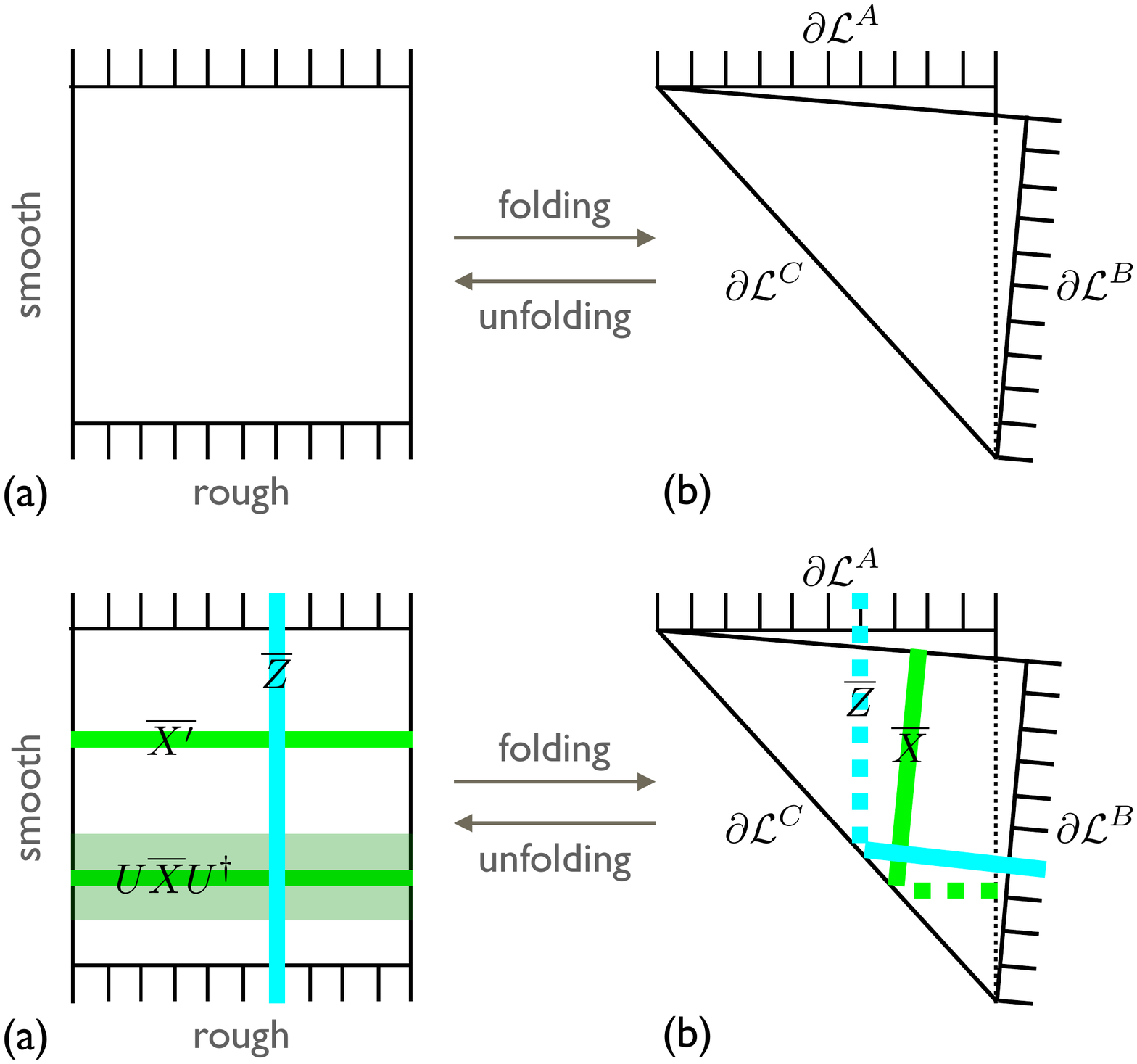}
\caption{(Color online) Origami of the toric code with boundaries. (a) Blue line, starting from and ending on rough boundaries, represents the logical $Z$ operator. Green lines, starting from and ending on smooth boundaries, represent the logical $X$ operator. (b) The color code with three boundaries,  $\partial\mathcal{L}^A$, $\partial\mathcal{L}^B$ and $\partial\mathcal{L}^C$, obtained by folding the toric code with two smooth and two rough boundaries. After folding, two logical operators $\overline{X}$ and $\overline{Z}$ are supported on overlapping regions.}
\label{fig_fold_logical} 
\end{figure} 

Consider the two-dimensional toric code with two smooth and two rough boundaries as depicted in Fig.~\ref{fig_fold_logical}(a). Since there is only one pair of anti-commuting logical operators, $\overline{X}$ and $\overline{Z}$, the code encodes a single logical qubit. There is one crucial difference between the toric code and the color code (with boundaries) --- the latter admits transversal implementation of the Hadamard gate $\overline{H}$ while the former does not. Recall that the Hadamard gate swaps Pauli $X$ and $Z$ operators. Suppose that the Hadamard gate can be implemented by a local unitary operator $U$. Let $\overline{X}$ and $\overline{X'}$ be two equivalent implementations of the logical $X$ operator, supported on string-like horizontal regions (see Fig.~\ref{fig_fold_logical}(a)). Then, $U\overline{X}U^{\dagger}$ implements the logical $Z$ operator, which has to anti-commute with $\overline{X'}$. On the other hand, since $U$ is a local unitary, then $U\overline{X}U^{\dagger}$ and $\overline{X'}$ have no overlap, and thus they commute, leading to a contradiction. We conclude that the logical Hadamard gate cannot be implemented by a local unitary operator in the toric code with boundaries. This is a simple version of the argument presented in Ref.~\cite{Beverland14}.

We note that if the logical Hadamard is transversal, then both logical $X$ and $Z$ operators must have representations which are supported on overlapping regions. By folding the toric code, both logical $X$ and $Z$ operators can be supported on overlapping regions, as shown in Fig.~\ref{fig_fold_logical}(b). Thus, for the logical Hadamard to be transversal folding of the toric code is indeed necessary.

\subsection{Unfolding in two dimensions}

We now return to the analysis of the topological color code $CC(\mathcal{L})$ supported on a ($3$-valent and $3$-colorable) two-dimensional lattice $\mathcal{L}$ with the Euler characteristic\footnote{We can think of $\mathcal{L}$ as a tiling of a $2$-manifold $\mathcal{M}$ with boundary, and then the Euler characteristic is $\chi = 2 -2g - b$, where $g$ is the genus of $\mathcal{M}$ and $b$ is the number of connected components of $\partial\mathcal{M}$.} $\chi$ and the boundary $\partial\mathcal{L}=\bigsqcup_{i=1}^n \partial\mathcal{L}^i$, where $\partial\mathcal{L}^i$ is the (maximum) connected component of the boundary $\partial\mathcal{L}$ of certain color. For conciseness, we simply refer to $\partial\mathcal{L}^i$ as a boundary. We say that the boundary $\partial\mathcal{L}^i$ is of color $C_1$ if all the faces adjacent to $\partial\mathcal{L}^i$ have colors $C_2$ and $C_3$, where $\{ C_1, C_2, C_3 \}=\{A,B,C \}$. One can show that the color code $CC(\mathcal{L})$ encodes $n-2\chi$ logical qubits. In particular, one important case corresponds to the triangular color code (with three boundaries of color $A$, $B$ and $C$ as shown in Fig.~\ref{fig_2dim_boundary}(a); see also Fig.~\ref{fig_fold_logical}(b)), which encodes one logical qubit regardless of the system size, and has transversal logical Hadamard $\overline{H}$ and the phase gate $\overline{R_{2}}$ .

We would like to understand how the color code $CC(\mathcal{L})$ with boundaries transforms under the disentangling unitary $U=\bigotimes_{f\in\mathcal{C}} U_f$ described in Section~\ref{sec:closed}B. In the bulk, the disentangling unitary $U$ transforms the stabilizers of the color code into stabilizers of the toric code supported on two decoupled lattices $\mathcal{L}_A$ and $\mathcal{L}_B$, obtained from $\mathcal{L}$ by shrinking faces of color $A$ and $B$, respectively. On the other hand, the stabilizers of the color code supported on qubits near the boundaries may transform into stabilizers supported on both shrunk lattices $\mathcal{L}_{A}$ and $\mathcal{L}_{B}$, depending on the colors of $\partial\mathcal{L}$. In general, we cannot transform the color code $CC(\mathcal{L})$ into the toric code supported on two decoupled lattices, $TC(\mathcal{L}_{A})\otimes TC(\mathcal{L}_{B})$. Rather, the toric code is defined on a lattice $\mathcal{L}_{A}\# \mathcal{L}_{B}$ obtained by \emph{attaching}\footnote{We would like to point out similarities between the attaching procedure we describe and welding defined in Ref.~\cite{Michnicki2014}.} $\mathcal{L}_{A}$ and $\mathcal{L}_{B}$, i.e. identifying some of their boundaries. Namely,
\begin{equation}
U[CC(\mathcal{L})] U^{\dagger}= TC(\mathcal{L}_{A}\# \mathcal{L}_{B}).
\end{equation}
In the rest of this subsection we analyze the triangular color code (see Fig.~\ref{fig_2dim_boundary}), but the discussion is applicable to the color code on any homogeneous cell $2$-complex with boundary, which is $3$-colorable and $3$-valent.

\begin{figure}[h!]
\includegraphics[width=1.00\textwidth]{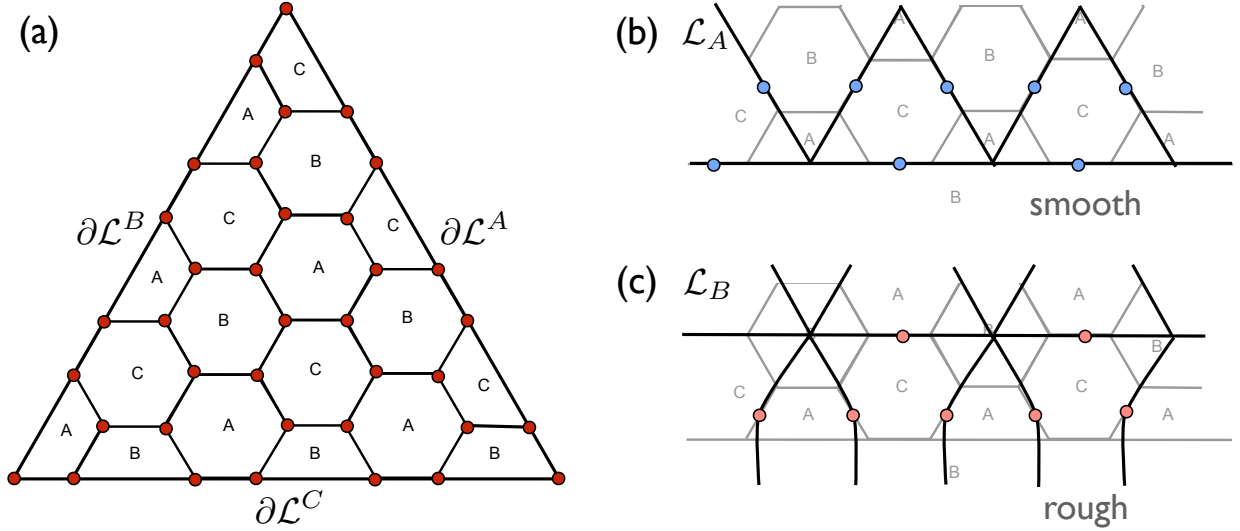}
\caption{(Color online) (a) The (triangular) color code on a two-dimensional lattice $\mathcal{L}$ with the boundary $\partial\mathcal{L}$ comprising of three components of color $A$, $B$ and $C$, namely $\partial\mathcal{L}=\partial\mathcal{L}^A\sqcup\partial\mathcal{L}^B\sqcup \partial\mathcal{L}^C$. Qubits are represented by dots. (b) A fragment of the lattice $\mathcal{L}_{A}$ derived from $\mathcal{L}$ by shrinking faces of color $A$. The smooth boundary arises in $\mathcal{L}_{A}$ on the boundary $\partial\mathcal{L}^B$. (c) A fragment of the lattice $\mathcal{L}_{B}$ derived from $\mathcal{L}$ by shrinking faces of color $B$. The rough boundary arises in $\mathcal{L}_{B}$ on the boundary $\partial\mathcal{L}^B$ . }
\label{fig_2dim_boundary} 
\end{figure}

Let us describe how to obtain the lattice $\mathcal{L}_{A}\# \mathcal{L}_{B}$ supporting the toric code. Recall that in the bulk, $\mathcal{L}_{A}$ and $\mathcal{L}_{B}$ are obtained from $\mathcal{L}$ by shrinking faces of color $A$ and $B$. Let $\partial\mathcal{L}^A$, $\partial\mathcal{L}^B$ and $\partial\mathcal{L}^C$ be the boundaries of color $A$, $B$ and $C$, respectively. We find that shrunk lattices $\mathcal{L}_{A}$ and $\mathcal{L}_{B}$ are decoupled along $\partial\mathcal{L}^A$ and $\partial\mathcal{L}^B$, but are identified along $\partial\mathcal{L}^C$. In particular,
\begin{itemize}
\item  on the boundary $\partial\mathcal{L}^A$: the lattice $\mathcal{L}_{B}$ has open edges (rough boundary), whereas $\mathcal{L}_{A}$ --- no open edges (smooth boundary),
\item  on the boundary $\partial\mathcal{L}^B$: the lattice $\mathcal{L}_{A}$ has open edges (rough boundary), whereas $\mathcal{L}_{B}$ --- no open edges (smooth boundary),
\item  on the boundary $\partial\mathcal{L}^C$: since the disentangling unitary $U$ does not affect the qubits placed on vertices belonging to $\partial\mathcal{L}^C$, both lattices $\mathcal{L}_{A}$ and $\mathcal{L}_{B}$ share these qubits.
\end{itemize}
See Fig.~\ref{fig_2dim_boundary} and Fig.~\ref{fig_weld}(a)(b) for an example of how smooth and rough boundaries arise in the disentangling procedure. Note that on $\partial\mathcal{L}^C$, the lattices $\mathcal{L}_{A}$ and $\mathcal{L}_{B}$ are identified. This implies that an $e$ excitation on $\mathcal{L}_{A}$ can be transformed into an $e$ excitation on $\mathcal{L}_{B}$ by going through the boundary $\partial\mathcal{L}^C$; similarly for $m$ excitations.

We can visualize the lattice $\mathcal{L}_{A}\# \mathcal{L}_{B}$ by flipping vertically $\mathcal{L}_{B}$ and attaching it to $\mathcal{L}_{A}$ (see Fig.~\ref{fig_weld}(c)). Observe that starting from the color code $CC(\mathcal{L})$ with three boundaries, performing the disentangling unitary $U=\bigotimes_{f\in\mathcal{C}} U_f$ and unfolding the resulting lattice $\mathcal{L}_{A}\# \mathcal{L}_{B}$, one obtains a single copy of the toric code $TC(\mathcal{L}_{A}\# \mathcal{L}_{B})$ with two smooth and two rough boundaries. We can summarize the discussion by the following theorem.

\begin{theorem}[Unfolding]
The (triangular) color code $CC(\mathcal{L})$ on a two-dimensional lattice $\mathcal{L}$ with three boundaries, $\partial\mathcal{L}^A$, $\partial\mathcal{L}^B$ and $\partial\mathcal{L}^C$, is equivalent to one folded copy of the toric code $TC(\mathcal{L}_{A}\# \mathcal{L}_{B})$ defined on a lattice $\mathcal{L}_{A}\# \mathcal{L}_{B}$ with two smooth and two rough boundaries. Moreover, $\mathcal{L}_{A}\# \mathcal{L}_{B}$ is constructed by attaching two lattices $\mathcal{L}_{A}$ and $\mathcal{L}_{B}$ (derived from $\mathcal{L}$ by shrinking faces of color $A$ and $B$, respectively) along the boundary $\partial\mathcal{L}^C$.
\end{theorem}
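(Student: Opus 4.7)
The plan is to reduce the theorem to the closed-manifold case (Theorem~\ref{th:equivin2d}) by reusing the same disentangling unitary $U = \bigotimes_{f\in\mathcal{C}} U_f$, and then separately analyze the three types of boundaries $\partial\mathcal{L}^A$, $\partial\mathcal{L}^B$, $\partial\mathcal{L}^C$ to identify the geometry of the resulting toric-code lattice. For every $C$-face $f$ that lies entirely in the bulk, I would take $U_f$ exactly as constructed in Section~\ref{sec:closed}B via the (Clifford Transformation) Lemma~\ref{lemma:Cliffordunitary}. For a $C$-face truncated by $\partial\mathcal{L}^A$ or $\partial\mathcal{L}^B$ I would redo the isomorphic-groups counting for the overlap group on that truncated face and use Lemmas~\ref{lemma:isomorphicgroups}--\ref{lemma:Cliffordunitary} to build a local $U_f$ compatible with an adjusted target toric-code overlap; for a $C$-face abutting $\partial\mathcal{L}^C$ the construction is unnecessary because, in that case, the qubits on $\partial\mathcal{L}^C$ are not contained in any full bulk $C$-face, so $U$ acts as the identity on them.

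In the bulk the transformation of stabilizers is precisely the one summarized in Fig.~\ref{fig_2d_terms}, so it suffices to analyze the three boundary strips. Near $\partial\mathcal{L}^A$ only $B$- and $C$-faces meet the boundary; the $Z$-face generators on truncated $B$-faces turn into $Z$-face operators on $\mathcal{L}_A$ (with no open edges) while the corresponding $X$-face generators lose a qubit and map to $X$-vertex stars on $\mathcal{L}_B$ that radiate onto open edges, producing a smooth boundary on $\mathcal{L}_A$ and a rough boundary on $\mathcal{L}_B$. The roles of $\mathcal{L}_A$ and $\mathcal{L}_B$ swap verbatim along $\partial\mathcal{L}^B$, giving the opposite pattern. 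Along $\partial\mathcal{L}^C$, since $U$ acts trivially on the boundary qubits, the $A$- and $B$-face stabilizers touching $\partial\mathcal{L}^C$ survive as operators whose support lies simultaneously in $\mathcal{L}_A$ and $\mathcal{L}_B$; this identifies the two shrunk lattices along their $\partial\mathcal{L}^C$ edges.

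With the stabilizer picture in hand, I would define the attached lattice $\mathcal{L}_A \# \mathcal{L}_B$ as the CW complex obtained from $\mathcal{L}_A \sqcup \mathcal{L}_B$ by gluing the qubit-carrying edges that sit on $\partial\mathcal{L}^C$, and verify directly that $U\,CC(\mathcal{L})\,U^\dagger$ is generated by the $X$-vertex and $Z$-face operators of $TC(\mathcal{L}_A \# \mathcal{L}_B)$, with the two smooth boundaries coming from $\partial\mathcal{L}^A \cap \mathcal{L}_A$ and $\partial\mathcal{L}^B \cap \mathcal{L}_B$ and the two rough boundaries from the complementary pieces. As a sanity check I would match the logical qubit count: the triangular color code encodes one logical qubit, and the attached surface with two smooth and two rough boundaries likewise encodes one, confirming that no stabilizers have been lost. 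A final cosmetic unfolding around $\partial\mathcal{L}^C$ (identifying $\mathcal{L}_A$ with one sheet and the reflection of $\mathcal{L}_B$ with the other, as in Fig.~\ref{fig_fold_logical}(b)) exhibits $\mathcal{L}_A \# \mathcal{L}_B$ as the folded surface code of the statement.

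The main obstacle I anticipate is the careful bookkeeping at the boundaries rather than any new algebraic idea: specifically, verifying that the $U_f$ constructed for truncated $C$-faces still respects the target mapping and that the boundary stabilizers at $\partial\mathcal{L}^C$ correctly glue $\mathcal{L}_A$ and $\mathcal{L}_B$ rather than producing extra relations or gauge freedom. I expect the cleanest way to discharge this is to redo the overlap-group dimension count of Section~\ref{sec:closed}B for each boundary stencil, invoking Lemma~\ref{lemma:isomorphicgroups} to check isomorphism and Lemma~\ref{lemma:Cliffordunitary} to upgrade to the desired Clifford, exactly as in the closed case but with modified generator lists that account for the missing neighboring faces at the boundary.
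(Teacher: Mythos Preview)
Your proposal is correct and follows essentially the same approach as the paper: apply the bulk disentangling unitary $U=\bigotimes_{f\in\mathcal{C}} U_f$ of Theorem~\ref{th:equivin2d} unchanged, then case-analyze the three boundary types to read off the smooth/rough pattern on $\mathcal{L}_A,\mathcal{L}_B$ and the gluing along $\partial\mathcal{L}^C$. The one unnecessary complication you anticipate is the handling of $C$-faces near $\partial\mathcal{L}^A$ and $\partial\mathcal{L}^B$ --- in the color code these faces are full (not truncated) polygons, so the same $U_f$ from the closed case applies verbatim and no modified overlap-group count is needed; the boundary behaviour emerges solely from which neighbouring $A$- or $B$-face stabilizers are absent, and (as you correctly note) there are no $C$-faces at all along $\partial\mathcal{L}^C$, so $U$ acts trivially on those qubits.
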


\begin{figure}[h!]
\includegraphics[width=\textwidth]{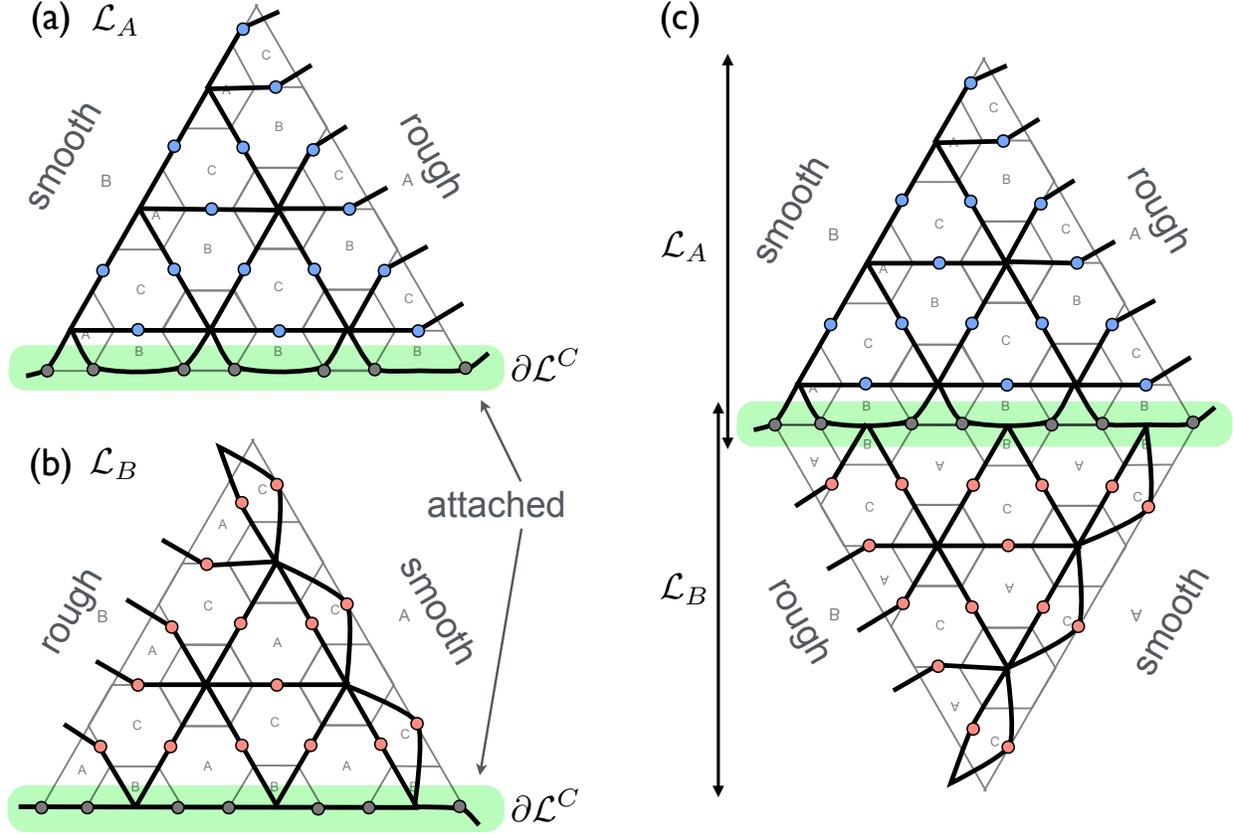}
\caption{(Color online) Attaching two lattices: (a) $\mathcal{L}_{A}$ and (b) $\mathcal{L}_{B}$ by identifying qubits along the boundary $\partial\mathcal{L}^C$. Note that both $\mathcal{L}_{A}$ and $\mathcal{L}_{B}$ have two qubits per edge on the boundary $\partial\mathcal{L}^C$. (c) Unfolded toric code $TC(\mathcal{L}_A\#\mathcal{L}_B)$. Blue qubits belong to the lattice $\mathcal{L}_{A}$, whereas red qubits belong to the (flipped) lattice $\mathcal{L}_{B}$.}
\label{fig_weld} 
\end{figure}

\subsection{Three (or more) dimensions}

The toric code on a $d$-dimensional lattice with boundaries, $d\geq 3$, does not differ substantially from the two-dimensional model --- qubits are placed on edges, and $X$- and $Z$-type stabilizer generators are associated with vertices and faces. There are two types of boundaries, rough and smooth, which may absorb point-like electric charges and $(d-1)$-dimensional magnetic fluxes, respectively. Moreover, string-like logical $Z$ (respectively $(d-1)$-dimensional membrane-like logical $X$) operators can only start from and end on rough (respectively smooth) boundaries (see Fig.~\ref{fig_logical_boundary}(b)).

The color code can be defined on a $(d+1)$-valent and $(d+1)$-colorable $d$-dimensional lattice $\mathcal{L}$ with the boundaries $\partial\mathcal{L}=\bigsqcup_{i=1}^n \partial\mathcal{L}^i$, where each (maximum) connected component $\partial\mathcal{L}^i$ has one out of $d+1$ colors, $C_0,\ldots,C_d$. We say that $\partial\mathcal{L}^i$ is of color $C_j$ if all $d$-cells adjacent to $\partial\mathcal{L}^i$ have colors different from $C_j$. Qubits are placed on vertices, and $X$- and $Z$-type stabilizer generators are associated with $d$-cells and faces, respectively. For the sake of clarity, in the rest of this subsection we focus on the three-dimensional color code $CC(\mathcal{L})$ defined on a tetrahedron-like lattice $\mathcal{L}$ with four boundaries of color $A$, $B$, $C$ and $D$ (see Fig.~\ref{fig_3D_boundary}(a)).

\begin{figure}[h!]
\includegraphics[width=\textwidth]{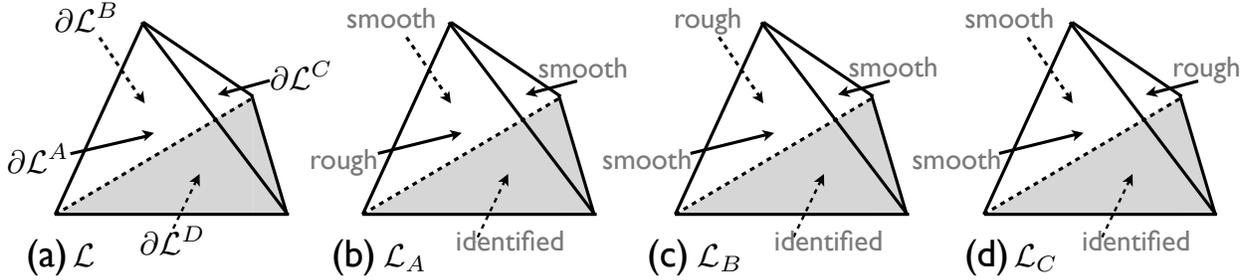}
\caption{ (a) A tetrahedron-like lattice $\mathcal{L}$ with boundaries $\partial\mathcal{L}^{A}$, $\partial\mathcal{L}^{B}$, $\partial\mathcal{L}^C$ and $\partial\mathcal{L}^D$. Three shrunk lattices: (b) $\mathcal{L}_{A}$, (c) $\mathcal{L}_{B}$, (d) $\mathcal{L}_{C}$ derived from $\mathcal{L}$. The shaded boundary represents the attaching boundary $\partial\mathcal{L}^D$.}
\label{fig_3D_boundary} 
\end{figure} 

We would like to analyze what happens to $CC(\mathcal{L})$ if we apply the disentangling unitary $U=\bigotimes_{c\in\mathcal{D}} U_c$ described in Section~\ref{sec:closed}D. In the bulk, the disentangling unitary $U$ transforms the stabilizers of the color code into stabilizers of the toric code supported on three decoupled lattices, $\mathcal{L}_A$, $\mathcal{L}_B$ and $\mathcal{L}_C$, obtained from $\mathcal{L}$ by shrinking volumes of color $A$, $B$ and $C$, respectively. 
Since $\mathcal{L}_A$, $\mathcal{L}_B$ and $\mathcal{L}_C$ share qubits along the boundary $\partial\mathcal{L}^D$, we cannot transform the color code $CC(\mathcal{L})$ into the toric code supported on three decoupled lattices, $TC(\mathcal{L}_{A})\otimes TC(\mathcal{L}_{B})\otimes TC(\mathcal{L}_{C})$. Rather, the toric code is defined on a lattice $\mathcal{L}_{A}\# \mathcal{L}_{B}\# \mathcal{L}_{C}$ obtained by attaching three shrunk lattices along the boundary $\partial\mathcal{L}_D$. We then obtain
\begin{equation}
U[CC(\mathcal{L})\otimes\mathcal{S}] U^{\dagger}= TC(\mathcal{L}_{A}\# \mathcal{L}_{B}\# \mathcal{L}_{C}),
\end{equation}
where $\mathcal{S}$ is the stabilizer group of the ancilla qubits. Note that $U$ does not transform qubits on vertices belonging to the boundary $\partial\mathcal{L}_D$.

Let us have a closer look at the shrunk lattices and the identified boundary. $\mathcal{L}_i$ has one rough boundary $\partial\mathcal{L}^i$, and two smooth boundaries $\partial\mathcal{L}^j$ and $\partial\mathcal{L}^k$, where $\{i,j,k\}=\{A,B,C\}$ (see Fig.~\ref{fig_3D_boundary}). Recall that shrunk lattices share only qubits placed on vertices of the identified boundary $\partial\mathcal{L}_D$. To obtain $\mathcal{L}_{A}\# \mathcal{L}_{B}\# \mathcal{L}_{C}$ one attaches the shrunk lattices by identifying the qubits placed on vertices of $\partial\mathcal{L}^D$ (see Fig.~\ref{fig_attach}). Note that a single-qubit Pauli $Z$ operator on a qubit on the boundary $\partial\mathcal{L}^D$ causes three $X$-vertex stabilizers to be violated, i.e. one in each of three shrunk lattices. Put another way, such an operator creates a triple of electric charges, $e_{A}$, $e_{B}$ and $e_{C}$. This implies that the composite electric charge $e_{A}e_{B}e_{C}$ can condense on the boundary $\partial\mathcal{L}^D$. We focus on condensation of excitations on the boundaries in the next subsection.

\begin{figure}[h!]
\includegraphics[width=\textwidth]{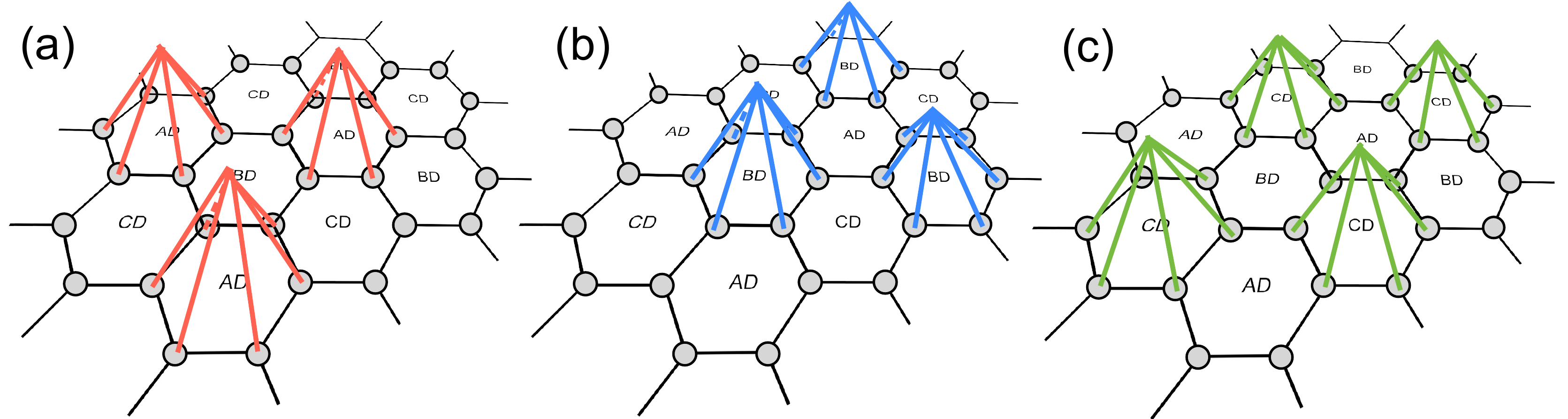}
\caption{(Color online) The identified boundary $\partial\mathcal{L}^D$ of three shrunk lattices: (a) $\mathcal{L}_{A}$, (b) $\mathcal{L}_{B}$ and (c) $\mathcal{L}_{C}$. The shrunk lattices are attached by identifying qubits on vertices of $\partial\mathcal{L}^{D}$.}
\label{fig_attach} 
\end{figure} 

The discussion here can be straightforwardly generalized to $d$ dimensions, yielding the equivalence between the color code and the toric code with boundaries. We conclude with the following theorem.

\begin{theorem}[Attaching]
Let $CC(\mathcal{L})$  be the color code on a $d$-simplex-like lattice $\mathcal{L}$ with $d+1$ boundaries $\partial\mathcal{L}^0,\ldots,\partial\mathcal{L}^d$, where $\partial\mathcal{L}^i$ has color $C_i$. Then, there exists a local Clifford unitary $U=\bigotimes_{c\in\mathcal{C}_0} U_c$ (described in Section~\ref{sec:closed}D) such that
\begin{equation}
U[ CC(\mathcal{L})\otimes\mathcal{S} ]U^\dag = TC(\#_{i=1}^d \mathcal{L}_i),
\end{equation}
where $\mathcal{S}$ is the stabilizer group of the ancilla qubits. The toric code $TC(\#_{i=1}^d \mathcal{L}_i)$ is defined on the lattice 
$\#_{i=1}^d \mathcal{L}_i$ obtained by attaching lattices $\mathcal{L}_1,\ldots,\mathcal{L}_d$ along the boundary $\partial\mathcal{L}^0$, where $\mathcal{L}_i$ is derived from $\mathcal{L}$ by shrinking $d$-cells of color $C_i$, and has one rough boundary, $\partial\mathcal{L}_i$.
\end{theorem}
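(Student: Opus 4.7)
The plan is to reuse the local disentangling unitary $U = \bigotimes_{c \in \mathcal{C}_0} U_c$ from the (Equivalence) Theorem~\ref{th:equivalence} for closed manifolds, and to carefully track its action near each of the $d+1$ boundary components. The key structural observation is that each factor $U_c$ is supported only on qubits at vertices of a $d$-cell $c$ of color $C_0$, while by the very definition of the color of a boundary, no $d$-cell of color $C_0$ touches $\partial\mathcal{L}^0$. Consequently $U$ acts trivially on every qubit of $\partial\mathcal{L}^0$, so those qubits remain shared after disentangling and furnish exactly the locus where the shrunk lattices $\mathcal{L}_1,\ldots,\mathcal{L}_d$ will be attached.

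First, I would verify that the construction of $U_c$ from the closed case still applies verbatim when $c$ is adjacent to a boundary $\partial\mathcal{L}^i$ with $i\neq 0$. Since the construction depends only on the intrinsic combinatorics of $c$ (its vertices, edges and $(d-1)$-cells, together with the overlap group $\mathcal{O}_{CC}$ on $c$), it is insensitive to how $c$ is glued into the ambient lattice. The (Isomorphic Groups) Lemma~\ref{lemma:isomorphicgroups} and the (Clifford Transformation) Lemma~\ref{lemma:Cliffordunitary} therefore still produce the required $U_c$ for each color-$C_0$ cell once $E-V$ ancilla qubits have been adjoined.

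Next, I would analyze the image of $CC(\mathcal{L})\otimes\mathcal{S}$ under $U$ region by region. In the bulk, and in a neighborhood of any boundary $\partial\mathcal{L}^i$ with $i\neq 0$, the argument is local and identical to the closed-manifold case: the $X$-type $d$-cell stabilizers of color $C_i$ and the $Z$-type face stabilizers supported in the shrunk region are carried to the $X$-vertex and $Z$-face generators of the toric code on $\mathcal{L}_i$. Truncating this transformation at $\partial\mathcal{L}^i$ creates a rough boundary in $\mathcal{L}_i$ and smooth boundaries in each $\mathcal{L}_j$ for $j\neq i, 0$, exactly as in the two-dimensional analysis summarized in Fig.~\ref{fig_2dim_boundary}.

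The main obstacle, and the last step, is the analysis at the attaching boundary $\partial\mathcal{L}^0$: one must show that the transformed stabilizers there coincide with those of $TC(\#_{i=1}^d \mathcal{L}_i)$ under the identification of the $\partial\mathcal{L}^0$-qubits shared by all $\mathcal{L}_i$. I would argue that any $Z$-face stabilizer of $CC(\mathcal{L})$ straddling $\partial\mathcal{L}^0$ is mapped to a product of $Z$-face operators of the individual $\mathcal{L}_i$ glued along the common vertices, and that $X$-type $d$-cell stabilizers adjacent to $\partial\mathcal{L}^0$ similarly decompose into $X$-vertex stabilizers on the attached lattice, automatically enforcing condensation of the composite electric charge $e_{C_1}\cdots e_{C_d}$ along the glue. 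A generator-counting comparison between the two stabilizer groups, in the spirit of the Euler-characteristic bookkeeping used in the closed-manifold proof, then closes the argument.
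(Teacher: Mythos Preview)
Your proposal is correct and follows essentially the same approach as the paper. The paper's argument, given as the discussion preceding the theorem (worked out for $d=3$ and then declared to generalize straightforwardly), rests on exactly the two structural points you identify: the bulk disentangling from the closed-manifold case carries over unchanged, and since the boundary $\partial\mathcal{L}^0$ of color $C_0$ by definition has no adjacent $d$-cells of color $C_0$, the unitary $U=\bigotimes_{c\in\mathcal{C}_0}U_c$ leaves the qubits on $\partial\mathcal{L}^0$ untouched, so they are shared by all shrunk lattices and provide the attaching locus.
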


\subsection{Condensation of anyonic excitations}

It is instructive to interpret the equivalence between the color code and the toric code with boundaries from the viewpoint of condensation of anyonic excitations. In the two-dimensional toric code, the anyonic excitations are: electric $e$ --- a single violated $X$-vertex stabilizer, magnetic $m$ --- a single violated $Z$-face stabilizer, and fermionic $\epsilon = e \times m$  --- a composite excitation obtained by fusing $e$ and $m$. The label $1$ corresponds to the vacuum (no excitations).

The gapped boundaries of two-dimensional systems are classified by maximum sets of mutually bosonic excitations which may condense~\cite{Kitaev12,Levin12,Lan14}. In the case of a single layer of the toric code, possible sets of anyons which may condense on the boundaries are $\{ 1, e \}$ and $\{ 1, m \}$. Note that $\epsilon$ has fermionic self-statistics and thus cannot condense on the gapped boundaries. The sets $\{ 1, e \}$ and $\{ 1, m \}$ correspond to rough and smooth boundaries, respectively~\cite{Bravyi98}. On the other hand, the folded toric code has three boundaries (see Fig.~\ref{fig_fold_logical}(b)). If we denote by $e_i$, $m_i$ and $\epsilon_i$ the excitations in the front ($i=1$) and rear ($i=2$) layer of the folded toric code, then we can associate the boundaries with the sets of condensing anyons. Namely,
\begin{eqnarray}
\partial\mathcal{L}^A &\leftrightarrow& \{ 1, e_{1}, m_{2}, e_{1}m_{2} \}\\
\partial\mathcal{L}^B &\leftrightarrow& \{ 1, e_{2}, m_{1}, e_{2}m_{1} \}\\ 
\partial\mathcal{L}^C &\leftrightarrow& \{ 1, e_{1}e_{2}, m_{1}m_{2}, \epsilon_{1}\epsilon_{2} \}
\end{eqnarray}
As depicted in Fig.~\ref{fig_condensation}(a), two electric charges $e_{1}$ and $e_{2}$ created on boundaries $\partial\mathcal{L}^A$ and $\partial\mathcal{L}^B$ can be jointly annihilated (or created) on $\partial\mathcal{L}^C$. 

\begin{figure}[h!]
\includegraphics[width=0.7\textwidth]{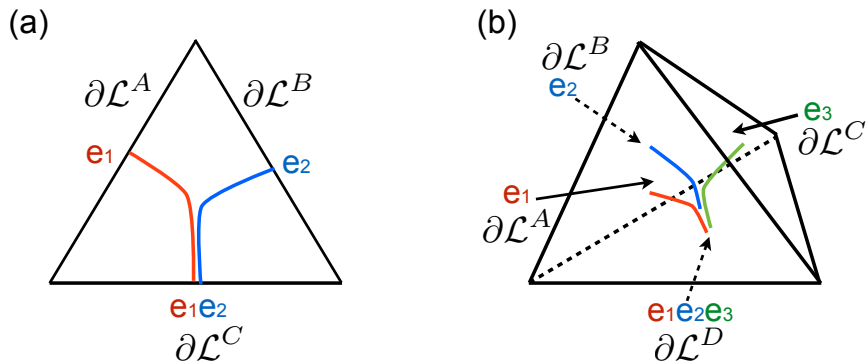}
\caption{(Color online) Condensation of electric charges in (a) two and (b) three dimensions. Observe that single electric charges can condense on all but one boundary, which is the identified boundary. On the identified boundary, a composite electric charge (a) $e_1 e_2$ and (b) $e_1 e_2 e_3$ can be created or annihilated. }
\label{fig_condensation} 
\end{figure}

By associating the boundaries with the sets of condensing anyons we can find the correspondence between anyonic excitations in the toric code and the color code. We can label excitations in the color code by $i_P$, where $P \in \{ X, Z \}$ indicates the type of the violated stabilizer, and $i \in \{ A, B, C \}$ indicates the color of the face associated with the violated stabilizer. Observe that not all six excitations are independent. For instance, a single qubit Pauli $X$ operator on a vertex $v$ creates excitations, $A_{Z}$, $B_{Z}$ and $C_{Z}$, on three neighboring faces sharing $v$. This implies that in the bulk the following fusion channels exist
\begin{equation}
A_{X}\times B_{X}\times C_{X} = 1,\qquad A_{Z}\times B_{Z}\times C_{Z} = 1. 
\end{equation}
Note that excitations $i_{X}$ and $i_{Z}$ can only condense on the boundary $\partial\mathcal{L}^i$. This leads to the following isomorphism between labels of anyonic excitations of the toric and color codes
\begin{align}
e_{1} \leftrightarrow A_{X}, \qquad e_{2} \leftrightarrow B_{X}, \qquad  m_{1} \leftrightarrow B_{Z}, \qquad m_{2} \leftrightarrow A_{Z}. 
\end{align}

In $d>2$ dimensions, the excitations of the color code are point-like electric charges and $(d-1)$-dimensional magnetic fluxes. Let us first focus on condensation of electric charges. We find that the boundaries of the $d$-dimensional color code on a $d$-simplex-like lattice are given by
\begin{eqnarray}
\partial\mathcal{L}^0 &\leftrightarrow& \{ e_{1}e_{2}\ldots e_{d} \},\\
\partial\mathcal{L}^i &\leftrightarrow& \{  e_{i} \} \qquad \mathrm{for}\ i=1,\ldots,d.
\end{eqnarray}
(See Fig.~\ref{fig_condensation} for two- and three-dimensional examples). Yet, none of the magnetic fluxes can individually condense on the boundary $\partial\mathcal{L}^0$.  Rather, any pair of fluxes can condense on $\partial\mathcal{L}^0$, and thus we might think of the fluxes as being equivalent. To sum up, we find the following condensations of $(d-1)$-dimensional magnetic fluxes:
\begin{eqnarray}
\partial\mathcal{L}^0 &\leftrightarrow& \left\{ m_{i}m_{j} |\ \forall i\neq j \right\},\\
\partial\mathcal{L}^i &\leftrightarrow& \left\{ m_{j} |\ \forall j\neq i \right\}.
\end{eqnarray}
One may observe that, as expected, the set of condensing magnetic and electric excitations on every boundary is mutually bosonic.

We would like to emphasize that while the gapped boundaries in $(2+1)$-dimensional TQFTs have been throughly classified \cite{Kitaev12,Beigi11}, the understanding of the gapped boundaries in higher-dimensional TQFTs is still incomplete. Characterization of condensing anyonic excitations in the color code may provide instructive examples helping with classification of the gapped boundaries in higher-dimensional TQFTs. Namely, different boundaries of various colors in the color code may lead to a rich variety of gapped boundaries in the corresponding toric code models. Moreover, logical action of the transversal $\widetilde{R}_n$ operator on the code space crucially depends on the choice of boundaries in the color code. Thus, one may be able to characterize gapped boundaries by analyzing the logical action of transversal operators, and vice versa.

\section{Transversal gates}
\label{sec:gate}

We have seen that the color code is equivalent to (multiple copies of) the toric code, both in the presence or the absence of boundaries. Our findings hint that there might be non-trivial logical gates from the $d$-th level of the Clifford hierarchy in the $d$-dimensional toric code which admit fault-tolerant implementation. In this section, we show that one can implement by local unitary transformations the logical $d$-qubit control-$Z$ gate on the stack of $d$ copies of the $d$-dimensional toric code with point-like excitations.

\subsection{Transversal $R_{d}$ operator and boundaries}

Let us start with reviewing the transversal implementation of the physical phase gate $R_{n}=\text{diag}(1,e^{2 \pi i / 2^n})$ in the color code \cite{Bombin13,Kubica15}. Consider the topological color code $CC(\mathcal{L})$ on a $d$-dimensional lattice $\mathcal{L}$, which is $(d+1)$-valent and $(d+1)$- colorable. It is known that the graph $G=(V,E)$ of vertices and edges of $\mathcal{L}$ is bipartite, namely the set of vertices $V$ can be split into two subsets, $T$ and $T^{c}$, such that $V=T\sqcup T^{c}$ and vertices in $T$ are connected only to vertices in $T^{c}$, and vice versa. Then, regardless of the lattice $\mathcal{L}$, the following unitary operator preserves the code space
\begin{equation}
\label{eq:Rtransversal}
\widetilde{R_{d}}= \bigotimes_{j \in T} R_{d}(j) \bigotimes_{j \in T^{c}} R_{d}^{-1}(j).
\end{equation}
Here, we adopt a convention that $\widetilde{R_{d}}$ denotes a transversal operator implemented by physical $R_{d}$ gates or their powers. When the lattice $\mathcal{L}$ is $d$-simplex-like (see Section~\ref{sec:boundaries}C and Fig.~\ref{fig_3D_boundary}), then $\widetilde{R_{d}}$ implements the logical $R_{d}$ gate in the code space. For other choices of boundaries, the action of $\widetilde{R_{d}}$ in the code space does not necessarily coincide with the logical $R_{d}$ gate.

For the sake of simplicity, in the rest of this section we shall consider the $d$-dimensional color code supported on a $d$-hypercube-like lattice $\mathcal{L}$ colored with $C_0,\ldots,C_d$  (see Fig.~\ref{fig_squares}(a) and Fig.~\ref{fig_logical_boundary}(a)). In particular, we choose $\mathcal{L}$ to have the opposite boundaries colored in the same color. Namely, we assume that two boundaries perpendicular to the direction $\hat{j}$ have color $C_{j}$, where $j=1,\ldots,d$. One can show that the color code $CC(\mathcal{L})$ encodes $d$ logical qubits. In order to do this, consider the disentangling unitary $U=\bigotimes_{c\in\mathcal{C}_0} U_c$, which is a tensor product of local unitaries supported on $d$-cells of color $C_0$ (see Section~\ref{sec:closed}D and Section~\ref{sec:open}C). Then, $U$ transforms the color code $CC(\mathcal{L})$ into $d$ decoupled copies of the toric code,
\begin{equation}
\label{eq:decouplingoncube}
U [ CC(\mathcal{L}) \otimes \mathcal{S}] U^\dag = \bigotimes_{i=1}^d TC(\mathcal{L}_{i})
\end{equation}
where $\mathcal{S}$ is the stabilizer group of the ancilla qubits and the lattice $\mathcal{L}_{i}$ is derived from $\mathcal{L}$ by shrinking $d$-cells of color $C_i$. Moreover, $\mathcal{L}_{i}$ is a $d$-hypercube-like lattice with two rough boundaries which are perpendicular to the direction $\hat{i}$ and all the other boundaries smooth. Thus, for $i=1,\ldots,d$, the toric code $TC(\mathcal{L}_i)$ encodes one logical qubit, with a string-like logical $Z$ operator in the direction $\hat{i}$, and a $(d-1)$-dimensional membrane-like logical $X$ operator perpendicular to $\hat{i}$.

With the above choice of boundaries, $\widetilde{R_{d}}$ does not implement the logical $R_{d}$ gate in the code space. One verifies this by observing that $\widetilde{R_{d}}^{2}=I$ in the code space of the color code. Rather, we find that $\widetilde{R_{d}}$ implements the logical $d$-qubit control-$Z$ gate on the stack of $d$ copies of the toric code. (Note that a similar observation holds for the color code supported on a hypercubic lattice with periodic boundary conditions). We devote the rest of this section to describe this finding.

\begin{figure}[h!]
\includegraphics[width=0.85\textwidth]{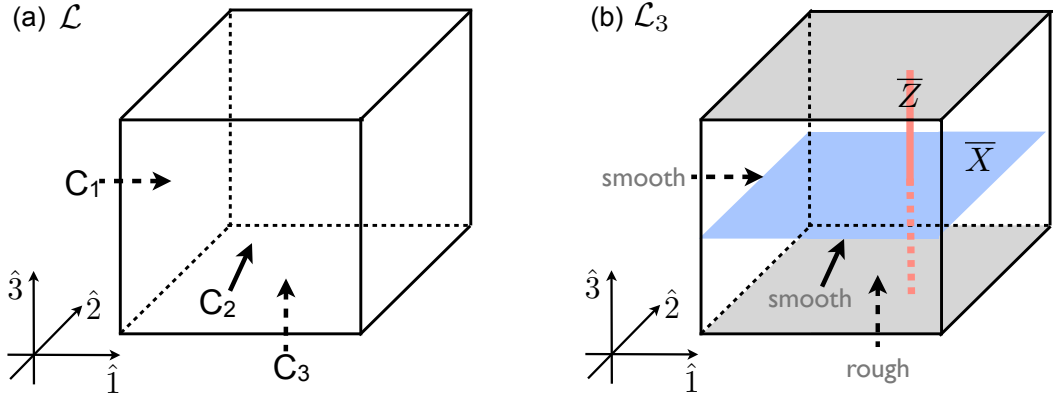}
\caption{(Color online) (a) The color code $CC(\mathcal{L})$ on a three-dimensional cube-like lattice $\mathcal{L}$ with pairs of boundaries perpendicular to the direction $\hat{i}$ colored with $C_i$. (b) The toric code $TC(\mathcal{L}_3)$ on a cube-like lattice $\mathcal{L}_3$ derived from $\mathcal{L}$ by shrinking $3$-cells of color $C_3$. Note that $\mathcal{L}_3$ has two rough boundaries (shaded) and $TC(\mathcal{L}_3)$ encodes one logical qubit with a string-like logical $Z$ operator (red) connecting two opposite rough boundaries and a membrane-like logical $X$ operator (blue).}
\label{fig_logical_boundary} 
\end{figure}

\subsection{Transversal $d$-qubit control-$Z$ gate in the toric code}

We discuss the two-dimensional case first. The topological color code on a square-like lattice $\mathcal{L}$ with four boundaries of color $C_1$ and $C_2$ encodes two logical qubits (see Fig.~\ref{fig_squares}(a)). We label by $\overline{X^{(i)}}$ and $\overline{Z^{(i)}}$ the logical Pauli $X$ and $Z$ operators, which perpendicular or parallel to the direction $\hat{i}$, respectively, for $i=1,2$. Since a unitary operator $R_{2}$ transforms Pauli $X$ into $XZ$, logical operators transform under conjugation by $\widetilde{R_{2}}$ as follows

\begin{equation}
\overline{X^{(1)}}\rightarrow \overline{X^{(1)}}\ \overline{Z^{(2)}},\qquad 
\overline{Z^{(1)}}\rightarrow \overline{Z^{(1)}},\qquad
\overline{X^{(2)}}\rightarrow \overline{Z^{(1)}}\ \overline{X^{(2)}},\qquad 
\overline{Z^{(2)}}\rightarrow \overline{Z^{(2)}}.
\end{equation}

\begin{figure}[h!]
\includegraphics[width=0.85\textwidth]{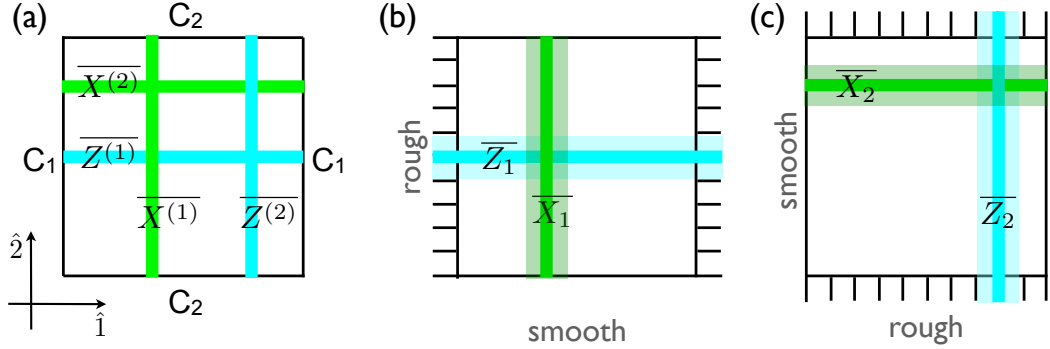}
\caption{(Color online) (a) The topological color code $CC(\mathcal{L})$ on a square-like lattice $\mathcal{L}$ with four boundaries of color $C_1$ and $C_2$ encodes two logical qubits, with logical operators $\overline{X^{(i)}}$ and $\overline{Z^{(i)}}$ for $i=1,2$. The toric code $TC(\mathcal{L}_i)$ (b) for $i=1$ and (c) for $i=2$ derived from $\mathcal{L}$ by shrinking faces of color $C_i$ encodes one logical qubit with logical operators  $\overline{X_i}$ and $\overline{Z_i}$.}
\label{fig_squares} 
\end{figure}

Note that the disentangling unitary $U$ (see Eq.~(\ref{eq:decouplingoncube})) transforming the color code $CC(\mathcal{L})$ into two decoupled copies of the toric code, $TC(\mathcal{L}_1)$ and $TC(\mathcal{L}_2)$, defines an isomorphism between logical operators of the former and the latter (see Fig.~\ref{fig_squares}). Namely,
\begin{equation}
\overline{X^{(1)}} \leftrightarrow  \overline{X_1}\otimes I, \qquad 
\overline{X^{(2)}} \leftrightarrow  I \otimes \overline{X_2}, \qquad 
\overline{Z^{(1)}}\ \leftrightarrow  \overline{Z_1} \otimes I, \qquad 
\overline{Z^{(2)}} \leftrightarrow   I \otimes \overline{Z_2},
\end{equation}
where $\overline{P_1}\otimes\overline{P_2}$ denotes an operator which acts as a logical $P_1$ operator on the first copy $TC(\mathcal{L}_1)$ of the toric code, and as $P_2$ on the second copy $TC(\mathcal{L}_2)$. Thus, one can immediately deduce the effect of $U\widetilde{R_{2}} U^\dag$ on logical operators of $TC(\mathcal{L}_1)$ and $TC(\mathcal{L}_2)$
\begin{equation}
\overline{X_1}\otimes I \rightarrow \overline{X_1}\otimes \overline{Z_2}, \quad
I\otimes \overline{X_2} \rightarrow \overline{Z_1}\otimes \overline{X_2}, \quad
\overline{Z_1}\otimes I \rightarrow\overline{Z_1}\otimes I, \quad
I \otimes \overline{Z_2} \rightarrow I \otimes \overline{Z_2}.
\end{equation}
This implies that the action of $\widetilde{R_{2}}$ in the color code is equivalent (up to the local Clifford unitary $U$) to the logical control-$Z$ gate between two copies of the toric code. 

Let us consider $d$-dimensional case, $d\geq 2$. The $d$-qubit control-$Z$ gate is a generalization of the control-$Z$ gate and is defined in the computational basis as
\begin{align}
\mbox{C}^{\otimes d-1}Z | x_{1},\ldots,x_{d} \rangle= (-1)^{x_{1}\ldots x_{d}}| x_{1},\ldots,x_{d} \rangle.
\end{align}
Note that the action of $\mbox{C}^{\otimes d-1}Z$ does not depend on the choice of control qubits. Moreover, $\mbox{C}^{\otimes d-1}Z$ belongs to the $d$-th level of the Clifford hierarchy but is outside the $(d-1)$-th level, which can be seen from the following relations
\begin{eqnarray}
\mbox{K}[R_n,X] &=& e^{-2\pi i/2^n} R_{n-1} \propto R_{n-1},\\
\mbox{K}\left[\mbox{C}^{\otimes n-1}Z , X\otimes I^{\otimes n-1}\right] &=& I\otimes \mbox{C}^{\otimes n-2}Z
\end{eqnarray}
where the commutator is defined as $\mbox{K}[A,B]=ABA^{\dagger}B^{\dagger}$.

We label logical $X$ and $Z$ operators in the color code by $\overline{X^{(i)}}$ and $\overline{Z^{(i)}}$ for $i=1,\ldots,d$. Namely, $\overline{Z^{(i)}}$ is a string-like logical operator parallel to the direction $\hat{i}$ (i.e. connecting two opposite boundaries of color $C_i$) and $\overline{X^{(i)}}$ is a $(d-1)$-dimensional membrane-like logical operator perpendicular to the direction $\hat{i}$. We define the operator $\widetilde{R_{i}}$ recursively for $i=d-1,\ldots,1$ as follows
\begin{eqnarray}
\widetilde{R_{d-1}} &=& \mbox{K}\left[\ \widetilde{R_{d}}, \overline{X^{(1)}}\ \right],\\
\widetilde{R_{d-2}} &=& \mbox{K}\left[\ \widetilde{R_{d-1}}, \overline{X^{(2)}} \right],\\
&\vdots  \\
\widetilde{R_{1}} &=& \mbox{K}\left[\ \widetilde{R_{2}}, \overline{X^{(d-1)}} \right] = \overline{Z^{(d)}}.
\end{eqnarray}
Note that the above relations hold for any permutation of colors $C_{1},\ldots,C_{d}$. Let $\overline{X_j}$ and $\overline{Z_j}$ be logical $X$ and $Z$ operators in the toric code $TC(\mathcal{L}_{j})$. Then, the following correspondence holds
\begin{equation}
\overline{X^{(j)}} \leftrightarrow \overline{X_j}, \qquad \overline{Z^{(j)}} \leftrightarrow \overline{Z_j}.
\end{equation}
We can verify that using $\widetilde{R_{d}}$ one can implement the logical $d$-qubit control-$Z$ gate on the stack of $d$ copies of the toric code. Namely,
\begin{eqnarray}
I \otimes \overline{\mbox{C}^{\otimes d-2}Z} &\propto& \mbox{K}\left[ \overline{\mbox{C}^{\otimes d-1}Z}, \overline{X_1} \right],\\
I^{\otimes 2} \otimes \overline{\mbox{C}^{\otimes d-3}Z} &\propto& \mbox{K}\left[ I \otimes \overline{\mbox{C}^{\otimes d-2}Z}, \overline{X_2} \right],\\
&\vdots&  \\
I^{\otimes d-1} \otimes \overline{Z} &\propto& \mbox{K}\left[ \overline{\mbox{C}Z}, \overline{X_{d-1}} \right].
\end{eqnarray}
In the above equations proportionality indicates the same action of the operators on the code space. Since the disentangling unitary $U$ is a local unitary transformation, $U\widetilde{R_{d}}U^{\dagger}$ is a local unitary transformation implementing the logical $\mbox{C}^{\otimes d-1}Z$ gate on the stack of $d$ copies of the toric code. We summarize the discussion in this section by the following theorem.

\begin{theorem}[Transversal Implementation]
\label{th:transversal}
Consider a $(d+1)$-colorable and $(d+1)$-valent $d$-hypercube-like lattice $\mathcal{L}$ with pairs of boundaries perpendicular to the direction $\hat{i}$ colored in $C_i$ for $i=1,\ldots,d$. Let $\mathcal{L}_{i}$ be a lattice derived from $\mathcal{L}$ by shrinking all $d$-cells of color $C_i$. Then, the logical $d$-qubit control-$Z$ gate can be implemented on $TC(\mathcal{L}_1),\ldots, TC(\mathcal{L}_d)$ --- the stack of $d$-copies of the toric code, by a local unitary transformation
\begin{equation}
\overline{\mbox{C}^{\otimes d-1}Z} \propto U \widetilde{R_d} U^\dag,
\end{equation}
where $\widetilde{R_d}$ is a transversal $R_d$ gate (see Eq.~(\ref{eq:Rtransversal})) implemented in the color code $CC(\mathcal{L})$ and $U$ is the disentangling unitary transforming $CC(\mathcal{L})$ into $\bigotimes_{i=1}^d TC(\mathcal{L}_i)$.
\end{theorem}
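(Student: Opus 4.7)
The plan is to build the proof on two foundations already established in the paper: the Equivalence Theorem~\ref{th:equivalence} (which provides the disentangling unitary $U$ and the decoupling into toric codes) together with the code-space preservation of $\widetilde{R_d}$ as recorded in Eq.~(\ref{eq:Rtransversal}). The strategy is to characterize the action of $\widetilde{R_d}$ on a complete set of logical operators of $CC(\mathcal{L})$, conjugate by $U$, and identify the resulting action with that of $\overline{\mathrm{C}^{\otimes d-1}Z}$ on the stack of $d$ toric codes.

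First, I would invoke Theorem~\ref{th:equivalence} applied to the hypercube-like lattice $\mathcal{L}$ to obtain the decomposition $U[CC(\mathcal{L})\otimes\mathcal{S}]U^{\dagger} = \bigotimes_{i=1}^{d} TC(\mathcal{L}_i)$. Because opposite boundaries of $\mathcal{L}$ share color by hypothesis, each shrunk lattice $\mathcal{L}_i$ inherits exactly two rough boundaries perpendicular to $\hat{i}$ and smooth boundaries elsewhere, so $TC(\mathcal{L}_i)$ encodes one logical qubit with string-like $\overline{Z_i}$ parallel to $\hat{i}$ and membrane-like $\overline{X_i}$ perpendicular to $\hat{i}$. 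Tracking the local Clifford $U_c$ on each color-$C_0$ $d$-cell, I would verify the logical operator correspondence $\overline{X^{(j)}} \leftrightarrow \overline{X_j}$ and $\overline{Z^{(j)}} \leftrightarrow \overline{Z_j}$ (tensored with identities on the remaining copies), matching supports directly by the geometric shapes of the logical representatives on the two sides.

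Second, I would analyze the action of $\widetilde{R_d}$ on color-code logicals using the Clifford-hierarchy commutator identity $\mathrm{K}[R_d,X]\propto R_{d-1}$. Applied transversally on the membrane-like $\overline{X^{(1)}}$, this gives $\mathrm{K}[\widetilde{R_d},\overline{X^{(1)}}]\propto \widetilde{R_{d-1}}$ restricted to the support of $\overline{X^{(1)}}$, which itself is an admissible transversal operator on the restricted geometry. Iterating the reduction through $\overline{X^{(2)}},\ldots,\overline{X^{(d-1)}}$ terminates at $\widetilde{R_1} = \overline{Z^{(d)}}$, and this chain of relations, combined with the diagonality of $\widetilde{R_d}$ in the computational basis (which forces every $\overline{Z^{(j)}}$ to be preserved), uniquely determines the conjugation action of $\widetilde{R_d}$ on the logical Pauli group of $CC(\mathcal{L})$ up to a global phase.

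Finally, applying the logical isomorphism from the first step, I would transport these rules to the stack $\bigotimes_i TC(\mathcal{L}_i)$. The logical $\mathrm{C}^{\otimes d-1}Z$ gate is characterized by the analogous recursion $\mathrm{K}[\mathrm{C}^{\otimes d-1}Z,X\otimes I^{\otimes d-1}] = I\otimes \mathrm{C}^{\otimes d-2}Z$ terminating at a single $\overline{Z}$ after $d-1$ reductions; since both $U\widetilde{R_d}U^{\dagger}$ and $\overline{\mathrm{C}^{\otimes d-1}Z}$ satisfy identical nested commutator relations with $\overline{X_1},\ldots,\overline{X_{d-1}}$ and both act trivially on every $\overline{Z_j}$, they agree on the code space up to a global phase. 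The main obstacle I anticipate is bookkeeping across the recursion: one must ensure that the intermediate $\widetilde{R_{d-k}}$ operators obtained from successive commutators remain supported so that the correspondence $\overline{X^{(j)}} \leftrightarrow \overline{X_j}$ continues to pin down the logical action on the toric-code side, rather than allowing stabilizer ambiguities of $CC(\mathcal{L})$ to obscure the identification with $\overline{\mathrm{C}^{\otimes d-1}Z}$.
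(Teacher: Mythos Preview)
Your proposal is correct and follows essentially the same route as the paper: establish the logical correspondence $\overline{X^{(j)}}\leftrightarrow\overline{X_j}$, $\overline{Z^{(j)}}\leftrightarrow\overline{Z_j}$ via the disentangling unitary, run the nested-commutator recursion $\widetilde{R_{d-1}}=\mathrm{K}[\widetilde{R_d},\overline{X^{(1)}}]$ down to $\overline{Z^{(d)}}$, and match it against the parallel recursion $\mathrm{K}[\mathrm{C}^{\otimes d-1}Z,X\otimes I^{\otimes d-1}]=I\otimes \mathrm{C}^{\otimes d-2}Z$ on the toric-code side. Your explicit remark that diagonality of $\widetilde{R_d}$ forces every $\overline{Z^{(j)}}$ to be fixed, and hence that the commutator chain pins down the logical action up to a global phase, is a useful clarification that the paper leaves implicit.
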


Observe that the implementation of $\overline{\mbox{C}^{\otimes d-1}Z}$ seems to require a set of $d$ lattices $\{ \mathcal{L}_{i} \}$ which satisfy certain constraints, i.e. are derived from $\mathcal{L}$ described in the (Transversal Implementation) Theorem~\ref{th:transversal}. In general, it may not be clear whether there exists a local unitary transformation implementing $\overline{\mbox{C}^{\otimes d-1}Z}$ in $d$ copies of the toric code. Yet, one can freely deform the lattices on which the toric code is supported by local operations. Specifically, consider the toric code $TC(\mathcal{L})$ on a $d$-dimensional lattice $\mathcal{L}$. We claim that one can transform $TC(\mathcal{L})$ into $TC(\mathcal{L}')$ by local unitary transformations (and adding or removing ancilla qubits), where $\mathcal{L}'$ is a lattice derived from the original lattice $\mathcal{L}$ by adding or removing edges. Such local deformations of the lattices allow us to obtain $d$ copies of the toric code with $\overline{\mbox{C}^{\otimes d-1}Z}$ implementable by local unitary transformations as long as the boundaries of $d$ copies of the toric code are appropriately arranged. In particular, this implies that three copies of the three-dimensional toric code admit fault-tolerant implementation of a non-Clifford logical gate, which saturates the bound by Bravyi and K\"onig in three dimensions.

\section*{Acknowledgements}

We would like to thank John Preskill, Olivier Landon-Cardinal and Dan Browne for helpful discussions. We acknowledge funding provided by the Institute for Quantum Information and Matter, an NSF Physics Frontiers Center with support of the Gordon and Betty Moore Foundation (Grants No. PHY-0803371 and PHY-1125565). BY is supported by the David and Ellen Lee Postdoctoral fellowship.

\vfill\eject

\section*{Appendix}

Here, we briefly revisit the equivalence of the color code and (multiple decoupled copies of) the toric code in $d$ dimensions without the restriction of point-like excitations. In particular, we focus on the construction of lattices supporting the decoupled copies of the toric code, which can be succinctly described using some notions from algebraic topology. The discussion in the Appendix is presented in the language of the dual lattice unless mention otherwise.

\subsection{Basic definitions of combinatorial geometry}

We start with some basic notions in combinatorial geometry. A $d$-simplex $\delta$ is a convex hull of $d+1$ affinely independent vertices $v_0,v_1,\ldots, v_d$, namely 
\begin{equation}
\delta=\left\{\sum_{i=0}^d t_i v_i \right|\left. 0\leq t_i  \wedge\sum_{i=0}^d t_i = 1\right\}.
\end{equation}
There is a combinatorial definition of a simplex, which we adopt for the rest of the discussion. Namely, a $d$-simplex $\delta$ is the power set of the set of vertices $V=\{v_0,\ldots, v_d \}$ spanning it, $\delta=\mathcal{P}(V)$. A subset $W\subset V$ of size $k+1 \leq d+1$ spans a $k$-simplex $\sigma = \mathcal{P}(W)$, and we call $\sigma$ a $k$-face of $\delta$. We denote the set of all $k$-faces of $\delta$ by $\face{k}{\delta}$. Note that $\face{k}{\delta}=\mathcal{P}_k(V)$, where $\mathcal{P}_k(V)$ denotes the set of subsets of $V$ of cardinality $k$. 

Let $V=\bigsqcup_{i=1}^k W_i$ be a decomposition of the set of vertices $V$ into the union of $k$ disjoint sets $W_1,\ldots,W_k$. Let $\delta=\mathcal{P}(V)$ and $\sigma_i=\mathcal{P}(W_i)$. Then, we can represent $\delta$ as a Cartesian product of its faces $\sigma_1,\ldots,\sigma_k$, namely
\begin{equation}
\delta = \sigma_1 \times \ldots \times \sigma_k.
\end{equation}

We say that $\mathcal{L}$ is a simplicial $d$-complex if it is a set of simplices satisfying the following conditions
\begin{itemize}
\item every face of a simplex in $\mathcal{L}$ is also in $\mathcal{L}$,
\item the intersection of two simplices in $\mathcal{L}$ is a face of both of them,
\item the dimension of the largest simplex in $\mathcal{L}$ is $d$,
\end{itemize}
If in addition
\begin{itemize}
\item for every $k<d$, every $k$-simplex in $\mathcal{L}$ is a face of a $d$-simplex in $\mathcal{L}$,
\end{itemize}
then $\mathcal{L}$ is homogeneous. By $\face k{\mathcal{L}}$ we denote the set of all $k$-simplices belonging to $\mathcal{L}$. An $n$-skeleton of $\mathcal{L}$, denoted by $\mathrm{skel}_n(\mathcal{L})$, is a collection of all $k$-faces of $\mathcal{L}$ for all $k\leq n$, namely $\mathrm{skel}_n(\mathcal{L}) = \bigsqcup_{k=0}^n \face k{\mathcal{L}}$.

We might generalize the notion of a simplex to a cell. Namely, a (closed) $d$-cell $\delta$ is the image of a $d$-dimensional (closed) ball $B^d$ under an attaching map. Similarly to the combinatorial definition of a simplex, we want to think about $\delta$ as a collection of all its $k$-faces, for all $k=0,1,\ldots, d$. We can define a cell complex\footnote{For a rigorous definition of a CW complex, see Ref.~\cite{Hatcher2002}.} in an analogous way to a simplicial complex, allowing for the faces to be cells.

From now on, we only consider complexes containing finitely many simplices (cells). Although a homogeneous simplicial (cell) $d$-complex $\mathcal{L}$ is defined as a collection of simplices (cells), by the same symbol we also denote the union of these simplices (cells) as a topological space. In general, $\mathcal{L}$ is a manifold with a boundary embedded in real space, but for the rest of the discussion we assume $\mathcal{L}$ has no boundary. We also assume $\mathcal{L}$ is a homogeneous simplicial $d$-complex unless stated otherwise.

The $n$-star of $\delta\in\face k{\mathcal{L}}$, denoted by $\star n\delta$, is the set of all $n$-simplices in $\mathcal{L}$ which contain $\delta$ as a face, namely
\begin{equation}
\star n\delta =\{ \sigma\in\face n{\mathcal{L}} \ |\ \sigma\supset\delta  \}.
\end{equation}
Note that $\sigma\in\star n\delta \iff \delta\in\face k\sigma$.

The $n$-link of $\delta\in\face k{\mathcal{L}}$, denoted by $\link n\delta$, is the set of all $n$-simplices in $\mathcal{L}$ which are the $n$-faces of $d$-simplices containing $\delta$, but do not intersect with $\delta$, namely
\begin{equation}
\link n\delta =\{ \sigma\in\face n{\mathcal{L}}\ |\ \sigma\subset\tau\in\star d\delta \wedge \sigma\cap\delta=\emptyset \}.
\end{equation}

Observe that for a $k$-simplex $\delta$ in $\mathcal{L}$ there is a one-to-one mapping between the elements of $\link {{d-k-1}}\delta$ and $\star d\delta$, namely
\begin{equation}
\sigma\in\link{{d-k-1}}\delta \xleftrightarrow{\delta\times\sigma = \tau} \tau\in\star d\delta .
\end{equation}

We say that $\mathcal{L}$ is $(d+1)$-colorable if there exists a function 
\begin{equation}
\textrm{color}:\Delta_0 (\mathcal{L})\rightarrow\mathbb{Z}_{d+1},
\end{equation}
where $\mathbb{Z}_{d+1}=\{0,1,\ldots, d \}$ is the set of $d+1$ colors, and two vertices connected by an edge have different colors. We define $\col{\delta}$ to be the set of colors assigned to vertices of a simplex $\delta$, namely
\begin{equation}
\col{\delta}=\bigsqcup_{v\in\Delta_0 (\delta)}\col{v}.
\end{equation}

Now, we are ready to define the color code and the toric code in $d$ dimensions. The color code is a stabilizer code with the stabilizer group $CC_k(\mathcal{L})$ defined on a $(d+1)$-colorable homogeneous simplicial $d$-complex $\mathcal{L}$, where $k\in\{0,\ldots,d-2\}$. One qubit is placed at each and every $d$-simplex in $\mathcal{L}$, and $X$- and $Z$-type stabilizer generators are associated with $(d-k-2)$- and $k$-simplices as follows
\begin{eqnarray}
\forall \delta\in\face {{d-k-2}}{\mathcal{L}}: X(\delta)=\bigotimes_{\sigma\in \star d\delta} X(\sigma),\\
\forall \delta\in\face {k}{\mathcal{L}}: Z(\delta)=\bigotimes_{\sigma\in \star d\delta} Z(\sigma),
\end{eqnarray}
where $X(\sigma)$ is the Pauli $X$ operator on a qubit placed at $\sigma$; similarly $Z(\sigma)$.

The toric code is a stabilizer code with the stabilizer group $TC_k(\mathcal{L})$ defined on a homogeneous cell $d$-complex $\mathcal{L}$, where $k\in\{1,\ldots,d-1\}$. One qubit is placed at each and every $k$-cell in $\mathcal{L}$, and $X$- and $Z$-type stabilizer generators are associated with $(k+1)$- and $(k-1)$-cells in the following way
\begin{eqnarray}
\forall \delta\in\face {{k+1}}{\mathcal{L}}: X(\delta)=\bigotimes_{\sigma\in \face k\delta} X(\sigma),\\
\forall \delta\in\face {{k-1}}{\mathcal{L}}: Z(\delta)=\bigotimes_{\sigma\in \star k\delta} Z(\sigma).
\end{eqnarray}

\subsection{Equivalence revisited}

Let us revisit the (Equivalence) Theorem~\ref{th:equivalence}. Note that for the sake of simplicity we assumed earlier that the color code on a $d$-dimensional lattice $\mathcal{L}$ has point-like excitations, which corresponds to the $CC_{d-2}(\mathcal{L})$ case. Now we state the equivalence between the color code and the toric code in full generality in the following theorem.

\begin{theorem}
\label{th:general}
Let the topological color code $CC_k ({\mathcal{L}})$ be defined on a $(d+1)$-colorable homogeneous simplicial $d$-complex $\mathcal{L}$ without boundary, where $0\leq k\leq d-2$. Then, there exists a local Clifford unitary $U$ such that
\begin{equation}
U\left[ CC_k (\mathcal{L}) \otimes \mathcal{S}_1\right ]U^\dag = \bigotimes_{N\in\mathcal{P}_{d-1-k}(\mathbb{Z}_{d})} TC_{k+1}(\mathcal{L}_N) \otimes \mathcal{S}_2
\end{equation}
where $\mathcal{S}_1$ and $\mathcal{S}_2$ represent the stabilizer groups of decoupled ancilla qubits, and $TC_{k+1}(\mathcal{L}_N)$ is a copy of the toric code defined on a homogeneous cell $(k+2)$-complex $\mathcal{L}_N$ obtained from $\mathcal{L}$ by removing all simplices with faces of colors in $N$. Moreover, one can choose the disentangling unitary $U$ to be of the form
\begin{equation}
\label{eq:appendix}
U=\bigotimes_{\substack{\delta\in\face 0{\mathcal{L}}\\ \col{\delta} = \{ d \} }}  U(\delta),
\end{equation}
where $U(\delta)$ is a Clifford unitary acting only on qubits placed on $d$-simplices in $\star d{\delta}$, and some ancilla qubits associated with $\delta$.
\end{theorem}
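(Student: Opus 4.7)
The plan is to mirror the three-step strategy of Theorem~\ref{th:equivalence} exactly, upgrading each step from $k = d-2$ to arbitrary $k \in \{0, \ldots, d-2\}$. Since $U$ is a tensor product over color-$d$ vertices in Eq.~(\ref{eq:appendix}), the construction reduces to producing a single local Clifford $U(\delta)$ for each color-$d$ vertex $\delta$ whose action is supported on the qubits placed on $d$-simplices in ${\mathrm{St}_d(\delta)}$ together with a local block of ancillas. The goal on this local region is to realize an isomorphism between the overlap group of $CC_k(\mathcal{L})$ (plus a local piece of $\mathcal{S}_1$) and the overlap group of $\bigotimes_{N} TC_{k+1}(\mathcal{L}_N)$ (plus a local piece of $\mathcal{S}_2$) with matching commutation relations.

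First, I would identify the two overlap groups on ${\mathrm{St}_d(\delta)}$. The generators of $\mathcal{O}_{CC}^{\delta}$ are the restrictions of $X$-type stabilizers labelled by $(d-k-2)$-simplices and $Z$-type stabilizers labelled by $k$-simplices whose $d$-stars meet $\delta$. For the toric side, each $N \in \mathcal{P}_{d-1-k}(\mathbb{Z}_d)$ picks out a shrunk complex $\mathcal{L}_N$; the qubits of $TC_{k+1}(\mathcal{L}_N)$ touching the image of ${\mathrm{St}_d(\delta)}$ correspond to the $(k+1)$-faces of the $d$-simplices in ${\mathrm{St}_d(\delta)}$ whose color set is the complement of $N$ in $\{0,\ldots,d\}$. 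Counting these faces over all $N$ and comparing with the cardinality of ${\mathrm{St}_d(\delta)}$ fixes the number of ancillas needed in $\mathcal{S}_1$ and $\mathcal{S}_2$ to equalize Hilbert-space dimensions.

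Second, the existence of $U(\delta)$ reduces, via the Isomorphic Groups Lemma~\ref{lemma:isomorphicgroups} and the Clifford Transformation Lemma~\ref{lemma:Cliffordunitary}, to two rank equalities: the two overlap groups must have the same total number of independent generators and centers of the same rank. Independent generators can be chosen in the style of Eqs.~(\ref{eq:list3})--(\ref{eq:list4}): a spanning tree of the appropriate $1$-skeleton gives an independent set of $Z$-type $k$-simplex generators, while independent $X$-type $(d-k-2)$-simplex generators are obtained modulo the global product relations that assert $\prod_{\delta'} X(\delta') = I$ over each color class. With explicit $\{g_i\}$ on the color-code side and corresponding $\{h_i\}$ on the toric-code side, direct verification of commutation relations lets Lemma~\ref{lemma:Cliffordunitary} produce $U(\delta)$. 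Globally, each color code stabilizer intersects ${\mathrm{St}_d(\delta)}$ for only a bounded number of color-$d$ vertices, and the local mapping rules force each such stabilizer to be pushed onto exactly one $\mathcal{L}_N$, so that the images on distinct $N$ decouple and the product $\bigotimes_{\delta} U(\delta)$ acts as claimed.

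The hard part will be the uniform $k$-dependent bookkeeping for the central rank. In the main body we handled only $k = d-2$, where the center had a transparent description in terms of $X$-volume and $Z$-face stabilizers, closed by a single invocation of the Euler characteristic $\chi = V - E + F$ of the triangulated $2$-sphere $\partial c$. For general $k$, the center is generated by higher-dimensional analogues of these operators, and equality of its rank on the two sides must be established using the full $f$-vector of ${\mathrm{St}_d(\delta)}$ split by colors, or equivalently the Betti numbers of the color-filtered sub-skeleta of ${\mathrm{St}_d(\delta)}$. The crucial geometric input is that ${\mathrm{St}_d(\delta)}$ is combinatorially a $d$-ball, so its sub-complexes obtained by deleting proper sets of color classes have controlled homology via the nerve of the coloring; this is what ultimately forces the central ranks on both sides to agree, and it is the step I expect to require the most care, best carried out in the chain-complex language of algebraic topology implicit in the definitions of $CC_k$ and $TC_{k+1}$.
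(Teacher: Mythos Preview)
Your proposal is correct and follows essentially the same route as the paper's Appendix: localize to color-$d$ vertices, compare the overlap groups $\mathcal{O}_{CC}\otimes\mathcal{S}$ and $\mathcal{O}_{TC}$, and invoke Lemmas~\ref{lemma:isomorphicgroups} and~\ref{lemma:Cliffordunitary} once the two rank equalities are established. The paper carries out the ``hard part'' you flag not via nerves of color-filtered sub-skeleta but by importing explicit identities from Bomb\'in--Martin-Delgado (their Appendix~D): it writes $G$ and $G\circ Z$ for both groups in terms of the face counts $C_i$ of the $(d-1)$-sphere $\partial c$, then shows the two differences vanish as the $s=k$ and $s=k+1$ instances of a single linear relation among the $C_i$'s (Eq.~(\ref{eq:conditionsonC}) here). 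Your spanning-tree remark is specific to the $k=d-2$ case and does not directly generalize; the paper bypasses that by counting relations with the formula $I(d-1,s)$ rather than exhibiting explicit independent generators.
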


Note that after the disentangling one obtains ${d\choose d-1-k}={d\choose k+1}$ decoupled copies of the toric code, enumerated by different choices of the subset $N$ of $d-1-k$ colors from $\mathbb{Z}_d$. Moreover, one might need to locally add ancilla qubits either to the color code, or the toric code depending on the simplicial $d$-complex $\mathcal{L}$. Clearly, the (Equivalence) Theorem~\ref{th:equivalence} is a special case of the Theorem~\ref{th:general}, with $k=d-2$, $\mathcal{S}_2 = \emptyset$ and $C_0 = d$. The rest of the Appendix is devoted to the construction of the cell complexes supporting decoupled copies of the toric code and the explanation of how to find a local Clifford unitary $U$.

To obtain $\mathcal{L}_N$ from $\mathcal{L}$, where $N\in\mathcal{P}_{d-1-k}(\mathbb{Z}_d)$, one follows the following procedure.
\begin{enumerate}
\item  Take the $(k+1)$-skeleton $\mathrm{skel}_{k+1}(\mathcal{L})$ of $\mathcal{L}$ and construct a new $(k+1)$-skeleton, $\text{skel}'_{k+1}(\mathcal{L})$, by removing from $\mathrm{skel}_{k+1}(\mathcal{L})$ all simplices with faces of colors in $N$, namely
\begin{equation}
\text{skel}'_{k+1}(\mathcal{L}) = \{\sigma\in\text{skel}_{k+1}(\mathcal{L}) | \col{\sigma}\subset \mathbb{Z}_{d+1}\setminus N \}.
\end{equation}
\item For every $\tau\in\face{{d-2-k}}{\mathcal{L}}$, such that $\col{\tau}= N$, attach a $(k+2)$-cell to $\link{{k+1}}{\tau}\subset\text{skel}'_{k+1}(\mathcal{L})$. Resulting $(k+2)$-skeleton is $\mathcal{L}_N$.
\end{enumerate}
Note that in Step 2 we used a fact that $\link {{k+1}}{\tau}$ is homomorphic to a $(k+1)$-sphere, and thus we can attach a $(k+2)$-ball to $\link {{k+1}}{\tau}$ (see Ref.~\cite{Glaser_Text} for a proof and an illustrative discussion on combinatorial manifolds).

The disentangling unitary $U$ in Eq.~(\ref{eq:appendix}) has a tensor product structure. Thus, let us have a closer look at one of its constituents, $U(\delta)$, where $\delta$ is a $0$-simplex in $\mathcal{L}$ of color $\{d\}$. Let $U(\delta)$ to be a Clifford unitary transforming the Hilbert space of qubits placed on $d$-simplices in $\star d{\delta}$ and $A = |\star{{k+1}}{\delta}| - |\star d{\delta}|$ ancilla qubits\footnote{If $A<0$, then $U(\delta)$ is a map between $\mathcal{H}(\star d{\delta})$ and $\mathcal{H}(\star{{k+1}}{\delta})\otimes \mathcal{H}_{ancilla}$. }, $\mathcal{H}(\star d{\delta})\otimes \mathcal{H}_{ancilla}$, into the Hilbert space $\mathcal{H}(\star{{k+1}}{\delta})$ of qubits placed on $(k+1)$-simplices in $\star {{k+1}}{\delta}$. Let $\mathcal{O}_{CC}$ be the overlap group of the color code $CC_k(\mathcal{L})$ on the set of qubits $\star d{\delta}$, and $\mathcal{S}$ be the stabilizer group generated by single qubit Pauli $Z$ operators on the ancilla qubits, namely
\begin{eqnarray}
\mathcal{O}_{CC} &=& \left\langle \bigotimes_{\alpha\in\star d{\sigma}} X(\alpha), \bigotimes_{\alpha\in\star d{\tau}} Z(\alpha) \Bigg\vert\ \forall \sigma\in\star{{d-1-k}}{\delta}, \tau\in\star{{k+1}}{\delta} \right\rangle,\\
\mathcal{S} &=& \left\langle Z_i \vert\ \forall i\in\{1,\ldots,A\} \right\rangle.
\end{eqnarray}
Let $\mathcal{O}_{TC}$ a group of operators acting on qubits placed on $\star{{k+1}}{\delta}$ defined as follows
\begin{equation}
\mathcal{O}_{TC} = \left\langle \bigotimes_{\alpha\in \link{{k+1}}{\sigma} \cap \star{{k+1}}{\delta}} X(\alpha), Z(\tau) \Bigg\vert\ \forall \sigma\in\link{{d-2-k}}{\delta}, \tau\in\star{{k+1}}{\delta} \right\rangle,
\end{equation}
We require that $U(\delta)$ maps the generators of $\mathcal{O}_{CC}\otimes\mathcal{S}$ into the generators of $\mathcal{O}_{TC}$ according to the following rules
\begin{eqnarray}
\forall \sigma\in\link{{d-2-k}}{\delta}:\quad g_i =& \left( \bigotimes_{\alpha\in\star d{\sigma\times\delta}} X(\alpha) \right) &\rightarrow\quad h_i = \bigotimes_{\alpha\in \link{{k+1}}{\sigma} \cap \star{{k+1}}{\delta}} X(\alpha),\\
\forall \tau\in\star{{k+1}}{\delta}:\quad g'_i =& \left( \bigotimes_{\alpha\in\star d{\tau}} Z(\alpha) \right) &\rightarrow\quad h'_i = Z(\tau),\\
 i\in\{1,\ldots,A\}:\quad g''_i =& Z_i &\rightarrow\quad h''_i\in Z(\mathcal{O}_{TC}).
\end{eqnarray}
where the parenthesis indicate that the mapping holds up to multiplication by elements of the center $Z(\mathcal{O}_{CC}\otimes\mathcal{S})$ and we choose $\{ h_i,h'_i,h''_i \}$ to be independent. Note that we had to add $A= |\star{{k+1}}{\delta}| - |\star d{\delta}|$ ancilla qubits to guarantee that $\mathcal{O}_{CC}\otimes\mathcal{S},\mathcal{O}_{TC}\in\mathbf{Pauli}(n= |\star{{k+1}}{\delta}|)$.  One can check that $\{g_i,g'_i,g''_i \}$ and $\{h_i,h'_i,h''_i \}$ have the same commutation relations. The existence of the unitary $U(\delta)$ follows from the (Clifford Transformation) Lemma~\ref{lemma:Cliffordunitary}, given $\mathcal{O}_{CC}\otimes\mathcal{S}$ and $\mathcal{O}_{TC}$ are isomorphic. We will show this fact invoking the (Isomorphic Groups) Lemma~\ref{lemma:isomorphicgroups}.

We want to verify that $G(\mathcal{O}_{CC}\otimes\mathcal{S}) = G(\mathcal{O}_{TC})$ and $G(Z(\mathcal{O}_{CC}\otimes\mathcal{S})) = G(Z(\mathcal{O}_{TC}))$. First note that the elements of the center $Z(\mathcal{O}_{TC})$ are generated by only $Z$-type operators which are derived from $k$-cells, namely
\begin{equation}
Z(\mathcal{O}_{TC}) = \left\langle \bigotimes_{\substack{\tau\in\star{{k+1}}{\sigma}\\\col{\tau}=\col{\sigma}\sqcup \{ n_1\}}}
Z(\tau)\Bigg\vert\ \forall \sigma\in\star k{\delta}, n_1 \in\mathbb{Z}_d\setminus\col{\sigma}\right\rangle.
\end{equation}
Since for any $\sigma\in\star k{\delta}$ we can choose $n_1 \in\mathbb{Z}_d\setminus\col{\sigma}$ in ${d-k \choose 1}$ ways, then there are ${d-k\choose 1} |\star k{\delta}|$ generators of $Z(\mathcal{O}_{TC})$. Note that not all of them are independent. Rather, they have to satisfy certain relations, which we call $\mathcal{R}_1$, namely
\begin{eqnarray}
\forall \rho_1 \in\star {{k-1}}{\delta}, \{n_1,n_2\}\subset \mathbb{Z}_d\setminus\col{\rho_1}:\\
\prod_{\substack{\sigma\in\star k{\rho_1}\\ \col{\sigma} \subset \col{\rho_1}\sqcup \{ n_1,n_2\}}} 
\left(\bigotimes_{\substack{\tau\in\star{{k+1}}{\sigma}\\\col{\tau}=\col{\rho_1}\sqcup \{ n_1,n_2\}}}
 Z(\tau)\right)= I.
\end{eqnarray}
Since for any $\rho_1 \in\star {{k-1}}{\delta}$ we can choose $\{n_1,n_2\} \subset\mathbb{Z}_d\setminus\col{\rho_1}$ in ${d-k+1 \choose 2}$ ways, then there are ${d-k+1\choose 2} |\star {{k-1}}{\delta}|$ relations $\mathcal{R}_1$. Note that not all relations $\mathcal{R}_1$ are independent. They have to satisfy ${d-k+2\choose 3} |\star {{k-2}}{\delta}|$ relations $\mathcal{R}_2$ obtained for any choice of $\rho_2\in\star {{k-2}}{\delta}$ and $\{n_1,n_2,n_3\} \subset\mathbb{Z}_d\setminus\col{\rho_2}$. But relations $\mathcal{R}_2$ are not independent, and so on. Proper counting of independent relations between generators of $Z(\mathcal{O}_{TC})$ gives the following alternating sum $|\mathcal{R}_1|-|\mathcal{R}_2|+|\mathcal{R}_3|-\ldots +(-1)^{k-1}|\mathcal{R}_k|$. Once the constraints have been properly accounted for, since $G(Z(\mathcal{O}_{TC}))$ is equal to the number of generators minus the number of independent relations between them, then we obtain
\begin{equation}
G(Z(\mathcal{O}_{TC})) = {d-k\choose 1} |\star k{\delta}| - {d -k +1\choose 2}  |\star {{k-1}}{\delta}| + \ldots + (-1)^k {d\choose k+1}|\star 0{\delta}|.
\label{eq:gensofcentertoric}
\end{equation}
Using the fact that the toric code on an $n$-sphere does not encode logical qubits, we obtain that the number of independent $X$-type generators of $\mathcal{O}_{TC}$ is equal to $|\star{{k+1}}{\delta}| - G(Z(\mathcal{O}_{TC}))$. Thus, including $|\star{{k+1}}{\delta}|$ independent $Z$-type generators,
\begin{equation}
G(\mathcal{O}_{TC}) = 2|\star{{k+1}}{\delta}| - G(Z(\mathcal{O}_{TC})).
\label{eq:gensoftoric}
\end{equation}

To analyze the number of independent generators of $\mathcal{O}_{CC}\otimes\mathcal{S}$ and its center $Z(\mathcal{O}_{CC}\otimes\mathcal{S})$, we use results from the Appendix~D in Ref.~\cite{Bombin2007}. First, let us rephrase our problem in the language of the primal lattice. Note that a $0$-simplex $\delta$ corresponds to a $d$-cell $c$, qubits are placed on vertices of $c$, and $X$- and $Z$-type stabilizers are supported on qubits on  vertices of $(k+2)$- and $(d-k)$-faces of $c$. Let $\partial c$ be the boundary of $c$, which can be thought of as a $d$-colorable and $d$-valent homogeneous cell $(d-1)$-complex. Let us denote by $C_i$ the number of $i$-faces of $c$, where $i=0,\ldots,d$. Clearly, $C_i = |\star{{d-i}}{\delta}|$. The overlap group of the color code on the qubits of $c$ is thus generated by $X$- and $Z$-type operators on $(k+1)$- and $(d-k-1)$-faces of $c$. Note that $(k+1)$- and $(d-k-1)$-faces of $c$ can be thought of as faces of $\partial c$, and thus number of independent generators of $\mathcal{O}_{CC}\otimes\mathcal{S}$  is
\begin{equation}
G(\mathcal{O}_{CC}\otimes\mathcal{S}) = C_{k+1} - I(d-1,k+1) + C_{d-k-1} -I(d-1,d-k-1) + C_{d-k-1} - C_0,
\end{equation}
where we include $A=C_{d-k-1} - C_0$ single qubit Pauli $Z$ operators on the ancilla qubits. By $I(d-1,i)$ we denote the number of independent relations between operators on $i$-cells of $\partial c$. In particular (see Eq.~(D14) in Ref.~\cite{Bombin2007}),
\begin{eqnarray}
\label{eq:definitionI}
I(d-1,s) &=& {d-1 \choose s-1} \sum_{i=0}^{d-1-s} (-1)^i h_{s+i} + \sum_{i=0}^{d-s-2} {s+i \choose s-1}(-1)^i C_{s+i+1}\\
&=& {d-1 \choose s-1} (-1)^{d-1-s}+ \sum_{i=0}^{d-s-2} {s+i \choose i+1}(-1)^i C_{s+i+1},
\end{eqnarray}
where $h_i$ is the $i$-th Betti number of $\partial c$. Since $\partial c$ is homomorphic to a $(d-1)$-sphere, then $h_i=1$ if $i=0,d-1$; otherwise $h_i=0$. The center $Z(\mathcal{O}_{CC})$ is generated by $X$- and $Z$-type operators on $(k+2)$- and $(d-k)$-faces of $c$, and single-qubit Pauli $Z$ operators on ancilla qubits. Thus\footnote{If $k=0,d-2$, then we set $I(d-1,d)=0$.},
\begin{equation}
G(Z(\mathcal{O}_{CC}\otimes\mathcal{S})) = C_{k+2} - I(d-1,k+2) + C_{d-k} -I(d-1,d-k) + C_{d-k-1} - C_0.
\end{equation}
We can express Eqs.~(\ref{eq:gensofcentertoric})~and~(\ref{eq:gensoftoric}) is terms of $C_i$'s, namely
\begin{eqnarray}
G(Z(\mathcal{O}_{TC})) &=& {d-k\choose 1} C_{d-k} - {d-k +1\choose 2}  C_{d-k+1} + \ldots + (-1)^k {d\choose k+1} C_d\\
&=& \sum_{i=0}^k (-1)^i {d-k+i\choose i+1} C_{d-k+i},\\
G(\mathcal{O}_{TC}) &=& 2C_{d-k-1}-G(Z(\mathcal{O}_{TC})).
\end{eqnarray}
There are many relations between the number of $i$-cells of $\partial c$, which is a $d$-colorable and $d$-valent homogeneous cell $(d-1)$-complex homomorphic to a $(d-1)$-sphere. In particular, the following identities hold
\begin{equation}
\label{eq:conditionsonC}
-{d-1 \choose s}\chi + (-1)^s C_0 + \sum_{i=0}^{s-1} (-1)^i {d-2-i \choose d-1-s} C_{d-1-i} + \sum_{i=s+1}^{d-1} (-1)^i {i-1 \choose s} C_i = 0,
\end{equation}
for any $s=0,\ldots,d-1$, where $\chi = 1+(-1)^{d-1}$ (see Eq.~(D16) in Ref.~\cite{Bombin2007}). One can straightforwardly verify that $G(\mathcal{O}_{CC}\otimes\mathcal{S}) - G(\mathcal{O}_{TC})=0$ and $G(Z(\mathcal{O}_{CC}\otimes\mathcal{S})) - G(Z(\mathcal{O}_{TC})) = 0$, since they are obtained from Eq.~(\ref{eq:conditionsonC}) by setting $s=k$ and $s=k+1$, respectively. This finishes the proof that $\mathcal{O}_{CC}\otimes\mathcal{S}$ and $\mathcal{O}_{TC}$ are isomorphic.

\end{document}